\documentclass [11pt]{article}

\usepackage{amsmath, amsthm}
\usepackage{amssymb}
\usepackage{graphicx}
\usepackage{fullpage}

\topmargin=-0.1cm

\oddsidemargin=-0.45cm

\evensidemargin=-0.45cm

\textwidth=17.4cm

\textheight=22.4cm

\newtheorem{fact}{Fact}[section]

\newtheorem{defi}{Definition}[section]
\newtheorem{exam}{Example}[section]
\newtheorem{lemma}{Lemma}[section]
\newtheorem{prop}{Proposition}[section]
\newtheorem{theorem}{Theorem}[section]

\newcommand{\algb}{{\sc Alg-Existence}}
\newcommand{\algmin}{{\sc Alg-Min-Equilibrium}}
\newcommand{\increaseprice}{{\sc Price-Increment}}
\newcommand{\algmax}{{\sc Alg-Max-Equilibrium}}
\newcommand{\ie}{{i.e.}}

\setlength{\baselineskip}{.5cm}


\title{Competitive Equilibria in Matching Markets with Budgets}\vspace{0.3in}

\author{
Ning Chen\footnote{Division of Mathematical Sciences, Nanyang Technological
University, Singapore. Email: {\sf ningc@ntu.edu.sg.}} \and Xiaotie
Deng\footnote{Department of Computer Science, City University of
Hong Kong, Hong Kong. Email: {\sf csdeng@cityu.edu.hk.}}
\and Arpita Ghosh\footnote{Yahoo! Research, Santa Clara, CA, USA. Email: {\sf arpita@yahoo-inc.com.}}}


\begin{document}

\maketitle

\begin{abstract}

We study competitive equilibria in the classic Shapley-Shubik assignment model with indivisible goods and unit-demand buyers, with {\em budget constraints}: buyers can specify a maximum price they are willing to pay for each item, beyond which they cannot afford the item.  This single discontinuity introduced by the budget constraint fundamentally changes the properties of equilibria: in the assignment model without budget constraints, a competitive equilibrium always exists, and corresponds exactly to a stable matching. With budgets, a competitive equilibrium need not always exist. In addition, there are now {\em two} distinct notions of stability, depending on whether both or only one of the buyer and seller can strictly benefit in a blocking pair, that no longer coincide due to the budget-induced discontinuity. We define {\em weak} and {\em strong stability} for the assignment model with transferable utilities, and show that competitive equilibria correspond exactly to strongly stable matchings.

We consider the algorithmic question of efficiently computing competitive equilibria in an extension of the assignment model with budgets, where each buyer specifies his preferences over items using {\em utility functions} $u_{ij}$, where $u_{ij}(p_j)$  is the utility of buyer $i$ for item $j$ when its price is $p_j$. Our main result is a strongly polynomial time algorithm that decides whether or not a competitive equilibrium exists and if yes, computes a minimum one, for a general class of utility functions $u_{ij}$. This class of utility functions includes the standard quasi-linear utility model with a budget constraint, and in addition, allows modeling marketplaces where, for example, buyers only have a preference ranking amongst items subject to a maximum payment limit for each item, or where buyers want to optimize return on investment (ROI) instead of a quasi-linear utility and only know items' relative values.
\end{abstract}

\setcounter{page}{0}\thispagestyle{empty}
\newpage

\section{Introduction}

Consider a market with $n$ unit demand buyers and $m$ sellers, each selling one unit of an indivisible good. The buyers specify their preferences over items via utility functions $u_{ij}(p_j)$, which is the utility of buyer $i$ for item $j$ when its price is $p_j$. So far, this is the classic Shapley-Shubik assignment model~\cite{SS} which captures a variety of matching markets including housing markets and ad auctions~\cite{EOS,varian}, except for the extension to general utility functions instead of the quasi-linear utilities in the original model. Shapley and Shubik show that a {\em competitive equilibrium} always exists in their model, and later work~\cite{CK,Quinnzi,Gale} shows that a competitive equilibrium must also exist for the model with general utility functions $u_{ij}(\cdot)$, provided these $u_{ij}(\cdot)$ are strictly decreasing and continuous everywhere.

Now suppose we extend the assignment model with an extra {\em budget constraint}:
a buyer $i$ can specify a maximum price $b_{ij}$ that he is able to pay for item $j$, above which he cannot afford the item. That is, the utility function $u_{ij}(\cdot)$ can now (possibly) have a {\em discontinuity} at $p_j = b_{ij}$.
Budgets are a very real constraint in many marketplaces such as advertising markets, and have led to a spate of recent work on auction design~\cite{cg,bk,ec05,pv,focs08,stoc10,shuchi}, where the addition of the budget constraint, while seemingly innocuous, introduces fundamental new challenges to the problem.
As we will see, the same happens in the Shapley-Shubik assignment model: the discontinuity introduced by the budget constraint is not merely technical, but fundamentally changes the properties of competitive equilibria. First, a competitive equilibrium no longer always exists (Example~\ref{example-no-eq} in Appendix~\ref{appendix-example}). Second, and related to the first, while competitive equilibria in the original model correspond precisely to {\em stable matchings}~\cite{SS}, this is not quite true with budgets since the different notions of stability no longer coincide as in the original model, \ie, there is no longer a single unique notion of stability.

A {\em weakly stable} matching is one where there is no unmatched buyer-seller pair where both the buyer and seller can strictly benefit by trading with each other, where a seller's payoff is the payment he receives for his item. A {\em strongly stable} matching is one where there is no unmatched buyer-seller pair where one party strictly benefits and the other weakly benefits from the deviation (an example of the seller only weakly benefiting is when the buyer $i$ strictly prefers to buy seller $j$'s item at its current price, but not at any higher price). Without the budget-induced discontinuity, these two notions can be shown to be identical given the continuity of $u_{ij}(\cdot)$. However, with the discontinuity at $b_{ij}$ introduced by the budget constraint, they are no longer equivalent: a weakly stable matching does not possess the envy-freeness property of a competitive equilibrium and, as we show in Theorem~\ref{theorem-stability}, it is strongly stable matchings that correspond to competitive equilibria.

A natural question, then, is the following: Given a marketplace where buyers have such general utility functions $u_{ij}(\cdot)$ with budget constraints, is it possible to determine whether a competitive equilibrium exists, and if yes, compute one, for instance a buyer-optimal one, efficiently? This question is an interesting theoretical problem in its own right, given the extensive applications of the Shapley-Shubik model, and is analogous to the problem of computing strongly stable matchings in the Gale-Shapley marriage model when preference lists have ties~\cite{irving} in the more complex setting of equilibrium prices.

Our original motivation, though, comes from matching markets, such as the (online and TV) advertising market, where bidders do not always fit the standard model of quasi-linear utility optimizers. One obvious example of this is bidders in advertising markets with budget constraints; these cannot be captured by quasi-linear utility models. As another example, bidders in advertising markets may not be able to accurately estimate their values for items, but only be able to specify a preference ordering amongst items along with a maximum payment limit for an item -- this is because the precise value of placing an ad depends on factors such as conversion rates which are difficult to estimate, while determining the relative ordering amongst different placement options is much easier. Such bidders cannot specify their values for each item in the quasi-linear utility model as well. A third kind of bidders that do not fit the standard model are advertisers who want to optimize return on investment (ROI), which is value divided by price rather than value minus price; again, these advertisers might also be able to better estimate the ratios of values rather than the values themselves. An algorithm that computes a fair and efficient assignment of items with such general inputs can significantly improve the advertising marketplace by making it much easier for advertisers to participate and bid in the market. Of course, clearing a real advertising marketplace would require solving a model with multiple units of supply and demand for each seller and buyer --- we take the first steps towards this rather difficult problem by solving the market clearing problem for the unit demand case.

The remainder of the paper is organized as follows: We first discuss our main result in Section~\ref{intro-main}, and related work in Section~\ref{intro-related},~\ref{intro-market}. In Section~\ref{section-pre}, we define our model and solution concept, and in Section~\ref{section-stability} we prove that competitive equilibria correspond to strongly stable matchings in the assignment model with budgets. In Section~\ref{section-min}, we present our strongly polynomial time algorithm to compute a competitive equilibrium and outline its proof of correctness. Finally, in Section~\ref{section-min-mechanism}, we address the issue of strategic behavior in the marketplace.

\subsection{Main Result}\label{intro-main}

Our main result answers the question of computing competitive equilibria, if they exist, for a general class of utility functions $u_{ij}(\cdot)$ which are continuous and strictly decreasing on $[0, b_{ij}]$ as in \cite{CK,DG}, and satisfy an additional mutual {\em consistency} condition that allows increasing prices in a way that guarantees strongly polynomial runtime. This class of utility functions, which we will call {\em consistent} utility functions, is quite general and models, in addition to the standard quasi-linear utility model with a budget constraint, marketplaces where buyers who only have a preference ranking amongst items subject to a maximum payment limit for each item, or where buyers want to optimize ROI and only know items' relative values.

\bigskip \noindent \textbf{Theorem.}
Suppose we are given an instance of the assignment model with consistent utility functions $u_{ij}$. Then, if a competitive equilibrium exists, a minimum competitive equilibrium exists as well; further, the problem of deciding whether or not an equilibrium exists, and computing a minimum one, can be solved in strongly polynomial time.

\bigskip
Note that for arbitrary utility functions, even when competitive equilibria exist, a minimum one (Definition~\ref{def-min-eq}) need not.
For example, if utility functions are only weakly rather than strictly decreasing, there exist instances (Example~\ref{example-no-min-eq}) where competitive equilibria exist but the associated price vectors are {\em not comparable}, so no minimum equilibrium exists.

The constructive proof of this result is provided by algorithm \algmin, which returns a minimum competitive equilibrium if any competitive equilibrium exists for the given instance, or reports no equilibrium exists. Starting with the zero price vector $\mathbf{p}=0$, the algorithm constructs a bipartite {\em dynamic demand graph} $G^+(\mathbf{p})$ based on the demand sets --- the set of items with maximal, positive, utility at the current prices $\mathbf{p}$. It then identifies the set of ``over-demanded" items in this demand graph, which is captured by the critical set $A$ of buyers and its neighborhood $N(A)$ in $G^+(\mathbf{p})$: not all buyers in $A$ can be matched to distinct items in $N(A)$. Therefore, there can be no equilibrium at price $\mathbf{p}$, and the prices of items in $N(A)$ need to be raised until these items are no longer over-demanded. The algorithm recursively increases the prices of over-demanded items using a subroutine \increaseprice. Eventually, after all critical sets have been eliminated, the algorithm checks whether there is a matching that satisfies the conditions for a competitive equilibrium, namely all buyers are assigned an item in their demand set and every item with price greater than zero is assigned to a buyer.

There are two key challenges to developing a strongly polynomial time algorithm that returns a minimum competitive equilibrium for general utility functions $u_{ij}$ with budget constraints. The first comes from the fact that we allow rather general utility functions. The algorithm needs to increase prices of over-demanded items in each step to the maximum extent possible to ensure a fast runtime, while making sure that prices do not overshoot any equilibrium price vectors. However, to ensure that the algorithm does successfully find a minimum equilibrium or else correctly reports that no equilibrium exists, it is crucial that {\em only} the prices of items in the neighborhood of the critical set are increased, at every vector of prices through the course of the algorithm. That is, the prices of items in $N(A)$ can only be raised as long as we can be sure that the critical set and its neighborhood of the resulting demand graph does not change.
Due to the generality of the utility functions we consider, it is possible that the neighborhood (which is the demand set) of a buyer $i\in A$ changes ``non-monotonically" if the prices of items in $N(A)$ are not increased carefully, in the sense that items can drop out and then return again to $i$'s neighborhood--this can potentially lead to an exponential runtime despite preserving the condition on critical sets required for correctness. The subroutine \increaseprice\ uses the consistency property of the utility functions to increase prices to the maximum extent possible in such a way that an edge $(i,j)$, for any $i\in A$, vanishes from the demand graph {\em if and only if} $u_{ij}(p_j) \leq 0$. That is, item $j$ drops out of $i$'s demand set only if its price reaches to the threshold where $i$ cannot obtain positive utility from $j$. Such an edge $(i,j)$ that is removed from the demand graph never appears again through the remainder of the algorithm; this is crucial to the strongly polynomial runtime.


The second challenge is to deal with the discontinuity introduced by the budget constraint: For any edge $(i,j)$ in the demand graph, while buyer $i$'s utility from buying an item passes {\em continuously} through the value constraint $v_{ij}\triangleq p_j$ where $u_{ij}(p_j)=0$ (i.e., $i$ is indifferent between buying and not buying the item), there is a discontinuity at budget constraint $p_j=b_{ij}$ where $i$ is {\em not indifferent} between these two actions.
While the budget constraint has the same property as the value constraint that a buyer obtains negative utility at any higher price, our algorithm needs to account for a change in edge structure in the demand graph differently:
At price $p_j=b_{ij}$, buyer $i$ still strictly prefers to buy the item, so the price of this item would have to be strictly higher in any equilibrium.
The algorithm accounts for the discontinuity introduced by the budget using a careful marking process that tags such items whose prices need to be increased later to ensure envy-freeness.
The set of marked items in the final output of the algorithm has a nice property relating the two solution concepts weakly and strongly stable matchings (Proposition~\ref{prop-alg-weakly-stable}).

Finally, in Section~\ref{section-min-mechanism}, we consider the minimum equilibrium as a mechanism, and show that truth-bidding is a Nash equilibrium if there is a competitive equilibrium.

\subsection{Related Work}\label{intro-related}


The classic paper of Demange, Gale and Sotomayor~\cite{DGS} gives two auction-based processes that converge to a minimum equilibrium for the Shapley-Shubik model~\cite{SS}, introducing the idea of increasing prices of over-demanded items to derive a minimum equilibrium.
The key differences, in addition to the fact our algorithm is strongly polynomial time, are that we need to account for the discontinuity introduced by the budget constraint which means an equilibrium need not exist in our model, and that the more general utility functions we allow requires a more involved price increment process.
We note that the equilibrium existence results in ~\cite{CK,DG,Quinnzi} for general utility functions $u_{ij}(\cdot)$ are non-algorithmic, and also require that the utility functions must be continuous everywhere on $\mathbb{R}$ which rules out budget constraints.

A number of recent papers study the assignment model with budget constraints. Aggarwal et al.~\cite{www09} initiate the study of sponsored search advertising with budget-constrained bidders in the assignment model with quasi-linear utilities and solve the problem of finding a buyer-optimal stable matching. They also show that this simple addition of a budget constraint to quasi-linear utilities is surprisingly powerful and can be used to model a number of different types of buyers, including buyers who do not know the precise values for items but only have a preference ranking amongst them. While our model is similar to~\cite{www09} except for our generalization to utility functions $u_{ij}(\cdot)$, the solution concept used in~\cite{www09} turns out to be that of {\em weak stability} (see Section~\ref{section-stability}). In contrast, as Theorem~\ref{theorem-stability} shows, competitive equilibrium, which is the solution concept we focus on, corresponds to {\em strong stability}, leading to a completely different algorithmic problem (the minimum strongly stable matching, even when it exists, can be quite different from the buyer-optimal weakly stable matching, as we show in Example~\ref{example-no-min-eq-iii} and~\ref{example-weak-stable} in Appendix~\ref{appendix-example}).

A number of other recent papers study the special case of quasi-linear utilities with budget constraints in the assignment model. \cite{stacs10} improves the result in~\cite{www09} to compute a buyer-optimal {\em weakly} stable matching without making an assumption on the market necessary in \cite{www09}. In the same model~\cite{kempe09} asks whether there exists a price vector that can support a {\em given} allocation in an envy-free fashion, and how to compute the minimal and maximal such price vector. Of course, this is different from the question of whether there exists {\em any} allocation and price vector that is envy-free and, in addition, clears the market. In the same model again~\cite{LY} observes that a competitive equilibrium need not exist in the presence of budgets, and then focuses on the problem of suitably modifying the notion of competitive equilibrium to a `rationed equilibrium', which is guaranteed to always exist despite the budget constraint, and designing an auction that returns such an equilibrium.
The solution concept of rationed equilibrium differs from a competitive equilibrium, even in instances where competitive equilibria do exist, as shown by Example~\ref{example-LY} in Appendix~\ref{appendix-example}.

Most recently, Ashlagi et al.~\cite{lavi} study a specialized version of the budget-constrained quasi-linear utilities model in the context of sponsored search, where buyer $i$'s value for an item (slot) $j$ with click-through rate $r_j$ is of the form $v_{ij} = v_i r_j$, and budgets are item-independent $b_{ij} = b_i$. It gives a GSP-like auction which converges to a Pareto efficient envy-free outcome when buyer types $(v_i, b_i)$ are distinct.
This work differs from ours in two ways. First, motivated by the practical need to generalize the GSP auction, \cite{lavi} works under assumptions that guarantee that their auction always returns a nontrivial assignment; in contrast, we ask the theoretical question of whether or not there exists a competitive equilibrium for a given instance, and how to find a minimum one if it does.


Second, the bidder model in \cite{lavi} is subtly different from ours: This difference, while seemingly insignificant, does lead to {\em different outcomes} for an identical input, so that our algorithm cannot be viewed as a generalization of theirs to arbitrary utility functions. For technical convenience, \cite{lavi} assumes that a buyer derives negative utility for prices $p_j \geq b_{ij}$, whereas our bidder model, in keeping with the models introduced in~\cite{www09,LY} (as well as the classical market equilibrium and auction literature~\cite{milgrom}), assumes that a buyer can pay a price up to and including his budget, \ie, he derives negative utility only for prices $p_j > b_{ij}$. In Example~\ref{example-lavi} in Appendix~\ref{appendix-example}, we give instances where the outcomes returned by the auction of \cite{lavi} are {\em not competitive equilibria} in our model (and in fact can differ from a competitive equilibrium in a significant way) --- therefore, the auction in \cite{lavi} does not solve the problem of finding a competitive equilibrium for a special case of our model.


\subsection{Computational Market Equilibrium Paradigm}\label{intro-market}
Our approach to the matching market problem follows the computational competitive market equilibrium paradigm, which has received extensive attention in the theoretical computer science literature.
Competitive market equilibrium is a vital concept in economics and has long been established as a standard benchmark of efficiency and fairness in the analysis of markets~\cite{walras,AD}.
Computational issues in this context have also been extensively studied in computational economics~\cite{Scarf,BS}.
While both the existence and computation issues are closely related to the fixed point problem~\cite{Brouwer}, the problem's computational complexity was identified to be at least as hard as PPAD~\cite{ppad} in terms of the aggregated demand functions~\cite{ppad} through Uzawa's reduction~\cite{Uzawa}, for Leontief economies~\cite{CSVY} through a connection to two-player Nash equilibrium~\cite{DGP,CD}, for Arrow-Debreu markets with Arrow-Debreu utility functions~\cite{DD}, as well as for additively separable utilities~\cite{xi1,VY}.

Algorithmic issues have also attracted much interest recently in the Arrow-Debreu model~\cite{DPS,jain,YYY}, especially in the Fisher market setting~\cite{market1,BS}. Most known polynomial time algorithms to compute a market equilibrium, e.g.~\cite{market1,jain}, rely crucially on the fact that items are divisible (i.e. real variables). Our problem has one important difference --- items are indivisible, \ie, sold completely to at most one buyer. With integral variables, it is known that the problem is in general computationally difficult and polynomial time algorithms are known only for very limited special cases~\cite{DPS}; also, a solution is not guaranteed to exist.
Introducing extra budget constraints on top of the integral variables adds another dimension to the problem, since these render the utility functions non-smooth. These two additional constraints make the traditional primal-dual based approach inapplicable to our problem, and the development of a polynomial time solution poses a combinatorial challenge.

While a competitive equilibrium need not exist in our model, our strongly polynomial time algorithm determines whether or not there is an equilibrium, and computes one in case it exists. In this sense, our work contributes to the successful computational competitive market equilibrium paradigm with new positive algorithmic solutions, albeit for a special (but very important) class of problems, and opens up the possibility of a feasible computational equilibrium pricing model in practical markets.

\section{Model and Competitive Equilibrium}\label{section-pre}

We have a market with $n$ unit-demand buyers, and $m$ indivisible items. Unit demand means that each buyer wants at most one item and indivisible means that each item can be sold to at most one buyer. We will denote buyers by $i$ and items by $j$ throughout.

Buyers' preferences over items are described using utility functions $u_{ij}: \mathbb{R}^+\cup\{0\}\rightarrow \mathbb{R}$, that specify the utility of a buyer for an item as a function of its price: $u_{ij}(p_j)$  is the utility of buyer $i$ for item $j$ when the price of item $j$ is $p_j$. Higher utility items are more preferable; we say that buyer $i$ {\em (strictly) prefers} $j$ to $j'$ if $u_{ij}(p_j)>u_{ij'}(p_{j'})$, is {\em indifferent} between $j$ and $j'$ if $u_{ij}(p_j)=u_{ij'}(p_{j'})$, and {\em weakly prefers} $j$ to $j'$ if $u_{ij}(p_j)\ge u_{ij'}(p_{j'})$.
In particular, a utility of $0$, $u_{ij}(p_j) =0$, means that $i$ is indifferent between buying item $j$ at price $p_j$ and not buying anything at all; a negative utility $u_{ij}(p_j) <0$ means the buyer strictly prefers to not buy the item at price $p_j$.

For each buyer-item pair $(i,j)$, we assume that there is a maximum price $b_{ij} \in [0, \infty)$ that $i$ is able to pay for $j$ (set $b_{ij}$ to be infinity if there is no such upper bound); we will call $b_{ij}$ the {\em budget} specified by buyer $i$ for $j$. We set $u_{ij}(p_j) = -1$ for $p_j > b_{ij}$ (here the value $-1$ can be replaced by any negative number). For simplicity, we also assume that $u_{ij}(p_j)\ge -1$ for any $p_j\ge 0$; this is without loss of generality since negative values are not of interest for any buyer.
We will also assume that there are $m$ dummy buyers each with budget zero and utility zero for each item $j$ when $p_j=0$, i.e., $b_{ij} = 0$ and $u_{ij}(0)=0$ (note that for such buyers, $u_{ij}(p_j)=-1$ when $p_j>0$). This assumption is without loss of generality, and implies that the number of items is always less than or equal to the number of buyers, i.e., $m\le n$.

The utility functions permitted by our model are quite general, and allow modeling a fairly large class of marketplaces:
\begin{itemize}
\item Marketplaces with buyers who have quasi-linear utilities and budgets, $u_{ij}(p_j) = v_{ij}- p_j$ for $p_j \leq b_{ij}$, and negative utility for $p_j > b_{ij}$, where $v_{ij}$ is the value of buyer $i$ for item $j$ and $b_{ij}$ is the corresponding budget. Note that such valuations cannot be captured without the budget constraint --- for example, a buyer might have the same payment limit for all items but different valuations for each of them, all larger than the budget, so the budget constraint is a nontrivial extension to the model.

\item Marketplaces with return on investment (ROI) based buyers with budget constraints, \ie, buyers who want to maximize the ratio $u_{ij}(p_j) =t_{ij}/p_j$ subject to a limit on their payment, where $t_{ij}$ is the value that $i$ has for item $j$ (set $u_{ij}(0)=\infty$ if $t_{ij}>0$; and $u_{ij}(0)=0$ if $t_{ij}=0$). Note that such preferences cannot be written as quasi-linear preferences $v_{ij} - p_j$ (with budgets) for any choice of $v_{ij}$,\footnote{To see why, consider a single buyer with two items $j_1,j_2$ and utility functions $1/p_1$, $2/p_2$ respectively, and budget infinity for both items. This buyer prefers item $j_1$ for all prices $(p_1, p_2)$ such that $2p_1 \leq p_2$. If there are values $v_1, v_2$ (of course $v_1<v_2$) for which these preferences can be rewritten as quasi-linear utilities, item $j_1$ is preferred for all price pairs satisfying $p_2 - p_1 \geq v_2-v_1 = \Delta$. But for any choice of $\Delta$, at prices $(\Delta/3, \Delta)$ the buyer strictly prefers $j_2$ in the quasi-linear utility model but $j_1$ in the ROI model, whereas at prices $(3\Delta, 5\Delta)$ he strictly prefers $j_1$ in the quasi-linear utility model but $j_2$ in the ROI model. Thus the two models cannot be equivalent.}
    (although they can be rewritten as $M - x_{ij}p_j$ for adequately large $M$).

    We point out that we can also model buyers who know only the {\em relative values} $t_{ij}/t_{i1}$ of items for $j = 2, \ldots, m$ ($t_{i1}$ is $i$'s value for the first item), and need not know exactly the magnitudes of their values $t_{ij}$ (set $u_{ij}(p_j) = x_{ij}/p_j$, where $x_{ij} = t_{ij}/t_{i1}$).

\item Marketplaces where buyers who only know their preferences for any given prices, but {\em not their values}: Buyers who can only {\em rank} items in order of preference, and have budget constraints for each item. For example, a buyer who prefers item $j_1$ over all other items as long as its price is less than or equal to $b_{ij_1}$, else prefers item $j_2$ as long as price is less than or equal to $b_{ij_2}$, and so on (set $u_{ij}(p_j) = M_{ij} - p_j$ for adequately large values of $M_{ij}$ that ensures the utility values do not intersect when $p_j \leq b_{ij}$). A simple special case is a buyer who has a fixed preference ranking over items and a single budget constraint.
\end{itemize}

Given an instance of the problem, i.e., a set of $n \cdot m$ utility functions $u_{ij}(\cdot)$, the output of the market is a tuple $(\mathbf{x},\mathbf{p})$, where
\begin{itemize}
\item $\mathbf{x}=(x_1,\ldots,x_n)$ is an \textit{allocation} vector, where $x_i$ is the item that $i$ wins. If $i$ does not win any items, denote $x_i=\emptyset$. Note that different buyers must win different items, i.e., $x_i\neq x_{i'}$ for any $i\neq i'$ if $x_i,x_{i'}\neq \emptyset$.
\item $\mathbf{p}=(p_1,\ldots,p_m)\ge 0$ is a \textit{price} vector, where $p_j$ is the price charged for item $j$.
\end{itemize}

Given an output $(\mathbf{x},\mathbf{p})$, if $x_i=j$ (i.e., $i$ wins item $j$), the \textit{utility} that $i$ receives is $u_{ij}(p_j)$. If $x_i=\emptyset$ (i.e., $i$ does not win any item), his utility is defined to be 0 (for simplicity, we denote this by $u_{ix_i}(p_{x_i})=0$). We consider the following solution concept in this paper.

\begin{defi}[Competitive equilibrium]
We say a tuple $(\mathbf{x},\mathbf{p})$ is a \textup{competitive equilibrium} if (i) for any item $j$, $p_j=0$ if no one wins $j$ in allocation $\mathbf{x}$, and (ii) for any buyer $i$, the utility of $i$ is maximized by his allocation at the given vector of prices. That is,
\begin{itemize}
\item if $i$ wins item $j$ (i.e., $x_i=j$), then $u_{ij}(p_j) \geq 0$ (this implies immediately that $b_{ij}\ge p_j$); and for every other item $j'$, $u_{ij}(p_j) \geq u_{ij'}(p_{j'})$.
\item if $i$ does not win any item, then for every item $j$, $u_{ij}(p_j) \leq 0$.
\end{itemize}
\end{defi}

The first condition above is a market clearing condition, which says that all unallocated items are priced at 0 (or at some given reserve price). The assumption that there is a dummy buyer for each item allows us to assume, without loss of generality, that all items are allocated in all equilibria. The second is a {\em fairness}, or {\em envy-freeness} condition, which says that each buyer is allocated an item that maximizes his utility at these prices (note that if an item is priced above the buyer's budget for that item, he has negative utility and therefore does not want the item). That is, given the budget constraints, if $i$ wins item $j$, then $i$ cannot obtain higher utility from any other item; and if $i$ does not win any item, then $i$ cannot obtain a positive utility from any item at these prices.

\section{Stability}\label{section-stability}
Competitive equilibria are closely related to the concept of {\em stability}, where no pair of agents can mutually benefit by deviating from their current assignment. Observe that one way to interpret the price $p_j$ of an item $j$ is that it is the payoff received by the `seller' of item $j$ if the item is sold (the payoff is $0$ if the item is not sold); all sellers prefer higher payoffs. We can define the following two notions of stability for a two-sided market with transferable utilities:

\begin{defi}[Weak and strong stability]
Given an individually rational allocation $\mathbf{x}$ and an associated payoff vector $\mathbf{p}$, where no buyer derives negative utility and $p_j\ge 0$ is the payoff to seller $j$, we say the tuple $(\mathbf{x},\mathbf{p})$ is
\begin{itemize}
\item {\em weakly stable} if there is no blocking pair $(i,j)$, $j \neq x_i$, such that $u_{ij}(p'_j) > u_{ix_i}(p_{x_i})$ for some $p'_j > p_j$;
\item {\em strongly stable} if there is no blocking pair $(i,j)$, $j \neq x_i$, such that (i) $u_{ij}(p'_j) \ge u_{ix_i}(p_{x_i})$ for some (ii) $p'_j \geq p_j$, and at least one of inequalities (i) and (ii) is strict.
\end{itemize}
\end{defi}
That is, to block a weakly stable matching, {\em both} sides of the blocking pair $(i,j)$ must strictly benefit relative to their current allocation, \ie, buyer $i$'s utility must strictly increase and seller $j$ must be able to receive a strictly higher payoff by deviating. For strong stability, however, only one side in a blocking pair needs to strictly benefit, while the other side need only weakly benefit from the deviation\footnote{Note that when $u_{ij}(\cdot)$ is strictly decreasing, if there is a pair $(i,j)$ having strict inequality (ii) in the definition for strong stability, this pair will have strict inequality (i) as well, so it is enough to check (i) to decide strong stability.}.

When the utility functions $u_{ij}(\cdot)$ are strictly decreasing and continuous everywhere as in \cite{CK,DG}, the conditions for weak and strong stability turn out to be identical: if there is a pair $(i,j)$ where $i$'s utility from $j$ is strictly larger than from $x_i$, there must exist a price $p'_j = p_j + \epsilon > p_j$ at which $i$ still strictly prefers $j$ to $x_i$. That is, if $i$ strictly prefers $j$, it is always possible for $i$ and $j$ to deviate in such a way that both $i$  and $j$ strictly benefit from the deviation. So weakly and strongly stable matchings are identical, and there is a unique notion of stability in the original matching model.

However, with the budget constraint, this is no longer the case: suppose there is a pair $(i,j)$ such that $u_{ij}(p_j) > v_{ix_i}(p_{x_i}) \geq 0$ as before. Depending on whether $p_j < b_{ij}$ or $p_j = b_{ij}$, there may or may not exist a strictly profitable deviation for {\em both} $i$ and $j$: in the first case, there exists a $p'_j > p_j$ with $u_{ij}(p'_j) > u_{ix_i}(p_{x_i})$; but in the second case, there is no $p'_j > p_j$ for $i$ to continue to prefer $j$ over $x_i$, \ie, there is no strictly profitable deviation for $j$. That is, with the addition of the budget constraint, the two notions of ``weak'' and ``strong'' stability are no longer equivalent as in the original assignment model: the non-equivalence is precisely because a buyer's utility can go from strictly positive to strictly negative without passing through $0$ at the point of discontinuity at $b_{ij}$. We point out the analogy with the situation in the Gale-Shapley marriage model~\cite{gs}, when ties are introduced into preference lists, the two notions of stability no longer coincide~\cite{irving}.

Since there are two distinct notions of stability in the matching model with budgets, at most one of these can be the same as a competitive equilibrium.
The following claim shows that the solution concept of a competitive equilibrium coincides exactly with that of strong stability\footnote{The reason that notion of stability used in~\cite{www09} corresponds to weak stability is because the inequality (4) in~\cite{www09} is not strict, which translates exactly to being able to strictly increase the price for item $j$ when $j$ belongs to a blocking pair.}.

\begin{theorem}\label{theorem-stability}
Suppose that the utility functions $u_{ij}(\cdot)$ are strictly decreasing on domain $[0,b_{ij}]$. Then, $(\mathbf{x},\mathbf{p})$ is a competitive equilibrium if and only if it is also strongly stable.
\end{theorem}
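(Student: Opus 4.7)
The plan is to prove both directions of the equivalence by playing the envy-freeness clause of a competitive equilibrium off against the non-blocking clause of strong stability, exploiting two ingredients of the model: strict monotonicity of $u_{ij}$ on $[0,b_{ij}]$, and the fact that $u_{ij}(p_j)=-1$ whenever $p_j>b_{ij}$.

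For the forward direction, I would assume $(\mathbf{x},\mathbf{p})$ is a competitive equilibrium and suppose, toward a contradiction, that $(i,j)$ with $j\neq x_i$ is a strong blocking pair witnessed by some $p'_j\geq p_j$. Envy-freeness gives $u_{ij}(p_j)\leq u_{ix_i}(p_{x_i})$, while the CE conditions force $u_{ix_i}(p_{x_i})\geq 0$ (whether or not $x_i=\emptyset$, using the convention $u_{ix_i}(p_{x_i})=0$ in the unassigned case). I then split on the two possibilities for where the strict inequality sits. If $p'_j=p_j$, strictness must lie in condition (i), giving $u_{ij}(p_j)>u_{ix_i}(p_{x_i})$, contradicting envy-freeness. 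If $p'_j>p_j$, then either $p'_j>b_{ij}$, in which case $u_{ij}(p'_j)=-1<0\leq u_{ix_i}(p_{x_i})$ contradicts (i), or $p_j<p'_j\leq b_{ij}$, in which case strict monotonicity of $u_{ij}$ on $[0,b_{ij}]$ gives $u_{ij}(p_j)>u_{ij}(p'_j)\geq u_{ix_i}(p_{x_i})$, again contradicting envy-freeness.

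For the converse, I would assume $(\mathbf{x},\mathbf{p})$ is strongly stable and check the two CE clauses. Market clearing (unsold items priced at $0$) follows from the dummy-buyer construction: any item not allocated to a real buyer can be treated as allocated to a dummy, whose budget zero together with individual rationality forces $p_j=0$. For envy-freeness, if some $j\neq x_i$ satisfies $u_{ij}(p_j)>u_{ix_i}(p_{x_i})$, I would take $p'_j=p_j$, which makes (i) strict and (ii) trivially weak, so $(i,j)$ is a strong blocking pair, contradicting stability. The same construction handles the unassigned case $x_i=\emptyset$ (where $u_{ix_i}(p_{x_i})=0$) whenever some item offers strictly positive utility.

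The only genuine subtlety, and the whole reason the theorem is interesting, is the handling of the budget discontinuity in the forward direction: the subcase $p'_j>b_{ij}$ is where strict monotonicity on $[0,b_{ij}]$ alone does not suffice, and one must also invoke that $u_{ij}$ falls to $-1$ past $b_{ij}$ while the incumbent utility $u_{ix_i}(p_{x_i})$ is non-negative. This is precisely the asymmetry that makes weak and strong stability diverge once budgets are introduced, and is what singles out strong stability as the correct match for competitive equilibrium in this model.
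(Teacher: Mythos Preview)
Your proof is correct and follows essentially the same approach as the paper: both directions hinge on envy-freeness versus the $p'_j=p_j$ blocking-pair construction, with strict monotonicity on $[0,b_{ij}]$ and the convention $u_{ij}(p_j)=-1$ for $p_j>b_{ij}$ doing the work in the forward direction. Your case split in the forward direction is organized by the value of $p'_j$ rather than of $p_j$ (as the paper does), and your market-clearing step appeals to dummy buyers rather than directly to the payoff interpretation of $p_j$, but these are cosmetic differences only.
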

\begin{proof}
If $(\mathbf{x},\mathbf{p})$ is a competitive equilibrium, since all unallocated items are prices at 0, each $p_j$ is precisely the payoff that the corresponding seller $j$ receives. Since $u_{ix_i}(p_{x_i}) \geq 0$ and $p_j\ge 0$ for any $i$ and $j$, $(\mathbf{x},\mathbf{p})$ is individually rational. By the definition of competitive equilibrium, we have $u_{ix_i}(p_{x_i}) \geq u_{ij}(p_j)$ for any $j \neq x_i$. If $p_j=b_{ij}$, clearly $i$ and $j$ cannot be a blocking pair. If $p_j<b_{ij}$, since the utility functions are strictly decreasing, $u_{ij}(p'_j) < u_{ij}(p_j)$ for all $p'_j > p_j$; thus $i$ and $j$ are not a blocking pair as well. Hence, there exists no strongly blocking pair $(i,j)$ and $(\mathbf{x},\mathbf{p})$ is strongly stable.

Conversely, if $(\mathbf{x},\mathbf{p})$ is strongly stable, then $u_{ix_i}(p_{x_i})\ge 0$ and $p_j\ge 0$ for any $i$ and $j$ by individual rationality. Consider each $p_j$ as the price of item $j$. If item $j$ is not sold to any buyer, the payoff that seller $j$ receives is 0 and thus $p_j=0$; hence the market clearing condition holds. For any buyer $i$, if there is $j\neq x_i$ such that $u_{ij}(p_j)>u_{ix_i}(p_{x_i})$, then $(i,j)$ would be a strongly blocking pair since $i$ obtains more utility and seller $j$ gets the same amount of payoff. Hence, the envy-freeness condition also holds, which implies that $(\mathbf{x},\mathbf{p})$ is a competitive equilibrium.
\end{proof}

\section{Computing a Minimum Competitive Equilibrium}\label{section-min}

In this section, we present a strongly polynomial time algorithm to determine if a competitive equilibrium exists, and find a minimum one if it does, for a class of utility functions $u_{ij}(\cdot)$ that satisfy the conditions below; the first two are identical to those required in~\cite{CK,DG}.
We will refer to a set of $mn$ utility functions $u_{ij}(\cdot)$ that satisfy these properties as {\em consistent} utility functions.

\begin{enumerate}
\item Continuity. Each function $u_{ij}(\cdot)$ is continuous on $[0,b_{ij}]$.\footnote{Note that since $u_{ij}(p_j)=-1$ when $p_j>b_{ij}$, the utility function $u_{ij}(\cdot)$ might not be continuous on the whole domain $\mathbb{R}^+\cup \{0\}$. Actually, we really only require $u_{ij}(\cdot)$ to be continuous where it is non-negative (note that $u_{ij}(p_j)$ can be negative for $p_j < b_{ij}$, for instance, with quasi-linear utilities where $v_{ij} < b_{ij}$), but requiring the property on $[0,b_{ij}]$ is without loss of generality since negative values are not of interest.}

\item Monotonicity. Each function $u_{ij}(\cdot)$ is strictly decreasing on $[0,b_{ij}]$.

    Since the $u_{ij}$ is strictly decreasing, we can define the inverse function $u_{ij}^{-1}(q) = p$ if $u_{ij}(p)=q$  for any $q \in \mathbb{R}$; if there is no such $p$, define $u_{ij}^{-1}(q)=\infty$. Define the threshold {\em value} $v_{ij}=u_{ij}^{-1}(0)$, which (if $v_{ij}\neq\infty$) is the price at which buyer $i$ becomes indifferent between buying $j$ and not buying anything. For quasi-linear utilities, $v_{ij}$ is exactly the value of buyer $i$ for item $j$; however for ROI-based buyers, this value $v_{ij}$ is $\infty$ (and $v_{ij}=0$ if $t_{ij}=0$; for such case, to guarantee monotonicity, we can set $b_{ij}=0$).

\item Consistency. The consistency condition is the one that relates utility functions $u_{ij}(\cdot)$ across buyers and items by transitive paths. We say that a path $P=(j_1,i_1,j_2,\ldots,i_{\ell-1},j_\ell)$ is {\em transitive} with respect to price vector $\mathbf{p}$ if $u_{i_kj_k}(p_{j_k}) = u_{i_kj_{k+1}}(p_{j_{k+1}})\ge 0$ for $k=1,\ldots,\ell-1$ (buyers and items can be repeated). That is, $P$ is such that each buyer $i_k$ gets equal utility from its two neighboring items $j_{k}$ and $j_{k+1}$ in the path. The consistency property relates $u_{ij}(\cdot)$ using such transitive paths, defined formally as follows.

    Suppose a path $P$ is transitive with respect to $\mathbf{p}$, as well as with respect to another price vector $\mathbf{q} > \mathbf{p}$, where each price $q_{j_k}$ is within the value and budget constraints of its neighbors on path $P$. Then any buyer $i$ in the market (not only those on path $P$) who weakly prefers $j_1$ to $j_\ell$ in $\mathbf{p}$ continues to weakly prefer $j_1$ to $j_\ell$ in $\mathbf{q}$ when $q_{j_1}$ is within the value and budget constraints of $i$. That is, if $q_{j_1}\le \min\{v_{ij_1},b_{ij_1}\}$, then $u_{ij_1}(p_{j_1})\ge u_{ij_\ell}(p_{j_\ell})$ implies $u_{ij_1}(q_{j_1})\ge u_{ij_\ell}(q_{j_\ell})$.\footnote{Note that if $u_{ij_1}(p_{j_1})= u_{ij_\ell}(p_{j_\ell})$, we can use the two inequalities (the other one is by switching $j_1$ and $j_\ell$) to conclude that $u_{ij_1}(q_{j_1})= u_{ij_\ell}(q_{j_\ell})$, given value and budget constraints on $j_1$ and $j_\ell$.} In other words, when prices are increased from $\mathbf{p}$ to $\mathbf{q}$ while maintaining the transitivity of path $P$, all buyers continue to have the same preference over items in the path in both price vectors $\mathbf{p}$ and $\mathbf{q}$ (subject to their value and budget constraints).
\end{enumerate}

While the consistency condition might appear to be strong, it is easy to verify that it holds for each of the buyer models discussed in the previous section. For example, for quasi-linear buyers with budgets, the only way to maintain the transitivity of a path starting from a price vector $\mathbf{p}$ is to increase all prices by the same increment $\epsilon$, i.e., $p_j\leftarrow p_j+\epsilon$. Since the price increment is identical for all items, all buyers retain their preference ordering across items. For ROI-based buyers, transitivity and consistency hold when $p_j\leftarrow (1+\epsilon)p_j$. We note that the consistency condition is not about the (existence of) price vector $\mathbf{q}$ itself, but rather, the relative preference ordering of buyers over items in the two price vectors $\mathbf{p}$ and $\mathbf{q}$ (the existence of such $\mathbf{q}$ can be shown easily using the continuity and monotonicity of the $u_{ij}(\cdot)$ as long as the relevant value and budget constraints are not tight at $\mathbf{p}$). The consistency property gives us a way to increase prices efficiently (see Section~\ref{section-price-increase}).

In what follows, we will assume that all utility functions $u_{ij}(\cdot)$, for $i=1, \ldots, n$ and $j = 1, \ldots,m$, of a given instance satisfy the above conditions. Naturally, the functions $u_{ij}$ and their inverses $u^{-1}_{ij}$ must be polynomial time computable as well; this is an implicit assumption in all our results.
Our main result is the following.

\begin{theorem}\label{theorem-main}
Suppose we are given an instance of the assignment model with consistent utility functions $u_{ij}$. Then, if a competitive equilibrium exists, a minimum competitive equilibrium exists as well; further, the problem of deciding whether or not an equilibrium exists, and computing a minimum one, can be solved in strongly polynomial time.
\end{theorem}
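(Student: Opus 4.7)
The plan is to prove Theorem~\ref{theorem-main} constructively by designing the algorithm $\algmin$ sketched in the introduction and analyzing its correctness together with its runtime. The overarching strategy is to maintain throughout the execution the invariant that the current price vector $\mathbf{p}$ is coordinate-wise dominated by the price vector of every competitive equilibrium, \ie $\mathbf{p}\le \mathbf{p}^*$ for every equilibrium $(\mathbf{x}^*,\mathbf{p}^*)$. If this invariant is preserved and the algorithm halts with an equilibrium, that equilibrium is automatically the coordinate-wise minimum, which simultaneously establishes the existence of a minimum equilibrium and the algorithmic part of the theorem; if the algorithm halts declaring failure, the same invariant is what certifies that no equilibrium exists at all.

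I would start at $\mathbf{p}=0$ and repeatedly perform three steps: (a) build the dynamic demand graph $G^+(\mathbf{p})$ whose edges $(i,j)$ record the items $j$ that maximize buyer $i$'s utility at $\mathbf{p}$ subject to $u_{ij}(p_j)>0$; (b) test via Hall's condition for a critical set $A$ of buyers whose neighborhood $N(A)$ in $G^+(\mathbf{p})$ is too small to match $A$; (c) when such an $A$ exists, call $\increaseprice$ to raise the prices on $N(A)$ until the structure of the demand graph changes. The consistency property from Section~\ref{section-min} is invoked precisely to describe the allowed simultaneous trajectory on $N(A)$: it specifies a direction (uniform additive increments for quasi-linear utilities, multiplicative increments for ROI, etc.) that preserves the transitivity of every demand path inside $N(A)$ and therefore keeps $A\to N(A)$ as a critical pair during the increment. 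The subroutine walks along this direction until one of three events occurs: (i) a new edge enters from some $i\in A$ to an item outside $N(A)$, enlarging $N(A)$ and potentially breaking criticality; (ii) an edge $(i,j)$ with $i\in A$ vanishes because $u_{ij}(p_j)$ drops to $0$, in which case consistency plus monotonicity guarantees it never reappears later; or (iii) some $p_j$ reaches $b_{ij}$ while buyer $i\in A$ still strictly prefers $j$ at that price, in which case the pair is marked so that the price of $j$ can later be bumped above $b_{ij}$ just enough to restore envy-freeness.

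The main obstacle, and the heart of the proof, is showing that the invariant $\mathbf{p}\le \mathbf{p}^*$ is preserved by $\increaseprice$. I would argue by contradiction: suppose the trajectory is about to make some coordinate exceed $\mathbf{p}^*$, and let $j\in N(A)$ be the first item where this happens. Trace a transitive path inside $N(A)$ back to a buyer $i\in A$; the consistency property lets me transport the indifferences along this path from the current prices to the equilibrium prices $\mathbf{p}^*$, and hence conclude that at $\mathbf{p}^*$ every buyer in $A$ must receive its top utility from an item in $N(A)$. But then envy-freeness of $(\mathbf{x}^*,\mathbf{p}^*)$ forces $A$ to be matched inside $N(A)$, contradicting $|N(A)|<|A|$. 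Handling the budget-induced discontinuity requires an extra case: when $\increaseprice$ marks a pair $(i,j)$ at $p_j=b_{ij}$, envy-freeness at any equilibrium forces $p_j^*>b_{ij}$, so the subsequent strict bump above $b_{ij}$ is still safe; this step is also exactly where the distinction between weak and strong stability established in Theorem~\ref{theorem-stability} manifests in the algorithm, and will yield the auxiliary Proposition~\ref{prop-alg-weakly-stable}.

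For the strongly polynomial runtime, I would charge each call to $\increaseprice$ to the event that terminates it. Events of type (i) can occur $O(nm)$ times between two successive changes of the critical set before one is exhausted; events of type (ii) and (iii) can each occur at most $nm$ times globally because a removed edge never returns under consistency, and a buyer--item pair is marked at most once. The total number of iterations is therefore polynomial in $n$ and $m$, independent of numerical magnitudes, and each iteration only needs a bipartite matching computation (to find a critical set) and the evaluation of $u_{ij}^{-1}(\cdot)$ at a polynomial number of candidate breakpoints (to find the next event). Combining these bounds with the invariant gives the strongly polynomial algorithm, and hence the theorem.
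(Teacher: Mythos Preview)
Your overall architecture matches the paper's: build \algmin, maintain the invariant $p_j\le \min_j$ (Lemma~\ref{lemma-minj}), show the output is an equilibrium (Lemma~\ref{lemma-min-eq}), show an equilibrium forces an output (Lemma~\ref{lemma-min-exist}), and bound the number of stages by the fact that deleted edges never reappear (Lemma~\ref{lemma-polytime}). Two substantive gaps remain, however.

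First, your proof of the invariant misuses consistency. Consistency only lets you compare two price vectors \emph{both of which make a given path transitive}; the equilibrium price vector $\mathbf{p}^*$ has no reason to make the paths in $N(A)$ transitive, so you cannot ``transport the indifferences along this path to $\mathbf{p}^*$.'' The paper's argument for Lemma~\ref{lemma-minj} is different and more delicate: at the first moment some $q_{j_0}=p^*_{j_0}$ while $q_j\le p^*_j$ for all $j$, it builds the set $Z\subseteq S$ of buyers reachable from $j_0$ by alternating paths in the demand graph with respect to the equilibrium matching $M^*$, shows inductively that every $i\in Z$ has $x^*_i\in N_D(i)$ and $q_{x^*_i}=p^*_{x^*_i}$, and then observes that $S\setminus Z$ has neighborhood contained in $T\setminus\{x^*_i:i\in Z\}$, contradicting minimality of the critical set. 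Consistency is used only to make \increaseprice\ well-defined and to justify the ``first moment'' decomposition (Proposition~\ref{prop-price-increase}); it plays no direct role in comparing with $\mathbf{p}^*$.

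Second, you stop too early. After all critical sets in $G^+$ are eliminated you still have a mixture of marked and unmarked items, and a buyer indifferent between a marked and an unmarked item will strictly prefer the unmarked one once marked prices are bumped. The paper handles this with an additional loop (Steps~(4)--(6)): form the reduced graph $G'$ on unmarked items, find \emph{its} critical set, mark its neighborhood, possibly delete budget-tight edges and return to Step~(3), until one reaches a graph $G^*$ in which every buyer's demand set is entirely marked or entirely unmarked (Fact~\ref{fact-all-mark}, Proposition~\ref{prop-no-critical}). Only then is $G^*$ extended to $H$ by adding zero-utility edges for buyers in $U\setminus U^+$, and the algorithm succeeds iff $H$ admits a matching of size $m$ covering $U^+$; the ``failure certifies non-existence'' direction is proved by showing that any equilibrium matching is contained in $H$ (Lemma~\ref{lemma-min-exist}), not directly from the invariant. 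Your sketch mentions marking but does not account for this post-processing or the market-clearing check, without which the output need not be an equilibrium and the failure case is unjustified.
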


We must first clarify what we mean by a minimum equilibrium --- There are three reasons a minimum equilibrium may not exist. (i) First, there may exist no equilibrium at all for the given instance, due to the budget constraint, as shown by Example~\ref{example-no-eq}. (ii) Second, equilibria may exist, but the associated price vectors may not be {\em comparable}, so that a minimum equilibrium does not exist. (Indeed, Example~\ref{example-no-min-eq} in Appendix~\ref{appendix-example} shows that if the utility functions are not strictly decreasing, a minimum equilibrium does not exist because the equilibrium price vectors are incomparable.) (iii) Finally, since we deal with real number prices rather than restricting to integer prices, the set of equilibrium prices for an item need not contain its infimum; strictly speaking, therefore, a minimum equilibrium price need not exist even though equilibria might exist and are comparable.

It will turn out that for utility functions in our model (\ie, satisfying the above three conditions), (ii) never happens. However, we will want to distinguish between instances of type (i) which have no competitive equilibrium at all, versus those of type (iii) where an equilibrium exists, but the set of equilibrium prices does not contain the infimum.

We therefore define a `$p+$' notation to deal with such instances: a minimum equilibrium with price $p_j+$ for an item $j$ means that there is no equilibrium with that item priced at $p_j$ or less, but there does exist an equilibrium with price $p_j + \epsilon_j$, where $\epsilon_j>0$ is an arbitrarily small precision. The term `$p+$' does not really refer to any particular price, but is just our notation for the concept of a real number that can be arbitrarily close to $p$ (from the right). In particular, for any given real number $p'>p$, we have $p<p+<p'$. We use this notation to formally define a minimum competitive equilibrium as follows.

\begin{defi}[Minimum competitive equilibrium]\label{def-min-eq}
Define the {\em infimum price vector} $\mathbf{p}$ where $p_j$ is the infimum of all equilibrium prices for item $j$, and let $T=\{j~|~\textup{there is an equilibrium where $j$ is priced at $p_j$}\}$. Suppose there is an allocation vector $\mathbf{x}^*$ such that for every $\epsilon > 0$, there exist $0\le \epsilon_j \leq \epsilon$ for which $(\mathbf{x}^*,\mathbf{q})$ is a competitive equilibrium, where $q_j=p_j$ if $j\in T$ and $q_j = p_j + \epsilon_j$ otherwise. Then we say $(\mathbf{x}^*,\mathbf{p}^*)$ is a {\em minimum competitive equilibrium} and $\mathbf{p}^*$ is the {\em minimum equilibrium price vector}, where $p^*_j=p_j$ if $j\in T$ and $p^*_j=p_j+$ otherwise.
\end{defi}

Note that $\mathbf{p}^*$ is not really a vector of prices, since it includes some entries of the form `$p+$': the definition above gives a precise way to translate this ``conceptual price vector" into an actual vector of prices.
For example, consider the quasi-linear $v_{ij}-p_j$ utility model with budgets, with three buyers $i_1,i_2,i_3$ and two items $j_1,j_2$. Every buyer-item pair $(i,j)$ has the same value  $v_{ij}=10$; and $b_{i_1j_1}=b_{i_1j_2}=b_{i_2j_1}=b_{i_2j_2}=10$ and $b_{i_3j_1}=b_{i_3j_2}=2$. Then in the above definition, we have $\mathbf{p}=(2,2)$ and $\mathbf{p}^*=(2+,2+)$. Hence, $(\mathbf{x}^*,\mathbf{p}^*)$, where $x^*_1=j_1$ and $x^*_2=j_2$, is a minimum competitive equilibrium --- for any small $\epsilon>0$, $(p_1+\epsilon,p_2+\epsilon)$ is an equilibrium price vector supporting $\mathbf{x}^*$.

The infimum price vector $\mathbf{p}$ and minimum equilibrium price vector $\mathbf{p}^*$ are both uniquely defined for any given instance (if no equilibrium exists at all, they can be defined as $\infty$). In general, there may be no equilibrium price vector associated with $\mathbf{p}^*$, as Example~\ref{example-no-min-eq} in Appendix~\ref{appendix-example} shows. Theorem~\ref{theorem-main}, however, implies that when the utility functions are consistent, there must exist an equilibrium price vector associated with $\mathbf{p}^*$ whenever $\mathbf{p}^*\neq \infty$, i.e., the instance has a minimum equilibrium price vector, which is $\mathbf{p}^*$.

To prove Theorem~\ref{theorem-main}, we will first begin with some essential preliminaries in Section~\ref{section-alg-pre}, after which we describe the price increment process and its properties in Section~\ref{section-price-increase}. We finally present the algorithm \algmin\ and outline its proof of correctness in Section~\ref{section-alg-main}. All proofs can be found in the Appendix.

\subsection{Preliminaries}\label{section-alg-pre}

\paragraph{Dynamic Demand Graph $G$ and $G^+$.}
Given a price vector $\mathbf{p}=(p_1,\ldots,p_m)$, define its associated \textit{demand bipartite graph} to be $G(\mathbf{p})=(U,V;E)$, where $U$ corresponds to the set of buyers and $V$ corresponds to the set of items, and $(i,j)\in E$ if $u_{ij}(p_j) > 0$ and $u_{ij}(p_j) \geq u_{ij'}(p_{j'})$ for any $j'\in V$. That is, for the given price vector $\mathbf{p}$, $N(i)$ gives the \textit{demand set} of buyer $i$, i.e., items that bring maximal, strictly positive, utility to buyer $i$. In the algorithm, when prices change, the demand set of every buyer will be updated accordingly, as also the edge set $E$.

Note that in graph $G$, there may be isolated buyers in $U$ which are priced out of the graph because their utility becomes non-positive for every item (\ie, for each item $j$,  $p_j \geq v_{ij}$ or $p_j > b_{ij}$). That is, $i\in U$ is isolated if for every $j\in V$ $u_{ij}(p_j) \leq 0$.
At the same time, items in $V$ may also be isolated since no buyer can get a maximal positive utility from them.
We denote $U^+=\{i\in U~|~N(i)\neq \emptyset\}$ and $V^+=\{j\in V~|~N(j)\neq \emptyset\}$ to be the set of non-isolated buyers and items in graph $G$, respectively, and define $G^+(\mathbf{p})=(U^+, V^+; E)$. Clearly $G^+$ is a subgraph of $G$: they have the same edge set $E$, and a vertex of $G$ is in $G^+$ only if it has a non-empty neighbor set. Note that as the algorithm develops, $U$ and $V$ are fixed and $E$ is the only dynamic set in $G$; whereas in $G^+$, both $U^+$ and $V^+$ are dynamic as well. The critical set of $G^+$ will play a central role in the algorithm.

\paragraph{Critical Set.}
The notion of the {\em critical set} in a bipartite graph is used centrally by our algorithm to identify over-demanded items in $G^+(p)$ that ``block" a competitive equilibrium at price $p$.
Given a bipartite graph $G=(U,V;E)$, the \textit{deficiency} of a subset $A\subseteq U$ is defined to be $\delta(A) = |A|-|N(A)|$, where $N(A)\subseteq V$ is the set of neighbors of $A$. For simplicity, we denote $N(\{i\})$ by $N(i)$ and $N(\{j\})$ by $N(j)$. The deficiency of graph $G$ is defined to be $\delta(G)=\max_{A\subseteq U}|A|-|N(A)|$, the maximum deficiency taken over all subsets of $U$. Since the deficiency of an empty set is 0, $\delta(G)\ge 0$. (Note that symmetrically, the deficiency $\delta(G)$ can be defined in terms of vertices in $V$ as well.)

A maximally deficient set is a subset $A\subseteq U$ such that $\delta(A)=\delta(G)$. A subset $A\subseteq U$ is called \textit{critical} if it is maximally deficient and contains no maximally deficient proper subset. Note that if $A_1$ and $A_2$ are maximally deficient, then so does $A_1\cap A_2$. Hence, there is a unique critical set~\cite{liu,irving}.
If graph $G$ has no non-empty critical set, we have $|A|\ge |N(A)|$ for any $A\subseteq U$; thus by Hall's theorem~\cite{graphbook}, there is a maximum matching of size $|U|$ in $G$.  Irving~\cite{irving} showed a simple polynomial time algorithm to find the critical set.

The following theorem about critical sets, proved in Appendix~\ref{appendix-theorem-critical}, will be needed for the algorithm.

\begin{theorem}\label{them-critical}
Given a bipartite graph $G=(U,V;E)$, let $A\subseteq U$ be the critical set of $U$. Then the following two claims hold:
\begin{itemize}
\item If we add some edges between $A$ and $N(A)$, $A$ remains the critical set of the resulting graph.
\item If we delete some edges between $U\setminus A$ and $N(A)$, $A$ remains the critical set of the resulting graph.
\end{itemize}
\end{theorem}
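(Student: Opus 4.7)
The plan is to verify, after each graph modification, the two properties that identify $A$ as the critical set: that $A$ remains maximally deficient, and that no proper subset of $A$ is maximally deficient. Since the excerpt notes that the critical set is unique (any two maximally deficient sets intersect to a maximally deficient set), these two properties will pin down $A$ as the critical set of the modified graph. Throughout, let $G'$ denote the modified graph with deficiency function $\delta'$ and neighborhood $N'$.

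For part one (adding edges between $A$ and $N(A)$), the argument is short. Any new edge from $A$ lands in $N(A)$, so $N'(A)=N(A)$ and $\delta'(A)=\delta(A)$. For any $S\subseteq U$, inserting edges can only weakly enlarge $N(S)$, hence $\delta'(S)\le \delta(S)\le \delta(A)=\delta'(A)$, showing $A$ is maximally deficient in $G'$. For a proper subset $B\subsetneq A$, criticality of $A$ in $G$ gives $\delta(B)<\delta(A)$, and then $\delta'(B)\le \delta(B)<\delta'(A)$, so $B$ is not maximally deficient.

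The second part (deleting edges between $U\setminus A$ and $N(A)$) is the main obstacle, because deleting edges can only raise deficiencies, possibly creating a new maximally deficient set outside of $A$. The easy observations are the same in spirit: vertices of $A$ keep all their edges, so $\delta'(A)=\delta(A)$, and for any $B\subseteq A$ we have $\delta'(B)=\delta(B)$, which handles the no-maximally-deficient-proper-subset condition as well as the equality at $A$. The work is to bound $\delta'(S)$ for an arbitrary $S\subseteq U$. I would split $S=S_1\sqcup S_2$ with $S_1=S\cap A$ and $S_2=S\setminus A$, and observe that edges from $S_2$ to items outside $N(A)$ are untouched while $N(S_1)\subseteq N(A)$; hence $N'(S)\supseteq N(S_1)\sqcup (N(S_2)\setminus N(A))$, a disjoint union. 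This gives $\delta'(S)\le |S_1|-|N(S_1)|+|S_2|-|N(S_2)\setminus N(A)|$.

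The crucial step is to show that the $S_2$-part is non-positive. A direct count yields $\delta(A\cup S_2)=|A|+|S_2|-|N(A)|-|N(S_2)\setminus N(A)|=\delta(A)+|S_2|-|N(S_2)\setminus N(A)|$, and the maximal deficiency of $A$ forces $\delta(A\cup S_2)\le \delta(A)$, giving $|S_2|\le |N(S_2)\setminus N(A)|$. Plugging this back in yields $\delta'(S)\le |S_1|-|N(S_1)|=\delta(S_1)\le \delta(A)=\delta'(A)$, which establishes that $A$ is maximally deficient in $G'$ and completes the proof. Combined with uniqueness of the critical set, $A$ is the critical set of $G'$ in both cases.
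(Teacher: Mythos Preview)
Your proof is correct and follows essentially the same approach as the paper: both arguments split an arbitrary set into its $A$-part and its complement, and both derive the key inequality $|S_2|\le |N(S_2)\setminus N(A)|$ by applying the maximal deficiency of $A$ to the set $A\cup S_2$. The only cosmetic difference is that the paper frames Part~2 as a proof by contradiction (assuming some $A'\neq A$ is critical in $G'$ and deriving a contradiction), whereas you argue directly that $\delta'(S)\le\delta'(A)$ for every $S$; the underlying computation is identical.
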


\subsection{Increasing Prices: Subroutine \increaseprice}\label{section-price-increase}

In this subsection, we describe the subroutine \increaseprice\ used by the main algorithm \algmin\ to increase prices of over-demanded items.
\increaseprice\ operates on a vector of prices $\mathbf{p}$ and a subset of buyers $S$, and raises the prices of items that belong to the neighborhood $N(S)$ of $S$ in $G^+(\mathbf{p})$ in a manner that preserves transitivity of all paths between $S$ and $N(S)$ at price $\mathbf{p}$. The output returned by \increaseprice\ is the smallest price vector $\mathbf{q}$ at which either an item $j \notin N(S)$ is added to the demand set of a buyer $i \in S$, or the price of some item $j \in N(i)$ reaches either the value $v_{ij}$ or the budget $b_{ij}$ of a buyer $i \in S$.

The main algorithm will use \increaseprice\ to increase the prices of items in the neighborhood $N(S)$ of the critical set $S$ of buyers in the dynamic demand graph $G^+(\mathbf{p})$. This set of items in $N(S)$ is over-demanded at price vector $\mathbf{p}$ when $S\neq \emptyset$, so there can be no equilibrium at $\mathbf{p}$ since not all buyers in $S$ can be matched to distinct items in $N(S)$. Ideally, we would like to increase prices of these over-demanded items as much as possible to make the algorithm efficient. However, to ensure that the main algorithm does converge to a (minimum) equilibrium price vector, it is crucial that {\em only} the prices of items in the neighborhood of the critical set are increased at every vector of prices through the course of the main algorithm. The properties of the utility functions, together with Theorem~\ref{them-critical} on the structure of critical sets, allows us to guarantee that there is a way to increase prices such that the structure of the critical set does not change {\em until} the price of an item $j \in N(i)$ reaches either the value $v_{ij}$ or budget $b_{ij}$ of a buyer $i\in S$, or becomes large enough to match the utility of an item outside $N(S)$.
(Note that with arbitrary utility functions and arbitrary price increments, an item $j\in N(i)$ may drop out of the demand set of $i$ before its utility reaches zero because it no longer gives maximal utility, which can also cause a change in the structure of the critical set. However, the consistency property imposed on the utility functions gives a way to increase prices (preserving transitivity) for which this cannot happen, so that it is enough to check for these three conditions. See Appendix~\ref{appendix-alg-increase} for a more detailed discussion.)

In general, the price vectors at which these three events ($p_j = v_{ij}$, $b_{ij}$, or $u_{ij'}^{-1}(p_{j'})$ for $j' \notin N(S)$) occur for different buyer-item pairs $(i,j)$ may also not be comparable.
Once again, the properties of the utility functions allow us to guarantee that this does not happen, as shown in \increaseprice\ and its proof of correctness in Appendix~\ref{appendix-alg-increase}.

The subroutine \increaseprice\ is given below. The connected component $C$ in \increaseprice\ is specified as an input by the main algorithm, and will turn out to be a connected component of the critical set and its neighborhood in $G^+$.

\begin{center}
\small{}\tt{} \fbox{
\parbox{6.4in}{
\footnotesize

\hspace{0.05in} \\[-0.01in] \increaseprice

\begin{enumerate}

\item Let $\mathbf{p}=(p_1,\ldots,p_m)$ denote the current price vector and $G^+(\mathbf{p})$ be the associated demand graph

\item Given a connected component $C=S \cup N(S)$ in $G^+$ 
\item For each edge $(i_0,j_0) \in C$
\begin{enumerate}
\item let $u^{i_0}_{\max} = \max_{j \notin N(S)} u_{i_0j}(p_{j})$ be the maximal utility $i_0$ obtains from items not in $N(S)$

\item let $q^{i_0j_0} = \min\left\{v_{i_0j_0}, b_{i_0j_0}, u^{-1}_{i_0j_0}(u^{i_0}_{\max})\right\}$, where $v_{i_0j_0}=u_{i_0j_0}^{-1}(0)$ and $b_{i_0j_0}$ is the budget

\item define vector $\mathbf{q}^{i_0j_0} = (q^{i_0j_0}_j)_{j \in N(S)}$ as follows:

\begin{itemize}
\item let $q^{i_0j_0}_{j_0}=q^{i_0j_0}$
\item for any pair of edges $(i,j),(i,j')\in C$ with $q^{i_0j_0}_{j'}$ being defined and $q^{i_0j_0}_{j}$ not, let \\ $q^{i_0j_0}_{j}=u_{ij}^{-1}\left(u_{ij'}(q^{i_0j_0}_{j'})\right)$  \ \
    (if $q^{i_0j_0}_{j}>q^{i'j}=\min\left\{v_{i'j},b_{i'j},u^{-1}_{i'j}(u^{i'}_{\max})\right\}$ for any \\
    $i'\in S\cap N(j)$, set $\mathbf{q}^{i_0j_0} = \infty$ and break the local "for" loop of Step~(3))
\end{itemize}
\end{enumerate}

\item Define $\mathbf{q}=(q_j)_{j\in N(S)}$ to be the minimum of the vectors $\mathbf{q}^{i_0j_0}$ for all $(i_0,j_0) \in C$

\item Set $p_j = q_j$ for each $j\in N(S)$ and $p_j=p_j$ for each $j\notin N(S)$.
\end{enumerate}

}}
\end{center}

\setlength{\baselineskip}{.5cm}

Since both $u_{ij}(\cdot)$ and $u_{ij}^{-1}(\cdot)$ can be computed in polynomial time, $u^{i_0}_{\max}$, $q^{i_0j_0}$, and therefore $\mathbf{q}^{i_0j_0}$, can be computed in polynomial time. Thus \increaseprice\ is in strongly polynomial time.

Note that for each $(i_0,j_0) \in C$, there could be different paths that lead to defining the value $q^{i_0j_0}_{j}$ for an item $j$ in Step~(3), and it is not a priori obvious that each of these leads to the same value. However, as the first claim of the following theorem implies, the price vector $\mathbf{q}^{i_0j_0}$ is indeed uniquely defined in Step~(3). It is also not obvious that these price vectors can all be compared; the second claim says that all different $\mathbf{q}^{i_0j_0}$ and $\mathbf{q}^{i'_0j'_0}$ are comparable; thus the minimum price vector $\mathbf{q}$ in Step~(4) is well-defined and satisfies $\mathbf{q}\neq \infty$. The last claim says that all buyers in $S$ continue to weakly prefer their neighbors in $N(S)$ in $G^+(\mathbf{p})$ with respect to price vector $\mathbf{q}$; this property is crucial to the analysis of the main algorithm.

\begin{theorem}\label{theorem-price-increase}
Suppose that the utility functions are consistent. Given initial price vector $\mathbf{p}$ in Step~(1), the following claims hold in \increaseprice:
\begin{itemize}
\item Price vector $\mathbf{q}^{i_0j_0}$ in Step~(3) is well-defined for any edge $(i_0,j_0) \in C$, and $\mathbf{q}^{i_0j_0}\ge \mathbf{p}$: any two alternate ways to define $q^{i_0j_0}_{j}$ in Step~(3) lead to the same value (that is, suppose there are $(i,j),(i,j')$ and $(i',j),(i',j'')$ in $C$, where both $q^{i_0j_0}_{j'}$ and $q^{i_0j_0}_{j''}$ have already been defined. The value $q^{i_0j_0}_{j}$ will be the same irrespective of which of these is used to define it).

\item For any edges $(i_0,j_0),(i'_0,j'_0) \in C$, the vectors $\mathbf{q}^{i_0j_0}, \mathbf{q}^{i'_0j'_0}$ are {\em comparable}; the minimum price vector $\mathbf{q}$ defined in Step~(4) exists and satisfies $q_{j_0}\le q^{i_0j_0}_{j_0}$ for all $(i_0,j_0)\in C$ (this implies that $\mathbf{q}\neq\infty$).

\item For any buyer $i\in S$ and item $j\in N(i)$, where $N(i)$ is the neighborhood of $i$ in $G^+(\mathbf{p})$, $i$ weakly prefers $j$ to all other items with respect to price vector $\mathbf{q}$. That is, $u_{ij}(q_j)\ge u_{ij'}(q_{j'})$ for any $j'\in V$.
\end{itemize}
\end{theorem}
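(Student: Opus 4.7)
The plan is to establish all three claims by exploiting the consistency property of the utility functions, which is designed precisely to make the propagation rule in Step~(3) behave well across the cycles that may occur in a connected component $C$ of $G^+(\mathbf{p})$. For claim~(1), I would first observe that the propagation rule $q^{i_0j_0}_{j} = u_{ij}^{-1}(u_{ij'}(q^{i_0j_0}_{j'}))$ is by construction exactly the rule that preserves the transitivity of every edge of $C$ at the new prices, and the bound $\mathbf{q}^{i_0j_0} \geq \mathbf{p}$ then follows from $q^{i_0 j_0} \geq p_{j_0}$ (each of $v_{i_0 j_0}$, $b_{i_0 j_0}$, and $u^{-1}_{i_0 j_0}(u^{i_0}_{\max})$ dominates $p_{j_0}$) together with the monotonicity of propagation. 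The subtle point is well-definedness in the presence of cycles in $C$: given two paths $P_1, P_2$ from $j_0$ to some item $j$ yielding putative values $q^{(1)}_j, q^{(2)}_j$, I would reduce to the case of a simple cycle and apply consistency with a buyer $i^\ast$ adjacent to $j_0$ on $P_2$. Since $i^\ast$'s two edges on $P_2$ equate utilities at $\mathbf{p}$, $i^\ast$ has a known (weak) preference between the endpoints of the relevant subpath of $P_1$; applying consistency to $P_1$ in both directions together with strict monotonicity of $u_{i^\ast}$ forces a local equality, and iterating along the cycle gives $q^{(1)}_j = q^{(2)}_j$. The value/budget hypotheses required by consistency are secured by the definition of $q^{i_0 j_0}$, which clamps to $\min\{v_{i_0j_0}, b_{i_0j_0}\}$.

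For claim~(2), monotonicity of propagation (both $u_{ij}$ and $u_{ij}^{-1}$ are strictly decreasing, so a larger price at $j'$ yields a larger price at $j$) combined with the well-definedness from claim~(1) implies that the entire vector $\mathbf{q}^{i_0 j_0}$ on its connected component is uniquely determined by any single scalar entry. Comparing $\mathbf{q}^{i_0 j_0}$ and $\mathbf{q}^{i'_0 j'_0}$ therefore reduces to comparing their values at a common item, and the vector with the smaller scalar there is entry-wise smaller, so any two are comparable. Hence the minimum $\mathbf{q}$ coincides with $\mathbf{q}^{i^*_0 j^*_0}$ for whichever choice minimizes the starting scalar, yielding $q_{j_0} \leq q^{i_0 j_0}_{j_0}$ for every $(i_0, j_0) \in C$. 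Moreover $\mathbf{q} \neq \infty$: for the minimizing $(i^*_0, j^*_0)$, we have $q^{i^*_0 j^*_0}_j \leq q^{i'j}_j = q^{i'j}$ for every $(i', j) \in C$ by the same comparability, so the abort condition of Step~(3) is never triggered.

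For claim~(3), I would split on whether $j' \in N(S)$ or not. If $j' \notin N(S)$, then $q_{j'} = p_{j'}$, so $u_{ij'}(q_{j'}) \leq u^i_{\max}$ by definition of $u^i_{\max}$; meanwhile claim~(2) gives $q_j \leq q^{ij}_j \leq u^{-1}_{ij}(u^i_{\max})$, so $u_{ij}(q_j) \geq u^i_{\max} \geq u_{ij'}(q_{j'})$. If $j' \in N(S)$, then $j' \in C$, and there is a path in $C$ from $j$ to $j'$ that is transitive at both $\mathbf{p}$ and $\mathbf{q}$ by the transitivity-preserving property of the propagation. Since $i$ weakly prefers $j$ to $j'$ at $\mathbf{p}$ (strictly when $j' \notin N(i)$, with equality when $j' \in N(i)$), consistency yields $u_{ij}(q_j) \geq u_{ij'}(q_{j'})$ at $\mathbf{q}$, where the value/budget hypothesis $q_j \leq \min\{v_{ij}, b_{ij}\}$ is again supplied by claim~(2). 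The main obstacle is claim~(1): the consistency property is phrased for linear paths while $C$ may contain arbitrary cycles, so the cycle-closure argument, the careful choice of buyer and direction in each application of consistency, and the verification of the value/budget hypotheses throughout require the most care.
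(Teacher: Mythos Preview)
Your outline for claims~(1) and~(3), and for the comparability part of claim~(2), matches the paper's approach: concatenate propagation paths into a single transitive path and apply consistency to the relevant buyer. One correction in claim~(1): the value/budget hypotheses along the path are \emph{not} secured merely by the clamp at the starting edge $(i_0,j_0)$; they hold because the abort check in Step~(3.c) guarantees $q^{i_0j_0}_j \le q^{i'j} \le \min\{v_{i'j},b_{i'j}\}$ for every neighbor $i'$ of every already-propagated item $j$. Your well-definedness argument only applies to the values computed prior to any abort, and it is this running feasibility check, not the initial clamp, that licenses each invocation of consistency.

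The genuine gap is in your argument that $\mathbf{q} \neq \infty$. You write: pick the minimizing $(i_0^*,j_0^*)$, then comparability gives $q^{i_0^*j_0^*}_j \le q^{i'j}$ for every $(i',j)\in C$, so the abort never fires. But this is circular. Your comparability argument (``the vector is uniquely determined by any single scalar entry'') and the very notion of a ``minimizing'' starting edge both presuppose that the propagated vectors are finite and well-defined across the whole component; well-definedness in turn rests on consistency, which requires the value/budget hypotheses to hold at every edge of the path---exactly what fails once a propagation aborts. If every $\mathbf{q}^{i_0j_0}$ aborts somewhere, you have no minimizing vector to point to. The paper closes this gap with an explicit iterative construction: if $\mathbf{q}^{i_0j_0}=\infty$, let $j_1$ be the first item assigned an infeasible value and $T_1$ the set of items already (feasibly) priced; restart from $(i_1,j_1)$ with $i_1=\arg\min_{i\in N(j_1)} q^{ij_1}$, and show that because $q^{i_1j_1}_{j_1}<q^{i_0j_0}_{j_1}$, all of $T_1$ remains feasible under the new propagation, so $T_1 \subsetneq T_2$. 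Iterating, the feasible set strictly grows until some $\mathbf{q}^{i_kj_k}\neq\infty$. You need this (or an equivalent non-circular argument) to finish claim~(2).
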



\subsection{Main Algorithm \algmin}\label{section-alg-main}

The algorithm is essentially composed of two parts --- Steps~(1-6) to eliminate all possible critical sets and Steps~(7-9) to determine if there is a feasible assignment to clear the market. Starting with a price of $0$ (or any reserve price vector) for all items, Step~(3) of the algorithm recursively increases prices to eliminate critical sets corresponding to over-demanded items using the subroutine \increaseprice.  The increased price vector returned by \increaseprice\ corresponds to to one of the following three events, at which the critical set $S$ or its neighbor set $N(S)$ of the dynamic demand graph might change (the other two possibilities, edges added between $S$ and $N(S)$ or edges deleted between $U\setminus S$ and $N(S)$, do not change the critical set by Theorem~\ref{them-critical}). In each case, we appropriately update the dynamic graphs $G$ and $G^+$ (Step (3)) as follows and proceed with the new demand graph and its critical set:
\begin{itemize}
\item A buyer becomes indifferent between his neighboring items in $N(S)$ and some item in $V^+ \setminus N(S)$, in which case we simply update $G$ and $G^+$ and return to the price increment process again.
\item The price $p_j$ of $j \in N(S)$ reaches the utility threshold of some neighbor $i$, \ie, $p_j = v_{ij}$ for some edge $(i,j) \in G^+$: We remove all edges incident to the buyer since its maximal utility has dropped to zero (note that at this new price vector, $i$'s utility for {\em all} items in $N(i)$ is zero). Clearly, the buyer will not obtain positive utility afterwards (\ie, permanently priced out of the algorithm) and become an isolated vertex in $G$, and thus, will not belong to $G^+$ any more.
\item The price $p_j$ of $j \in N(S)$ reaches the budget of some neighbor $i$, \ie, $p_j = b_{ij}$ for some edge $(i,j) \in G^+$: The edge $(i,j)$ will be deleted\footnote{Note that after this edge is deleted, new edges may immediately appear between $i$ and other utility-maximizing items $j'$ with $u_{ij'} > 0$ (if such exist) in the dynamic demand graph $G^+$.} (permanently) and a {\em marking} operation will be performed on items in $N(S)$. The marking process is because of the need to ensure that the price of such an item in any output returned by the algorithm is strictly larger than the price at which this edge was deleted, since otherwise the item will be over-demanded and the resulting output cannot be an equilibrium.
    We note that the status of an item, marked or unmarked, remains the same in all subsets of the demand graphs used by the algorithm.
    In addition, the set of marked items in the final stage of the algorithm has an interesting property related to weakly stable matchings (see Proposition~\ref{prop-alg-weakly-stable}).
\end{itemize}

Once the algorithm eliminates all critical sets in $G^+$, it exits Step (3) with some price vector, say $\mathbf{p}^*$. The set of items now contains some marked and some unmarked items; the marked items are those whose price must be strictly larger than $p^*_j$  in any equilibrium whereas the price of unmarked items need not be increased. It is possible that there is a buyer $i\in U^+$ such that some items in $N(i)$ are marked and some are unmarked (e.g., node $i_2$ in Example~\ref{example-alg}): This buyer $i$, who was indifferent between a marked item $j_1$ and an unmarked item $j_2$ at prices $\mathbf{p}^*$, will no longer be indifferent after the prices of marked items are raised, but will strictly prefer $j_2$. In this sense, graph $G^+(\mathbf{p}^*)$ does not correctly reflect the demand sets for buyers and a matching in $G^+(\mathbf{p}^*)$ need not be a competitive equilibrium: the graph must be further processed to ensure that every buyer is genuinely indifferent between his neighbors in the demand graph, \ie, either {\em all} items in his demand set are marked, or {\em all} items in his demand set are unmarked. The reduced graph $G'$ is defined in Step~(4) as a subgraph of $G^+(\mathbf{p}^*)$ containing only edges to unmarked items, and Step (5) deals with nonempty critical sets in $G'$ by marking all items in its neighborhood. Note that if the critical set and its neighborhood includes an edge which is tight on the budget constraint, such an edge must be deleted (since after the price increase due to marking, this edge will lead to negative utility), returning us to Step (3), since we can no longer guarantee that the critical set in $G^+(\mathbf{p}^*)$ is the same as before Step (5) (specifically, that it is empty).

Finally, since prices keep increasing and deleted edges will never appear again, the algorithm arrives at Step (6), where we construct $G^*$, a subgraph of $G'$, with the property that the demand set of every buyer in $U^+$ contains either all marked or all unmarked items, so $G^*$ captures the exact demand relation for all buyers.

Steps~(7-9) of the algorithm determine if there is a feasible assignment. Let $N^*(i)$ denote the neighbor set of $i$ in $G^*$ defined in Step~(6). Since $G^*$ correctly illustrates the demand sets of all buyers, any buyer $i$ with $N^*(i)\neq \emptyset$ obtains his maximal positive utility from items in $N^*(i)$. The construction of $G'$ and $G^*$ is such that since $G'$ has no critical set, $G^*$ does not either (Proposition~\ref{prop-no-critical}), so that all buyers in $U^+$ can be matched to an item in their demand sets (note that the set of buyers in $G^*$ is the same as $U^+$).
The only remaining condition that needs to be satisfied to guarantee a competitive equilibrium is that every item that has a price greater than $0$ (or its reserve price) can be matched to a buyer. We therefore construct graph $H\supseteq G^*$ in Step~(7) by adding those buyers who obtain maximal utility 0 from certain items back into consideration. That is, we add edge $(i,j)$ to graph $H$ if the buyer $i\notin U^+$ derives utility 0 from an unmarked item $j$ and can afford it ($b_{ij}\ge p_j$). (Only unmarked items are considered since if $j$ is marked, $p_j$ will be set to be $p_j+$ in Step~(8.b) and $i$ will obtain a negative utility from $j$.) These edges added in Step~(7) help us to assign as many items as possible. Finally, if there is a maximum matching of $H$ with all items being assigned and all buyers in $U^+$ being matched (to their neighbors in $G^*$), it is returned as an equilibrium allocation; if no such maximum matching exists, Step~(9) reports that there is no competitive equilibrium.

The algorithm is given formally in the next page. Example~\ref{example-alg} below illustrates partial stages of Step (3) and (5) of the algorithm.

\begin{exam}\label{example-alg}
\small
Consider the quasi-linear $v_{ij}-p_j$ utility model with five buyers $i_1,\ldots,i_5$ and three items $j_1,j_2,j_3$. All buyers have the same budgets for all three items, $\infty,190,2,1,1$, respectively. The value vectors for three items are $(1000,100,100)$ and $(200,11,11)$ for buyer $i_1$ and $i_2$, respectively, and $(20,10,10)$ for $i_3,i_4,i_5$. The stages of each run of Step~(3) of the \algmin\ are shown in the following first five figures (where black vertices on the left  denote the critical set and on the right denote marked items, respectively):
\begin{figure}[ht]
\begin{center}
\includegraphics[scale = 0.9]{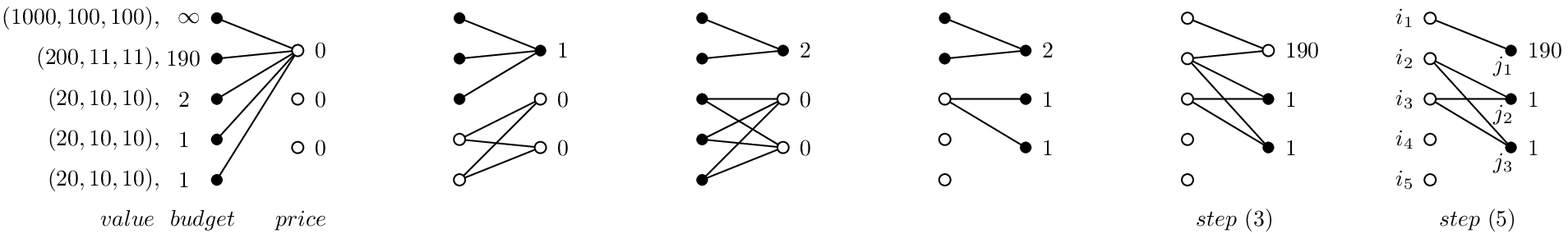}
\end{center}
\end{figure}
\\[-.2in]
\noindent After Step~(3) (the above fifth figure), $i_2$ has three neighbors, unmarked $j_1$ and marked $j_2,j_3$, which means that $i_2$ will strictly prefer $j_1$ to $j_2,j_3$. If we restrict on reduced subgraph $G'$ containing only unmarked item $j_1$, a new critical set arises, which is $\{i_1,i_2\}$. Hence, in Step~(5) the algorithm will set $j_1$ to be marked. When this happens, since the budget of $i_2$ on $j_1$ is tight (i.e., $b_{i_2j_1}=190=p_{j_1}$), we will have to delete edge $(i_2,j_1)$ in Step~(5.b). Eventually the algorithm will return assignment $\{(i_1,j_1),(i_2,j_2),(i_3,j_3)\}$ (or $\{(i_1,j_1),(i_2,j_3),(i_3,j_2)\}$) and minimum equilibrium price vector $(190+,1+,1+)$.
\end{exam}

\begin{center}
\small{}\tt{} \fbox{
\parbox{6.4in}{
\footnotesize

\hspace{0.05in} \\[-0.01in] \algmin

\begin{enumerate}
\setlength{\baselineskip}{.35cm}

\item Let $p_j=0$ for each item $j$, and set $j$ to be \textit{unmarked} 

\item Let $G=(U,V;E)$ and $G^+=(U^+,V^+;E)$ be the dynamic demand graph

\item While $G^+$ has a non-empty critical set
\begin{enumerate}
\item let $S\cup N(S)$ be a connected component in the subgraph induced by the critical set and its neighbors of $G^+$

\item for all items $j\in N(S)$, increase $p_j$ using \increaseprice\ on component $S\cup N(S)$; the new prices satisfy one of the following conditions:

\begin{enumerate}
\item[($\alpha$)] there are $i\in S$ and $j\notin N(S)$ (where either $b_{ij}>p_j$, or $b_{ij}=p_j$ and $j$ is unmarked) such that $i$ can get the same maximal utility from $j$ as it gets from items in $N(i)$
\item[($\beta$)] there are $i\in S$ and $j\in N(S)$ such that $v_{ij}=p_j$, i.e., $u_{ij}(p_j)=0$
\item[($\gamma$)] there are $i\in S$ and $j\in N(S)$ such that $b_{ij}=p_j$
\end{enumerate}

\item if the price of any marked item is strictly increased in the above step, \textit{unmark} the item
\item if condition ($\alpha$) is satisfied, make all such $(i,j)$ a new edge
\item if condition ($\beta$) is satisfied, delete all edges incident to all such $i$ in $G$

\item if neither condition ($\alpha$) nor ($\beta$) is satisfied but ($\gamma$) is satisfied
\begin{itemize}
\item set $j$ to be \textit{marked} for each $j\in N(S)$
\item for each pair $i\in S$ and $j\in N(S)$ satisfying $b_{ij}=p_{j}$, delete edge $(i,j)$ in $G$
\end{itemize}
\end{enumerate}

\item Define a reduced subgraph $G'$ from $G^+$ by deleting all edges $(i,j)\in E$ if $j$ is marked and all singleton vertices

\item If $G'$ has a non-empty critical set (denoted by $S'\subseteq U^+$)
\begin{enumerate}
\item set $j$ to be \textit{marked} for $j\in N'(S')$, where $N'(S')$ is the set of neighbors of $S'$ in $G'$
\item if there are $i\in S'$ and $j\in N'(i)$ such that $b_{ij}=p_j$
\begin{itemize}
\item for each pair $i\in S'$ and $j\in N'(i)$ satisfying $b_{ij}=p_j$, delete edge $(i,j)$ in $G$
\item goto Step~(3)
\end{itemize}
else goto Step~(6)
\end{enumerate}
\item Define a graph $G^*$ from $G^+$: for each $i\in U^+$ with at least one unmarked neighbor,\\ delete all edges connecting $i$ with a marked neighbor (i.e., delete $(i,j)\in E$ if $j$ is marked)


\item Extend $G^*$ to $H$ (with vertex set $U\cup V$) by adding all edges $(i,j)$ if $i\notin U^+$, $u_{ij}(p_j)=0$ \\ and $j$ is unmarked

\item If there is a maximum matching\footnote{} $M$ of $H$ (of size $m$) covering all buyers in $U^+$, output
\begin{enumerate}
\item an assignment of each buyer $i\in U$ according to $M$
\item a price for each $j\in V$ to be $p_j$ if $j$ is unmarked and $p_j+$ if $j$ is marked\footnote{}
\end{enumerate}

\item Else, return \textit{No Equilibrium Exists}\\[-0.15in]
\end{enumerate}

}}
\end{center}

\setlength{\baselineskip}{.5cm}

\footnotetext[7]{The existence of such maximum matching can be determined in polynomial time by, e.g., finding a maximum weighted matching of $H$ (assign a large weight for edges in $G^*$ and a small weight for edges in $H\setminus G^*$).}

\footnotetext[8]{When setting the price to be $p_j+$, we still need to keep the same preference for all marked items. The existence of such a vector is guaranteed by Proposition~\ref{prop-step-8b} and it can be computed by subroutine \increaseprice.}

To prove that the algorithm is correct, we need to prove two things:
\begin{itemize}
\item If the algorithm returns $(\mathbf{x},\mathbf{p})$, then it is a competitive equilibrium, and it is a minimum equilibrium.
\item If the algorithm does not return an output, there exists no competitive equilibrium.
\end{itemize}
Note that these statements imply immediately that if there exists a competitive equilibrium, there also exists a minimum equilibrium.

As discussed earlier, the algorithm only increases prices of items which are in the neighbor set of the critical set $S$ of $G^+(\mathbf{p})$ using the subroutine \increaseprice. 
This allows us to divide the analysis of the algorithm into stages, defined as follows. Divide the algorithm into \textit{phases} according to every execution of Steps~(3.e), (3.f) or (5,b), i.e., the deletion of any edge between the critical set and its neighbor set, because of either $v_{ij}=p_j$ (deleting all edges incident to $i$) or $b_{ij}=p_j$ (deleting edge $(i,j)$). Further, divide every phase into \textit{stages} according to every execution of Step~(3.d), i.e., the addition of edges between $S$ and $V\setminus N(S)$.
We have that the critical set of $G^+(\mathbf{p})$ remains the same within each stage until the final price at which that stage ends.

The following crucial lemma is proved for every stage in the algorithm, which guarantees that if the output $(\mathbf{x},\mathbf{p})$ returned by the algorithm is indeed an equilibrium, it is also a minimum equilibrium; this in particular implies that the price vector $\mathbf{p}$ at Steps (6, 7) of the algorithm is precisely the infimum price vector defined in Definition~\ref{def-min-eq}.

\begin{lemma}\label{lemma-minj}
Let $min_j$ be the minimum equilibrium price of item $j$ (if an equilibrium exists), and let $\mathbf{p}$ be the price vector at the end of any stage. Then, $p_j\le min_j$ for any $j$ for every stage in the algorithm. Further, if $j$ is marked, then $p_j<min_j$.
\end{lemma}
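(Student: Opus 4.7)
The plan is to prove the lemma by induction on stages, with the base case immediate since initially $p_j = 0$ and no items are marked. For the inductive step, assume at the start of a stage that $p_j \le min_j$ for all $j$, with strict inequality whenever $j$ is marked; fix a minimum equilibrium $(\mathbf{x}^*, \mathbf{p}^*)$ with $\mathbf{p}^* = min$, and let $\mathbf{q}$ denote the price vector at the end of the stage. Since the algorithm changes prices only on $N_C := N(S) \cap C$ during the stage (where $S$ is the critical set and $C$ the component being processed), I need only verify $q_j \le p^*_j$ for $j \in N_C$, with strict inequality whenever $j$ becomes marked.

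The core of the inductive step combines the third claim of Theorem~\ref{theorem-price-increase} with envy-freeness at $\mathbf{p}^*$. Writing $S_C = S \cap C$, the minimality of the critical set forces $|S_C| > |N_C|$ in every non-empty component, so if $S_C' \subseteq S_C$ denotes the buyers matched in equilibrium to items of $N_C$, the set $S_C \setminus S_C'$ is non-empty. For any $i \in S_C \setminus S_C'$, the equilibrium match $x^*_i$ lies outside $N_C$, so $q_{x^*_i} = p_{x^*_i} \le p^*_{x^*_i}$ by the inductive hypothesis. Theorem~\ref{theorem-price-increase} yields $u_{ij}(q_j) \ge u_{i x^*_i}(q_{x^*_i})$ for each $j \in N(i)$, monotonicity gives $u_{i x^*_i}(q_{x^*_i}) \ge u_{i x^*_i}(p^*_{x^*_i})$, and envy-freeness gives $u_{i x^*_i}(p^*_{x^*_i}) \ge u_{ij}(p^*_j)$; chaining these yields $q_j \le p^*_j$ for every $j \in N(i)$.

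To extend the bound to all of $N_C$, I propagate along the equilibrium matching: if $q_y \le p^*_y$ and some $i' \in S_C'$ satisfies $x^*_{i'} = y$, then the same chain applied to $i'$ delivers $q_j \le p^*_j$ for every $j \in N(i')$. To show the propagation reaches every item, let $R \subseteq N_C$ be the set of items not reached and $S_R = \bigcup_{j \in R}(N(j) \cap S_C)$; by construction $S_R \subseteq S_C'$ and $x^*$ maps $S_R$ injectively into $R$, so $|S_R| \le |R|$. A short deficiency calculation, using that items of $N_C$ have no $G^+$-neighbors in $S$ outside $S_C$, shows $S \setminus S_R$ has deficiency exactly $|S| - |N(S)|$; if $S_R \neq \emptyset$ this makes $S \setminus S_R$ a proper maximally-deficient subset of $S$, contradicting the minimality of the critical set. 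Hence $R = \emptyset$.

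The strict inequality for marked items is obtained by sharpening one inequality in the chain. For items marked in Step~(3.f), the stage ends with condition $\gamma$ satisfied but neither $\alpha$ nor $\beta$, so no buyer in $S$ is indifferent between items of $N(S)$ and items outside at $\mathbf{q}$; this makes $u_{ij}(q_j) > u_{i x^*_i}(q_{x^*_i})$ strict for $i \in S_C \setminus S_C'$, and strictness propagates through the chain to $q_j < p^*_j$ for every $j \in N_C$. For items marked in Step~(5.a), I run an analogous argument on the reduced graph $G'$: assuming $p_{j_0} = min_{j_0}$ for some $j_0 \in N'(S')$, envy-freeness combined with the strict inequalities $p^*_j > p_j$ for previously marked items forces every buyer in $S'$ reachable via iterated $x^*$ starting from $j_0$ to be matched to an unmarked item of $N'(S')$ at unchanged equilibrium price, producing a proper maximally-deficient subset of $S'$ and contradicting the minimality of $S'$. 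The main obstacle is the propagation-coverage step, which rests on this Hall-type deficiency calculation inside each component.
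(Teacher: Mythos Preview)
Your proposal is correct and follows the same overall scheme as the paper's proof (induction on stages, an alternating-path/propagation argument inside the processed component, and a Hall-type deficiency contradiction against the minimality of the critical set), but your execution differs in one useful way. The paper argues by contradiction: assuming the end-of-stage price overshoots, it invokes Proposition~\ref{prop-price-increase} to locate an \emph{intermediate} price vector at which some $q_{j_0}$ first equals $min_{j_0}$, builds alternating paths from $j_0$ in the demand graph at that intermediate price, and establishes \emph{equalities} $p^*_{x^*_i}=q_{x^*_i}$ along the paths before reaching the deficiency contradiction. You instead work directly at the end-of-stage price $\mathbf{q}$, chain weak inequalities using the third claim of Theorem~\ref{theorem-price-increase}, and start propagation from the buyers in $S_C\setminus S_C'$ (those matched outside $N_C$ in equilibrium) rather than from an item. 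This avoids the detour through Proposition~\ref{prop-price-increase}; the paper's route extracts more information (equalities rather than inequalities) but does not actually need it for the conclusion.

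Two small remarks. First, you should not ``fix a minimum equilibrium $(\mathbf{x}^*,\mathbf{p}^*)$ with $\mathbf{p}^*=\min$'': the existence of such an equilibrium is part of what Theorem~\ref{theorem-main} establishes. Your argument only uses that $(\mathbf{x}^*,\mathbf{p}^*)$ is \emph{some} equilibrium, so run it for an arbitrary equilibrium and take the infimum at the end; the paper does the analogous thing, picking for each $j_0$ an equilibrium achieving $p^*_{j_0}=min_{j_0}$. Second, your strictness step for Step~(3.f) is right in spirit but needs a short case analysis to handle the caveat in condition~($\alpha$): when $x^*_i\notin N(S)$ has $b_{i x^*_i}=q_{x^*_i}$ and $x^*_i$ is marked, the failure of~($\alpha$) says nothing directly, but then the inductive hypothesis forces $q_{x^*_i}=p_{x^*_i}<p^*_{x^*_i}\le b_{i x^*_i}=q_{x^*_i}$, so that case cannot arise; the remaining cases are covered by the failure of~($\alpha$) or~($\beta$) as you indicate.
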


To show that the output of the algorithm is indeed a competitive equilibrium, we establish the following claims. Proposition~\ref{prop-no-critical} guarantees that every buyer with a nonempty demand set  (recall that the demand set contains only items leading to strictly positive utility) in graph $H$ defined in Step~(7) can be matched to an item in his demand set.

\begin{prop}\label{prop-no-critical}
The graph $G^*$ in Step (6) of the algorithm has no non-empty critical set. Thus, there exists a maximum matching in $H$ in which every buyer in $U^+$ is matched.
\end{prop}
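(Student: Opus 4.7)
\textbf{Proof plan for Proposition~\ref{prop-no-critical}.}

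My plan is to first show that $G^*$ has no non-empty critical set, and then to extend a $U^+$-covering matching of $G^*\subseteq H$ to a maximum matching of $H$ via augmenting paths. Partition $U^+=U^A\sqcup U^B$, where $U^A$ is the set of buyers whose $G^+$-neighbors are all currently marked and $U^B$ is the rest. By Step~(6), $N_{G^*}(i)=N_{G^+}(i)$ lies entirely among the marked items for $i\in U^A$, while $N_{G^*}(i)$ is the set of unmarked $G^+$-neighbors of $i$ for $i\in U^B$. For any $A\subseteq U^+$, write $A=A_m\sqcup A_u$ with $A_m=A\cap U^A$ and $A_u=A\cap U^B$; since $N_{G^*}(A_m)$ and $N_{G^*}(A_u)$ sit in disjoint parts of $V$ (marked vs.\ unmarked), it suffices to bound each piece separately.

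For $A_m$: the most recent invocation of Step~(3) leaves $G^+$ without any non-empty critical set, and Step~(5) only adds marks without modifying the edge set of $G^+$, so at Step~(6) we have $|A_m|\le|N_{G^+}(A_m)|=|N_{G^*}(A_m)|$ for every $A_m\subseteq U^+$. For $A_u$: if Step~(5) did not fire, $G'$ has no non-empty critical set and $N_{G'}(A_u)=N_{G^*}(A_u)$ gives the bound directly. Otherwise Step~(5) marked the neighborhood $N'(S')$ of the critical set $S'$ of $G'$, and I invoke the following structural fact: deleting $S'$ and $N'(S')$ from $G'$ yields a subgraph $G_1$ with no non-empty critical set. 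This is a short deficiency-aggregation argument: if some $B\subseteq$ buyers of $G_1$ satisfied $|B|>|N_{G_1}(B)|$, then $S'\cup B$ would have strictly greater deficiency in $G'$ than $S'$ (since $N_{G'}(S'\cup B)$ is the disjoint union $N'(S')\sqcup N_{G_1}(B)$), contradicting the maximality of $S'$.

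Every $i\in S'$ has $N_{G'}(i)\subseteq N'(S')$, which is now entirely marked, so $S'\subseteq U^A$ and in particular $U^B\cap S'=\emptyset$; combined with $U^B\subseteq$ buyers of $G'$, this yields $U^B\subseteq$ buyers of $G_1$. Moreover, the currently unmarked items in $V^+$ coincide with the items of $G_1$, so $N_{G^*}(A_u)=N_{G_1}(A_u)$, and the critical-set-freeness of $G_1$ gives $|A_u|\le|N_{G_1}(A_u)|=|N_{G^*}(A_u)|$. Combining the two bounds yields $|A|\le|N_{G^*}(A)|$ for every $A\subseteq U^+$, so $G^*$ has no non-empty critical set, and Hall's theorem produces a matching $M^*\subseteq G^*$ covering $U^+$. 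Since $H\supseteq G^*$, $M^*$ remains a matching in $H$; augmenting along augmenting paths in the bipartite graph $H$ leads to a maximum matching, and because augmenting paths have unmatched endpoints and only swap partners at internal vertices, every buyer in $U^+$ (already matched in $M^*$) stays matched throughout, producing the desired maximum matching of $H$ covering $U^+$.

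The main obstacle is the structural critical-set lemma used in the $A_u$ analysis: it is the bridge between the marking operation in Step~(5) and the bipartite matching analysis of $G^*$, and is the non-obvious ingredient that makes the $U^A/U^B$ decomposition argument go through.
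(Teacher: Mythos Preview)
Your proof is correct and takes essentially the same approach as the paper: both partition $U^+$ by whether a buyer's neighbors in $G^*$ are all marked or all unmarked, handle the unmarked side via the deficiency-aggregation argument against the critical set $S'$ of $G'$ (your ``structural critical-set lemma'' is exactly the computation the paper does in its second case), and handle the marked side using the fact that $G^+$ exits Step~(3) with empty critical set. The only cosmetic difference is that you verify Hall's condition directly for every subset while the paper assumes a non-empty critical set of $G^*$ and derives a contradiction; your packaging of the $G_1$ lemma is slightly cleaner, but the content is identical.
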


The following fact, which simply follows from Step~(6) of the algorithm, guarantees that if a buyer in $U^+$ is matched to a marked neighbor in $H$, he will not prefer any unmarked items as long as the increment of prices of marked items in Step~(8.b) is sufficiently small.


\begin{fact}\label{fact-all-mark}
The items in the neighborhood $N^*(i)$ of a buyer $i \in U^+$ in graph $G^*$ are either all marked or all unmarked.
\end{fact}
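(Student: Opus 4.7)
The plan is to argue directly from the construction of $G^*$ in Step~(6) of \algmin, which is essentially a case analysis on whether the buyer $i \in U^+$ has any unmarked neighbor in $G^+$ at the time $G^*$ is constructed. The key observation is that Step~(6) applies a single, buyer-local rule: for each $i \in U^+$ with at least one unmarked neighbor in $G^+$, \emph{all} edges from $i$ to marked items are removed; and for every other $i \in U^+$, no edges at $i$ are touched.

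First I would note that for every $i \in U^+$ the neighborhood $N^+(i)$ of $i$ in $G^+$ is nonempty by the definition of $U^+$, so after Step~(6) $i$ still has at least one incident edge --- either an original unmarked edge survives, or $i$ had only marked neighbors to begin with and none of them were deleted. Then I would split into two cases. Case~1: $i$ has at least one unmarked neighbor in $G^+$. Then Step~(6) deletes every edge from $i$ to a marked item, so $N^*(i)$ contains only unmarked items. Case~2: $i$ has no unmarked neighbor in $G^+$, i.e., every $j \in N^+(i)$ is marked. Then the trigger condition of Step~(6) does not apply to $i$, so no edge at $i$ is removed, and $N^*(i) = N^+(i)$ consists entirely of marked items. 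In both cases $N^*(i)$ is homogeneous with respect to marking, which is exactly the statement of the fact.

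There is no real obstacle here --- the fact is a direct bookkeeping consequence of how Step~(6) is stated, and the only thing worth emphasizing in the write-up is that the status \emph{marked}/\emph{unmarked} of each item is a global attribute that does not change between the construction of $G^+$ at the entry to Step~(6) and the construction of $G^*$ itself (as noted in the description of the marking process). Once that is pointed out, the case split above finishes the proof in a couple of lines.
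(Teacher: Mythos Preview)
Your proposal is correct and matches the paper's own reasoning: the paper simply states that the fact ``follows from Step~(6) of the algorithm,'' and your two-case argument is precisely the unpacking of that step. There is nothing to add.
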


In addition, we need to ensure that the prices of all marked items can be increased in Step~(8.b) in such a way that the demand structure does not change at the increased prices: without this, we cannot be sure that a matching in $H$ will indeed correspond to an equilibrium. (Of course, if we were to restrict ourselves to quasi-linear utility functions with budget constraints, then increasing all prices of marked items by the same small $\epsilon>0$ changes all utilities by the same amount, so that the structure of the demand graph is preserved; that the prices of marked items can be increased without changing the demand structure needs proof because we allow more general utility functions.)
The following lemma shows that there exists a strictly higher price vector inducing the same demand sets, so that no buyer strictly prefers one marked item in his demand set to another after the prices are increased. Thus, assigning a buyer in $U^+$ to {\em any} item in his demand set indeed maximizes his utility even after each item's price has been increased.
\begin{prop}\label{prop-step-8b}

Let $\mathbf{p}$ be the price vector when defining graph $H$ in Step~(7). Consider graph $H$: let $T$ be the set of marked items and $S$ be the neighbor set of $T$. Then, for any $\epsilon>0$, there exists $q_j = p_j+\epsilon_j$ for all $j \in T$, where $0<\epsilon_j<\epsilon$ can be arbitrarily small, and $q_j=p_j$ for $j\notin T$, such that $(i,j) \in G^+(\mathbf{q})$ if and only if $(i,j) \in G^+(\mathbf{p})$ for any $i\in S$ and $j\in T$.
\end{prop}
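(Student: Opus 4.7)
The plan is to apply a suitably scaled-down version of the price-propagation underlying subroutine \increaseprice\ on each connected component of the subgraph of $G^+(\mathbf{p})$ induced on $S\cup T$, and then use continuity of the utility functions to ensure the edge structure between $S$ and $T$ is preserved for arbitrarily small increments.

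I would first establish two preliminary facts. \emph{Structural fact:} for every $i \in S$, all of $i$'s neighbors in $G^+(\mathbf{p})$ lie in $T$. Indeed, $i \in S$ means $i$ is adjacent in $H$ to some $j \in T$; since Step~(7) only augments $G^*$ with edges to unmarked items, the edge $(i,j)$ must come from $G^*$, which by the construction in Step~(6) together with Fact~\ref{fact-all-mark} forces all of $i$'s $G^+$-neighbors to be marked. \emph{Quantitative fact:} for every $(i,j) \in G^+(\mathbf{p})$ with $i \in S, j \in T$, we have strict inequalities $p_j < v_{ij}$ (since $u_{ij}(p_j) > 0$) and $p_j < b_{ij}$ (any edge reaching a tight budget was deleted in Step~(3.f) or Step~(5.b), and once deleted is never added back). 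In addition, for any such $i$ and any $j' \notin T$, either $u_{ij'}(p_{j'}) \leq 0$, or $u_{ij'}(p_{j'}) < u_{ij}(p_j)$ strictly; otherwise $j'$ would be a $G^+$-neighbor of $i$, contradicting the structural fact.

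Next, decompose the subgraph of $G^+(\mathbf{p})$ induced on $S \cup T$ into connected components and treat each component $C$ independently. For $C$, pick a reference item $j_0 \in C \cap T$ and a parameter $\delta > 0$; set $q_{j_0} = p_{j_0} + \delta$, and propagate along transitive paths in $C$ by $q_j = u_{ij}^{-1}\!\left(u_{ij'}(q_{j'})\right)$ whenever $(i,j), (i,j') \in C$ and $q_{j'}$ has already been defined. The first claim of Theorem~\ref{theorem-price-increase}, which uses the consistency property of the $u_{ij}$, ensures this propagation is well-defined independently of the chosen path. By continuity of the $u_{ij}$ and their inverses, each $q_j$ is a continuous function of $\delta$ with $q_j \to p_j$ as $\delta \to 0$, so for $\delta$ sufficiently small all $q_j - p_j < \epsilon$. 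For $j \notin T$, set $q_j = p_j$.

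To finish, I would verify that for small enough $\delta$, the edges between $S$ and $T$ in $G^+(\mathbf{q})$ exactly match those in $G^+(\mathbf{p})$. For the ``if'' direction, if $(i,j) \in G^+(\mathbf{p})$ with $i \in S, j \in T$, the construction makes $i$ retain equal utility from all its $T$-neighbors in $C$; continuity combined with the quantitative fact preserves $u_{ij}(q_j) > 0$, and continuity together with $q_{j'} = p_{j'}$ for $j' \notin T$ preserves $u_{ij}(q_j) \geq u_{ij'}(q_{j'})$, so $(i,j) \in G^+(\mathbf{q})$. Conversely, if $(i,j) \notin G^+(\mathbf{p})$ with $i \in S, j \in T$, then either $u_{ij}(p_j) \leq 0$, which is preserved since $q_j \geq p_j$ and $u_{ij}$ is monotone, or $i$ strictly prefers some other neighbor $j'' \in T$ of $i$ at $\mathbf{p}$, a strict inequality preserved by continuity for small $\delta$. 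The main obstacle is arranging all these finitely many strict inequalities to hold simultaneously for a single choice of $\delta$; this is resolved by taking $\delta$ below the minimum of the finitely many continuity thresholds, giving the required $\mathbf{q}$.
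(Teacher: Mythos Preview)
Your proposal is correct and follows essentially the same approach as the paper. The paper's proof likewise first establishes that every $i\in S$ has all its $G^+$-neighbors in $T$ (so $N(S)=T$), records the strict inequalities $p_j<\min\{v_{ij},b_{ij}\}$ for such edges, and then uses continuity together with the \increaseprice\ machinery (via Proposition~\ref{prop-price-increase}, which is exactly your ``scaled-down propagation with upper bounds'') to raise all marked-item prices by an arbitrarily small amount while preserving the demand structure. Your version is slightly more explicit in two harmless ways: you decompose $S\cup T$ into connected components and construct the propagation by hand rather than invoking Proposition~\ref{prop-price-increase} as a black box, and you spell out the converse direction (non-edges stay non-edges) which the paper leaves implicit in the phrase ``the demand graph structure remains the same.''
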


Putting these lemmas together ensures that the final allocation $\mathbf{x}$ matches all buyers in $U^+$ to an item which maximizes their utility. That is, the matching $\mathbf{x}$ returned by the algorithm satisfies all the envy-free conditions necessary for a competitive equilibrium.

Finally, requiring the matching defined by $\mathbf{x}$ to have size $m$ ensures that $(\mathbf{x},\mathbf{p})$ satisfies the market clearing condition as well. According to the definition of competitive equilibrium, we have the following conclusion.

\begin{lemma}\label{lemma-min-eq}
For any given instance of the problem, if \algmin\ outputs $(\mathbf{x},\mathbf{p})$, then it is a competitive equilibrium.
\end{lemma}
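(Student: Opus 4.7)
The plan is to verify, for the output tuple returned by \algmin, the two defining conditions of a competitive equilibrium: market clearing and envy-freeness. By Step (8) of the algorithm, the output allocation $\mathbf{x}$ is a maximum matching in $H$ of size $m$ covering every buyer in $U^+$ (whose existence is guaranteed by Proposition~\ref{prop-no-critical}), and the conceptual price vector assigns $p_j$ to unmarked items and $p_j+$ to marked items. First, I would invoke Proposition~\ref{prop-step-8b} to replace the ``$+$'' notation by an actual vector $\mathbf{q}$ with $q_j = p_j + \epsilon_j$ for marked $j$ (with $\epsilon_j > 0$ arbitrarily small) and $q_j = p_j$ for unmarked $j$, chosen so that the edges of $G^+(\mathbf{q})$ restricted to the marked items coincide with those of $G^+(\mathbf{p})$. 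Market clearing is then immediate: since $|\mathbf{x}| = m$, every item is matched to a buyer, so the condition on unassigned items is vacuous.

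Next, for envy-freeness, I would split buyers into three cases according to how they are handled by the matching. \textbf{(i)} For $i \in U^+$ matched to $j \in N^*(i)$: by Fact~\ref{fact-all-mark}, $N^*(i)$ is either all marked or all unmarked. If all marked, then $i$ is a neighbor of marked items in $G^+(\mathbf{p})$, so Proposition~\ref{prop-step-8b} guarantees $(i,j) \in G^+(\mathbf{q})$, which by definition of the demand graph means $u_{ij}(q_j) > 0$ and $u_{ij}(q_j) \geq u_{ij'}(q_{j'})$ for every $j'$. If all unmarked, then by Step (6), $i$ has no marked neighbor in $G^+(\mathbf{p})$ to begin with, so $u_{ij}(q_j)=u_{ij}(p_j)>0$ is $i$'s maximum utility at $\mathbf{p}$; for any other item $j'$, the price is weakly higher at $\mathbf{q}$, so monotonicity gives $u_{ij'}(q_{j'}) \leq u_{ij'}(p_{j'}) \leq u_{ij}(p_j) = u_{ij}(q_j)$. \textbf{(ii)} For $i \notin U^+$ matched to $j$ via an edge in $H \setminus G^*$: by construction of $H$ in Step (7), $u_{ij}(p_j) = 0$ and $j$ is unmarked, hence $u_{ij}(q_j) = 0 \geq 0$; since $i \notin U^+$, every $u_{ij'}(p_{j'}) \leq 0$, and monotonicity yields $u_{ij'}(q_{j'}) \leq 0$ for every $j'$. \textbf{(iii)} For unmatched buyers $i$: since all of $U^+$ is matched, $i \notin U^+$, so $u_{ij'}(p_{j'}) \leq 0$ for every $j'$, and hence $u_{ij'}(q_{j'}) \leq 0$ by monotonicity. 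All three cases together establish envy-freeness, which combined with market clearing proves $(\mathbf{x},\mathbf{q})$ is a competitive equilibrium.

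The main obstacle is case (i) of the envy-freeness verification. At the final price $\mathbf{p}$ a buyer in $U^+$ may be indifferent between marked and unmarked items in his demand set $N(i)$ of $G^+(\mathbf{p})$, and after raising the prices of the marked items those preferences change; one must ensure the algorithm has arranged matters so that $i$ is matched to an item which continues to maximize his utility at $\mathbf{q}$. Fact~\ref{fact-all-mark}, derived from the explicit pruning in Step (6), cleanly separates the homogeneously-unmarked case (handled by monotonicity alone) from the homogeneously-marked case, and in the latter Proposition~\ref{prop-step-8b} supplies precisely the price increment that preserves the entire demand structure between $i$ and the marked items. Once these two ingredients are in place the remaining verification is routine.
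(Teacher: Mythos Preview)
Your overall approach---verifying market clearing from the size-$m$ matching, then splitting envy-freeness by whether $i\in U^+$, and within $U^+$ by whether $N^*(i)$ is all marked or all unmarked, invoking Proposition~\ref{prop-step-8b} for the marked sub-case and monotonicity for the rest---is exactly the paper's route. The paper's proof is terser but follows the same skeleton.

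There is, however, a genuine gap. In cases (ii) and (iii) you assert ``since $i\notin U^+$, every $u_{ij'}(p_{j'})\le 0$,'' and in the unmarked sub-case of (i) you assert ``$u_{ij}(p_j)$ is $i$'s maximum utility at $\mathbf{p}$.'' Both rely on identifying the algorithm's dynamic $G^+$ with the pure demand graph at the final price $\mathbf{p}$. But $G$ has had edges \emph{permanently deleted} in Steps~(3.f) and~(5.b) whenever $b_{ij'}=p_{j'}$, and such a $j'$ can still satisfy $u_{ij'}(p_{j'})>0$ at the final price (provided $p_{j'}$ was never subsequently raised). For such $j'$ your chain $u_{ij'}(q_{j'})\le u_{ij'}(p_{j'})\le 0$ (or $\le u_{ij}(p_j)$) breaks at the second inequality. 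The paper's proof plugs exactly this hole: it observes that whenever $i$ is disconnected from $j'$ in the algorithm's graph yet $v_{ij'}>p_{j'}$ and $b_{ij'}=p_{j'}$, the item $j'$ must still be \emph{marked} (it was marked at deletion, and unmarking in Step~(3.c) requires a strict price increase, which would push $p_{j'}>b_{ij'}$). Hence the output price satisfies $q_{j'}>p_{j'}=b_{ij'}$, giving $u_{ij'}(q_{j'})=-1<0$, and envy-freeness is restored. You need this one extra observation to close the argument.

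A minor separate slip: in the unmarked sub-case of (i) you write ``by Step~(6), $i$ has no marked neighbor in $G^+(\mathbf{p})$ to begin with.'' That is not what Step~(6) says---it deletes marked edges precisely when $i$ has at least one unmarked neighbor, so $i$ may well have had marked neighbors in $G^+$. This slip is harmless for the rest of your chain, but the real issue above is not.
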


The following Lemma~\ref{lemma-min-exist} immediately implies the second statement needed for the proof of correctness, by proving the contrapositive. The proof of this lemma proceeds by showing, using Lemma~\ref{lemma-minj}, that the matching $\mathbf{x}$ defined by any competitive equilibrium $(\mathbf{x},\mathbf{p})$ must be contained in the graph $H$ in Step (8) of the algorithm; in which case the algorithm will return an output, yielding a contradiction.

\begin{lemma}\label{lemma-min-exist}
For any given instance of the problem, if a competitive equilibrium exists, then \algmin\ will output one.
\end{lemma}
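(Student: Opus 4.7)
The plan is to prove the contrapositive of Lemma~\ref{lemma-min-exist}: assuming a competitive equilibrium exists, show that \algmin\ reaches Step~(8) with a valid matching, contradicting the output ``No Equilibrium Exists''. The starting point is Lemma~\ref{lemma-minj}, which identifies the price vector $\mathbf{p}$ at Steps~(6--7) with the infimum equilibrium price vector. I would invoke the minimum competitive equilibrium $(\mathbf{x}^*,\mathbf{p}^*)$ in the sense of Definition~\ref{def-min-eq}: there is a fixed allocation $\mathbf{x}^*$ such that for every $\epsilon > 0$, perturbing $\mathbf{p}$ upward by $\epsilon_j \in (0,\epsilon]$ at every marked item and by $0$ at every unmarked item yields a genuine equilibrium $(\mathbf{x}^*, \mathbf{q})$, with $\mathbf{q} \to \mathbf{p}$ as $\epsilon \to 0$. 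Because of the dummy buyers, $|M^*| = m$ where $M^* = \{(i,x^*_i) : x^*_i \ne \emptyset\}$, so establishing $M^* \subseteq H$ and that $M^*$ saturates $U^+$ will contradict the failure of Step~(8).

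By continuity of each $u_{ij}$ on $[0,b_{ij}]$, letting $\epsilon \to 0$ in the envy-free inequality $u_{i,x^*_i}(q_{x^*_i}) \ge u_{ij}(q_j)$ yields the weak version $u_{i,x^*_i}(p_{x^*_i}) \ge u_{ij}(p_j)$ for every $i$ matched by $\mathbf{x}^*$ and every item $j$. I would then verify $(i,x^*_i) \in H$ by case analysis. If $i \notin U^+$, this inequality combined with individual rationality forces $u_{i,x^*_i}(p_{x^*_i}) = 0$, and then $p_{x^*_i} = p^*_{x^*_i} \le b_{i,x^*_i}$ by strict monotonicity; furthermore $x^*_i$ must be unmarked, because if it were marked Lemma~\ref{lemma-minj} would give $p_{x^*_i} < p^*_{x^*_i}$ and hence $u_{i,x^*_i}(p_{x^*_i}) > u_{i,x^*_i}(p^*_{x^*_i}) \ge 0$, contradicting $u_{i,x^*_i}(p_{x^*_i}) = 0$. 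These are exactly the conditions under which Step~(7) inserts $(i,x^*_i)$ into $H \setminus G^*$. If instead $i \in U^+$, the same weak inequality shows $x^*_i$ achieves the maximum utility $i$ can obtain at $\mathbf{p}$, and since that maximum is strictly positive we conclude $x^*_i \in N(i)$ in $G^+(\mathbf{p})$. To upgrade to $(i,x^*_i) \in G^*$, I would examine whether $N(i)$ mixes marked and unmarked items; in the mixed case, the perturbation argument --- marked prices move strictly up while unmarked prices are held fixed --- combined with strict monotonicity forces $x^*_i$ to be unmarked so as to preserve envy-freeness at $\mathbf{q}$, and $x^*_i$ thus survives the edge deletion of Step~(6). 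The remaining sub-cases (all of $N(i)$ unmarked or all of $N(i)$ marked) are immediate from the construction of $G^*$, with $x^*_i \in N(i)$ inheriting the common marked/unmarked status.

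That $M^*$ saturates $U^+$ follows by the same continuity argument in the contrapositive direction: if some $i \in U^+$ had $x^*_i = \emptyset$, equilibrium envy-freeness at every $\mathbf{q}$ would read $u_{ij}(q_j) \le 0$ for every $j$, and taking $\epsilon \to 0$ would give $u_{ij}(p_j) \le 0$ for every $j$, contradicting the definition of $U^+$. The main technical delicacy is the mixed sub-case for $i \in U^+$: I need to use the fact that only marked prices are perturbed upward and then exploit strict monotonicity on the marked components to strictly separate the utilities of marked and unmarked neighbors of $i$ at $\mathbf{q}$, so that the min-equilibrium allocation of any mixed-demand-set buyer necessarily lands on the unmarked side. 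Once this separation step is pinned down, the rest is routine: combining $M^* \subseteq H$ with $|M^*| = m$ and saturation of $U^+$ yields a matching witnessing the success of Step~(8), contradicting the supposition that \algmin\ outputs ``No Equilibrium Exists''.
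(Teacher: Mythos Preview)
Your argument is circular. You claim that Lemma~\ref{lemma-minj} ``identifies the price vector $\mathbf{p}$ at Steps~(6--7) with the infimum equilibrium price vector,'' but Lemma~\ref{lemma-minj} gives only the one-sided bound $p_j \le min_j$ (strict for marked $j$). The reverse inequality, and hence the identification of $\mathbf{p}$ with the infimum together with the identification of marked items with the set where the infimum is not attained, is a \emph{consequence} of the algorithm actually outputting an equilibrium---which is exactly Lemma~\ref{lemma-min-exist}. Likewise, when you ``invoke the minimum competitive equilibrium $(\mathbf{x}^*,\mathbf{p}^*)$ in the sense of Definition~\ref{def-min-eq}'' and assert the existence of equilibria $(\mathbf{x}^*,\mathbf{q})$ with $\mathbf{q}\to\mathbf{p}$, you are assuming the conclusion of Theorem~\ref{theorem-main}, whose proof requires the very lemma you are proving. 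Without those equilibria converging to $\mathbf{p}$, your limiting inequality $u_{i,x^*_i}(p_{x^*_i}) \ge u_{ij}(p_j)$ is unavailable, and the entire $i\in U^+$ case collapses.

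The paper avoids this by starting from an \emph{arbitrary} equilibrium $(\mathbf{x}^*,\mathbf{p}^*)$ and never assuming $\mathbf{p}^*$ is close to $\mathbf{p}$. The obstacle is then that envy-freeness lives at $\mathbf{p}^*$ while the demand graph $G^*$ is defined at $\mathbf{p}$, and since $p_j \le p^*_j$ the comparison $u_{ij}(p_j) \ge u_{ij}(p^*_j)$ moves the wrong way for a direct transfer. The paper's missing ingredient in your proposal is an alternating-path argument in $G^*$ relative to $M^*$: assuming some $i_0\in U^+$ has $x^*_{i_0}\notin N^*(i_0)$, one traces an alternating path from an item $j_\ell \in N^*(U^+)$ that is allocated outside $U^+$ back to $i_0$, and shows inductively along the path that $p^*_{j_k}=p_{j_k}$. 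This equality at each step is what lets the envy-free inequality at $\mathbf{p}^*$ be read off at $\mathbf{p}$, ultimately forcing $i_0$ to strictly prefer an item in $N^*(i_0)$ in the equilibrium---a contradiction. Your $i\notin U^+$ case is essentially right and does not need the limit, but the $i\in U^+$ case cannot be done by a direct inequality; it genuinely needs the combinatorial alternating-path structure.
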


The last result we need is about the runtime of the algorithm.

\begin{lemma}\label{lemma-polytime}
The algorithm \algmin\ runs in strongly polynomial time.
\end{lemma}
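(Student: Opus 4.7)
The plan is to bound the total number of iterations of the while loops in Steps~(3) and (5) by $O(nm)$, and then to observe that each iteration performs only polynomial work using strongly polynomial subroutines. I would adopt the phase/stage decomposition introduced just before Lemma~\ref{lemma-minj}: phases are delimited by the permanent edge deletions in Steps~(3.e), (3.f), and (5.b), while stages within a phase are delimited by edge additions in Step~(3.d).

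First, I would show that there are at most $O(nm)$ phases. Each phase boundary permanently removes an edge from the demand graph: an edge $(i,j)$ deleted in Step~(3.e) has $v_{ij}=p_j$, so since prices are monotonically non-decreasing throughout the algorithm and $u_{ij}(\cdot)$ is strictly decreasing, $u_{ij}(p_j)\le 0$ remains true forever after and $(i,j)$ never re-enters $G^+$; an edge deleted in Step~(3.f) or~(5.b) has $b_{ij}=p_j$, after which any further increase in $p_j$ puts $j$ strictly above $i$'s budget, so again $(i,j)$ cannot reappear. Since the demand graph starts with at most $(n+m)m$ edges, the number of permanent deletions---and hence phases---is bounded by $O(nm)$.

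Second, I would bound the number of stages across the entire algorithm by $O(nm)$. By the property of \increaseprice\ emphasized in the discussion preceding Theorem~\ref{theorem-price-increase}, the consistency condition ensures that an edge incident to a critical-set buyer drops out of the dynamic demand graph only when its utility reaches zero---i.e., only via the permanent deletions accounted for above. Hence every edge added in Step~(3.d) remains in $G^+$ until it is explicitly and permanently removed, so each of the at most $(n+m)m$ possible edges can be added by Step~(3.d) at most once, giving a total of $O(nm)$ stages.

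Finally, each iteration performs only polynomial work: locating connected components of the critical set and its neighborhood uses Irving's polynomial-time procedure~\cite{irving}; the call to \increaseprice\ runs in strongly polynomial time (by the remark following its pseudocode, under the standing assumption that $u_{ij}$ and $u_{ij}^{-1}$ are polynomial-time computable); and the bookkeeping in Steps~(3.c)--(3.f) and~(5) is linear in the edge count. The post-processing Steps~(6)--(9) are executed once and amount to constructing $G^*$ and $H$ and computing a maximum (weighted) matching of $H$, all strongly polynomial. Multiplying the $O(nm)$ iteration bound by the polynomial per-iteration cost gives a strongly polynomial total runtime. The hard part is the second bound above: ruling out exponential oscillation of edge additions hinges on the consistency condition, without which an item could repeatedly enter and leave a critical-set buyer's demand set under successive price increments and our per-edge counting argument would break down.
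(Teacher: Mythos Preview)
Your overall structure matches the paper's: the phase/stage decomposition, the $O(nm)$ bound on phases via permanent edge deletions, and the per-iteration cost analysis are all essentially the same as the paper's argument.

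The gap is in your second bound. You claim $O(nm)$ stages \emph{in total}, arguing that an edge added in Step~(3.d) ``remains in $G^+$ until it is explicitly and permanently removed.'' But the property you invoke from Theorem~\ref{theorem-price-increase} says only that for a buyer $i$ in the \emph{current} critical set $S$, items in $N(i)$ remain utility-maximal after the price increment of that single stage. Once the stage ends and the critical set is recomputed, $i$ may no longer belong to the new critical set $S'$. If some item $j\in N(i)$ lies in $N(S')$, its price rises while other items in $N(i)\setminus N(S')$ do not, and $j$ drops out of $N(i)$ with $u_{ij}(p_j)$ still strictly positive---precisely the ``edges deleted between $U\setminus A$ and $N(A)$'' case of Theorem~\ref{them-critical}. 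Nothing you have said rules out $i$ later re-entering a critical set and $(i,j)$ being re-added via Step~(3.d). So ``each edge is added at most once globally'' is not established.

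The paper avoids this by giving the weaker bound: at most $mn$ stages \emph{per phase}, hence at most $m^2n^2$ stages overall. That is, it counts edge additions separately within each phase rather than across the whole run. Your phase count and your per-iteration analysis are fine; replacing your global $O(nm)$ stage bound with the paper's per-phase $mn$ bound (for an $O(m^2n^2)$ total) would close the gap and match the paper's proof.
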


Therefore, Lemmas~\ref{lemma-min-eq} and~\ref{lemma-min-exist}, together with Lemma \ref{lemma-minj}, imply that \algmin\ produces a minimum competitive equilibrium if a competitive equilibrium exists for that instance, and this is done in strongly polynomial time by Lemma~\ref{lemma-polytime}. This gives us the main result Theorem~\ref{theorem-main}.


\section{Minimum Equilibrium Mechanism}\label{section-min-mechanism}

A natural question that arises in our assignment model is strategic behavior by buyers, since the utility function they report affects the final utility they receive from their allocation. Suppose we are given a marketplace, and a family of consistent utility functions such that any set of $mn$ utilities drawn from it are consistent (e.g., all quasi-linear utilities with budgets).
Consider the minimum equilibrium mechanism game, where
the strategy space of every buyer consists of all utility functions from this set, and
the private information of every buyer, as in~\cite{DG}, is the true utility functions over different items.
Given reported strategies/bids of utility functions from every buyer $i$ for every item $j$, the {\em minimum equilibrium mechanism} computes
a minimum competitive equilibrium if there exists one, and outputs nothing otherwise (\ie, all items remain unassigned).

The following claim, proved in Appendix~\ref{appendix-nash}, shows that truthful bidding is a Nash equilibrium if there does exist a competitive equilibrium with the true utility functions. The reason for not considering dominant strategy truthfulness is explained immediately after the theorem.

\begin{theorem}\label{theorem-nash}
Let $u_{ij}(\cdot)$ be the true (private) utility function of buyer $i$ for item $j$. If a competitive equilibrium exists when all buyers bid their true utility functions, then truthful bidding constitutes a Nash equilibrium in the minimum equilibrium mechanism.
\end{theorem}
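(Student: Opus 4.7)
The plan is to fix an arbitrary buyer $i$ with true utility profile $u_i$, assume all other buyers report truthfully, and show that $i$ cannot strictly improve his true utility by any unilateral deviation $\tilde u_i$. Let $(\mathbf{x}^*,\mathbf{p}^*)$ denote the minimum competitive equilibrium produced by the mechanism when $i$ also reports truthfully, which exists by hypothesis.

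The first step is to prove \emph{buyer-optimality} of the minimum CE: for every competitive equilibrium $(\mathbf{x}',\mathbf{p}')$ of the truthful instance,
\[
u_{i,x^*_i}(p^*_{x^*_i}) \;\ge\; u_{i,x'_i}(p'_{x'_i}).
\]
Envy-freeness at $(\mathbf{x}^*,\mathbf{p}^*)$ applied to item $x'_i$ gives $u_{i,x^*_i}(p^*_{x^*_i}) \ge u_{i,x'_i}(p^*_{x'_i})$, and Lemma~\ref{lemma-minj} combined with strict monotonicity of $u_{i,x'_i}$ on $[0,b_{i,x'_i}]$ yields $u_{i,x'_i}(p^*_{x'_i}) \ge u_{i,x'_i}(p'_{x'_i})$. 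The `$+$' notation in Definition~\ref{def-min-eq} is accommodated by passing to arbitrarily small $\epsilon$-perturbations of the infimum price vector and using continuity of $u_{i,x'_i}$ on $[0,b_{i,x'_i}]$.

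The second step is the main content: any equilibrium outcome $(\tilde{\mathbf{x}},\tilde{\mathbf{p}})$ returned by the mechanism under a deviation $\tilde u_i$ satisfies $u_{i,\tilde x_i}(\tilde p_{\tilde x_i}) \le u_{i,x^*_i}(p^*_{x^*_i})$; if no output is returned, $i$ obtains utility $0 \le u_{i,x^*_i}(p^*_{x^*_i})$ trivially. The key observation is that $(\tilde{\mathbf{x}},\tilde{\mathbf{p}})$ already inherits market clearing from being a CE of the deviated market, and inherits envy-freeness under the true utilities for every truthful buyer $k\neq i$ since $\tilde u_k = u_k$. Thus whether $(\tilde{\mathbf{x}},\tilde{\mathbf{p}})$ is a CE of the truthful instance depends solely on $i$'s envy-freeness under $u_i$. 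If $i$ does not envy, $(\tilde{\mathbf{x}},\tilde{\mathbf{p}})$ is itself a CE of the truthful instance and Step~1 closes the argument. If $i$ envies some $j'\neq \tilde x_i$ under $u_i$, I would construct a CE $(\hat{\mathbf{x}},\hat{\mathbf{p}})$ of the truthful instance that weakly dominates $(\tilde{\mathbf{x}},\tilde{\mathbf{p}})$ from $i$'s perspective: swap $i$ onto $j'$, displace the previous holder $k$ onto $\tilde x_i$, and invoke \increaseprice\ along the resulting indifference path to raise $\hat p_{j'}$ just enough to restore envy-freeness for $k$. By the consistency condition and Theorem~\ref{them-critical}, the process terminates at a legitimate truthful CE, and the strict inequality $u_{ij'}(\tilde p_{j'}) > u_{i,\tilde x_i}(\tilde p_{\tilde x_i})$ provides the slack needed to keep $i$'s utility at $(\hat{\mathbf{x}},\hat{\mathbf{p}})$ at least $u_{i,\tilde x_i}(\tilde p_{\tilde x_i})$; applying Step~1 to $(\hat{\mathbf{x}},\hat{\mathbf{p}})$ then finishes the argument.

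The main obstacle I anticipate is the swap-and-raise step in the envy subcase: showing that the price raise stops before $i$'s utility for $j'$ drops back below his original deviation utility, even in the presence of budget discontinuities, requires invoking the consistency condition on the $u_{ij}$'s to bound the price increment carefully. The `$+$' notation of Definition~\ref{def-min-eq} and the marking discipline of \algmin, by which marked items are precisely those whose infimum equilibrium price is not attained by any CE, become crucial for handling the discontinuities. Finally, the hypothesis that a truthful CE exists is essential because it ensures Step~1's buyer-optimality bound is meaningful; without it, the mechanism would return nothing under truthful bidding and $i$ could profit by deviating to a strategy that does yield an equilibrium.
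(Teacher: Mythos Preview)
Your Step~1 (buyer-optimality of the minimum CE) and the reduction at the start of Step~2 (observing that $(\tilde{\mathbf{x}},\tilde{\mathbf{p}})$ is already envy-free for every truthful $k\neq i$, so only $i$'s envy can fail) are correct and match the paper's opening moves. The gap is the ``swap-and-raise'' construction in the envy subcase: it does not in general produce a truthful CE, and the slack argument you sketch is not valid.

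Concretely: after you move $k$ from $j'$ to $\tilde x_i$, $k$'s utility has dropped, so $k$ may envy \emph{any} item, not just $j'$; raising only $\hat p_{j'}$ does not fix this. Even restricting attention to $j'$, the price increase needed to kill $k$'s envy is governed by $u_{k,\cdot}$, whereas your slack $u_{i,j'}(\tilde p_{j'})-u_{i,\tilde x_i}(\tilde p_{\tilde x_i})$ is governed by $u_{i,\cdot}$; the consistency condition does not relate these two quantities in the direction you need. Finally, a single swap is not enough: displacing $k$ can trigger further envy and further swaps, so you really need a global re-equilibration of an entire sub-market, not a local repair along one edge.

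The paper proceeds by contradiction in a way that supplies exactly this global structure. It sets $T=\{j:\tilde p_j<p^*_j\}$, shows that the buyers $S$ holding $T$ under $\tilde{\mathbf{x}}$ are the same buyers holding $T$ under $\mathbf{x}^*$, and forms a hybrid $(\mathbf{y},\mathbf{q})$ using $(\tilde{\mathbf{x}},\tilde{\mathbf{p}})$ on $S\cup T$ and $(\mathbf{x}^*,\mathbf{p}^*)$ elsewhere. All buyers except $i$ are then shown envy-free in $(\mathbf{y},\mathbf{q})$ by chaining the two equilibrium inequalities. To repair $i$, the paper runs the full critical-set/\increaseprice\ loop on the \emph{entire} sub-market $(S,T)$ until no critical set remains, and proves via an alternating-path argument (parallel to Lemma~\ref{lemma-minj}) that throughout this process every price in $T$ stays \emph{strictly below} $p^*_j$. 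Hall's theorem on $(S,T)$ then yields a perfect matching, and the result is a truthful CE with strictly smaller prices than $\mathbf{p}^*$, contradicting minimality. Your swap-and-raise is the right local intuition, but it must be replaced by this sub-market re-equilibration together with the ``prices stay below $\mathbf{p}^*$'' invariant.
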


The condition that a competitive equilibrium exists with truthful bids is necessary in the claim. This is because when an equilibrium does not exist, it is possible that buyers can submit untruthful modified utility functions for which a competitive equilibrium {\em does} exist, with an allocation that gives positive utility to all buyers (in contrast with $0$ utility when the mechanism does not allocate any items.) This implies that the minimum equilibrium mechanism is not dominant-strategy truthful, since whether truthful bidding is optimal or not depends on whether an equilibrium exists or not, which, of course depends on {\em all} the submitted utility functions: that is, the optimality of truthful bidding for a particular buyer cannot be independent of the bids submitted by other buyers.

\begin{exam}
Consider the quasi-linear utility with budgets model with three buyers $i_1,i_2,i_3$ and two items $j_1,j_2$. Values $v_{ij}$ and budgets $b_{ij} = b_i$ (all buyers have the same budgets for both items) are given below:
\begin{center}
\begin{tabular}{llll}
$v_{i_1,j_1} = 300$, & $v_{i_1,j_2} = 30$, & $b_{i_1} = 100$ \\
$v_{i_2,j_1} = 200$, & $v_{i_2,j_2} = 20$, & $b_{i_2} = 100$ \\
$v_{i_3,j_1} = 10$, & $v_{i_3,j_2} = 1$, & $b_{i_3} = \infty$ \\
\end{tabular}
\end{center}
In this example, when everyone bids truthfully, it is easy to verify that no equilibrium exists. However, if buyer $i_2$ bids, e.g., $v'_{i_2j_1} = 50$, $v'_{i_2j_2}= 20$ and $b'_{i_2}=100$, he will win $j_2$ in the minimum equilibrium mechanism at price $p_{j_2}=1$ (and $i_1$ wins $j_1$ at price $p_{j_1}=31$), from which he obtains a positive utility.
\end{exam}

\section{Conclusion}
In this paper, we presented a strongly polynomial time algorithm that decides whether or not a competitive equilibrium exists and if yes, computes a minimum one, for a general class of utility functions $u_{ij}$ with budgets in the assignment model. We note that the algorithm can be easily adapted to compute a maximal competitive equilibrium (i.e., all items are priced at the maximum among all equilibria) by a symmetric process of reducing prices.
It would be interesting to explore the algorithmic limits of our approach to compute competitive equilibria efficiently. The most natural and practically applicable, yet technically very challenging, extension is to the setting with multi-unit demand buyers and multi-unit supply sellers; we leave this as an open problem for future work.

\newpage

\appendix

\section{Examples}\label{appendix-example}

In this section, we list a number of examples used in the main context. Except for Example~\ref{example-no-min-eq}, the utilities are all quasi-linear $v_{ij}-p_j$ with budget constraints; also, unless otherwise specified, buyers have the same budget for all items.
The first example shows that in general a competitive equilibrium may not exist.

\begin{exam}\label{example-no-eq}
There are two buyers $i_1,i_2$ and one item $j$, with values $v_{i_1}=v_{i_2}=2$, and budgets $b_{i_1}=b_{i_2}=1$. When price $p_j\le 1$, both buyers desire the item; whereas when $p_j>1$, both of them vanish due to budget constraints.

A competitive equilibrium may not exist even if all values and budgets are different. For example, suppose there are three buyers $i_1,i_2,i_3$ and two items $j_1,j_2$, with values and budgets given below:
\begin{center}
\begin{tabular}{llll}
$v_{i_1,j_1} = 20$, & $v_{i_1,j_2} = 1$, & $b_{i_1} = 2$ \\
$v_{i_2,j_1} = 7$, & $v_{i_2,j_2} = 10$, & $b_{i_2} = \infty$ \\
$v_{i_3,j_1} = 0$, & $v_{i_3,j_2} = 30$, & $b_{i_3} = 5$
\end{tabular}
\end{center}
In this example, all values and budgets are different, yet there is no competitive equilibrium: there is no equilibrium when $p_{j_1}\le 2$ or $p_{j_2}\le 5$ because both items are over-demanded, but at any price beyond this at least one of the items remains unsold since both buyers $i_1$ and $i_3$ are priced out. Thus the condition that unsold items must be priced at zero cannot be satisfied.
\end{exam}

The following two examples show that even when a competitive equilibrium exists, a minimum equilibrium may not, for the two types of reasons discussed in Section~\ref{section-min}.
The first shows that if the utility functions are not strictly decreasing, even if continuity and consistency are satisfied, a minimum equilibrium does not exist because the equilibrium price vectors are incomparable. The second shows that the infimum of competitive equilibrium prices need not support a competitive equilibrium.

\begin{exam}\label{example-no-min-eq}
There are three buyers $i_1,i_2,i_3$ with infinite budget each and two items $j_1,j_2$. The utility functions are $u_{i_1j_1}(p_{j_1})=1-p_{j_1}$, $u_{i_2j_2}(p_{j_2})=1-p_{j_2}$, and $u_{i_3j_1}(p_{j_1})=u_{i_3j_2}(p_{j_2})=10$ (every undefined pair has negative utility for any price). There are two equilibrium price vectors $(1,0)$ (where $i_3$ wins $j_1$ and $i_2$ wins $j_2$) and $(0,1)$ (where $i_3$ wins $j_2$ and $i_1$ wins $j_1$), but there is no minimum equilibrium in this example.
\end{exam}

\begin{exam}\label{example-no-min-eq-iii}
There are two buyers $i_1$ and $i_2$ and one item $j$, with values $v_{i_1}=20$ and $v_{i_2}=100$, and budgets $b_{i_1}=3$ and $b_{i_2}=1$. Allocating the item to the first buyer at any price $p_j\in (1,3]$ is an equilibrium (buyer $i_2$ is envy-free due to his budget $b_{i_2}<p_j$). In this example, there is no exact minimum equilibrium, because there is no smallest real number bigger than 1. By our Definition~\ref{def-min-eq} of minimum equilibrium, allocating the item to buyer $i_1$ at price $p_j+$ where $p_j = 1$ is a minimum equilibrium.
\end{exam}

In the above Example~\ref{example-no-min-eq-iii}, allocating the item to buyer $i_1$ at price $p_j = 1$ is a buyer-optimal weakly stable matching~\cite{www09}:
buyer $i_2$, who is not envy-free in the competitive equilibrium concept, does not form any blocking pairs (in particular, $(i_2,j)$ is stable since $j$ cannot obtain more payment from $i_2$ due to his budget $b_{i_2}=p_j = 1$; this is illustrated by formula (4) in the definition of stability in~\cite{www09}). While this buyer-optimal weakly stable matching looks quite similar to the minimum equilibrium, the following example shows that they can be quite different in both allocations and prices.

\begin{exam}\label{example-weak-stable}
There are four buyers $i_1,i_2,i_3,i_4$ and three items $j_1,j_2,j_3$ with values and budgets given below (only $i_1$ has different budgets for different items):
\begin{center}
\begin{tabular}{llll}
$v_{i_1,j_1} = 100$, & $v_{i_1,j_2} = 50$, & $v_{i_1,j_3} = 0$, & $b_{i_1j_1} = 10$, \ $b_{i_1j_2} = b_{i_1j_3} =\infty$  \\
$v_{i_2,j_1} = 100$, & $v_{i_2,j_2} = 100$, & $v_{i_2,j_3} = 10$, & $b_{i_2} = 10$ \\
$v_{i_3,j_1} = 24$, & $v_{i_3,j_2} = 25$, & $v_{i_3,j_3} = 20$, & $b_{i_3} = \infty$ \\
$v_{i_4,j_1} = 0$, & $v_{i_4,j_2} = 0$, & $v_{i_4,j_3} = 100$, & $b_{i_4} = 5$
\end{tabular}
\end{center}
The buyer-optimal weakly stable matching computed by~\cite{www09} is $\{(i_1,j_1),(i_2,j_2),(i_3,j_3)\}$ at price vector $(10,10,5)$. The last buyer $i_4$, again due to his tight budget constraint $b_{i_4}=p_{j_3}=5$, does not form a blocking pair with item $j_3$. In the minimum equilibrium, however, the allocation is $\{(i_1,j_2),(i_2,j_3),(i_3,j_1)\}$ at price vector $(10+,11+,6+)$. Further, if $v_{i_3,j_1} = 0$ rather than 24 defined above, the buyer-optimal weakly stable matching remains the same but no competitive equilibrium exists.
\end{exam}

The following example shows that the solution returned by~\cite{lavi} is not a competitive equilibrium in our model (indeed, it can be much different from any competitive equilibrium in allocations and prices), and does not possess the efficiency property characteristic of competitive equilibria.
\begin{exam}\label{example-lavi}
Consider an example in~\cite{lavi} where there are two buyers $i_1,i_2$ and one item $j$ with values $v_{i_1}=7$ and $v_{i_2}=8$, and budgets $b_{i_1}=10$ and $b_{i_2}=7$. In the competitive equilibrium studied in this paper, $i_2$ wins the item at price $p_j=7$ (note that $i_1$ has utility 0 even if he wins the item). In the solution of~\cite{lavi} with strict boundary condition, $i_1$ wins the item at the same price. The second buyer $i_2$, who has $b_{i_2}=7$ dollars in pocket and deserves more value (i.e., $v_{i_2}=8$) for the item, is eliminated. As a result, this output is not a competitive equilibrium and its efficiency can be arbitrarily bad (note that the value $v_{i_2}$ can be arbitrarily large).


Further, the solution returned by~\cite{lavi} can be much different from any competitive equilibrium, even in the special sponsored search setting considered in~\cite{lavi}. Consider another example with three buyers $i_1,i_2,i_3$ and two items $j_1,j_2$ with values and budgets given below:
\begin{center}
\begin{tabular}{llll}
$v_{i_1,j_1} = 10$, & $v_{i_1,j_2} = 1$, & $b_{i_1} = \infty$ \\
$v_{i_2,j_1} = 100$, & $v_{i_2,j_2} = 10$, & $b_{i_2} = 10$ \\
$v_{i_3,j_1} = 50$, & $v_{i_3,j_2} = 5$, & $b_{i_3} = 5$
\end{tabular}
\end{center}
Note that the implicit click-through rates of $j_1$ and $j_2$ are 10 and 1, respectively.
The minimum competitive equilibrium in our paper has allocation $\{(i_2,j_1),(i_3,j_2)\}$ with price $(10,1)$. In the solution concept of~\cite{lavi} with strict budget constraint, the allocation is $\{(i_1,j_1),(i_2,j_2)\}$ with price $(10,5)$.
\end{exam}

The following example shows that the auction described in~\cite{LY} may not generate a competitive equilibrium, even if it exists.

\begin{exam}\label{example-LY}
There are four buyers $i_1,i_2,i_3,i_4$ and two items $j_1,j_2$ with values and budgets given below:
\begin{center}
\begin{tabular}{llll}
$v_{i_1,j_1} = 10$, & $v_{i_1,j_2} = 0$, & $b_{i_1} = 1$ \\
$v_{i_2,j_1} = 10$, & $v_{i_2,j_2} = 0$, & $b_{i_2} = 1$ \\
$v_{i_3,j_1} = 5$, & $v_{i_3,j_2} = 10$, & $b_{i_3} = \infty$ \\
$v_{i_4,j_1} = 5$, & $v_{i_4,j_2} = 10$, & $b_{i_4} = \infty$
\end{tabular}
\end{center}
Assume that reserve prices are $p_{j_1}=p_{j_2}=0$. In this example, a minimum competitive equilibrium is $x_{i_3}=j_1$ and $x_{i_4}=j_2$ at price vector $(1+,6+)$.
In the algorithm of~\cite{LY}, we increase $p_{j_1}$ (which is a minimal over-demanded set) to 2 and it becomes under-demanded (since $i_1, i_2$ have no enough budget and $i_3, i_4$ prefer $j_2$ at this moment); thus we set $p_{j_1}=1$ and allocate it to one of $i_1$ and $i_2$ permanently. Next since $j_2$ is over-demanded as well, we increase its price and eventually it is allocated to one of $i_3$ and $i_4$ at price $p_{j_2}=10$. This outcome is only a rationed equilibrium~\cite{LY} but not a competitive equilibrium as both $i_3$ and $i_4$ strictly prefer item $j_1$.
\end{exam}


\section{Critical Set: Proof of Theorem~\ref{them-critical}}\label{appendix-theorem-critical}

\begin{proof}
We prove the two claims respectively.

\begin{itemize}
\item We first prove the first part. Let $G'$ denote the resulting graph after adding edges between $A$ and $N(A)$. For any subset of vertices $B\subseteq U$, let $N'(B)$ denote the neighbor set of $B$ in $G'$. Note that we have $N(B)\subseteq N'(B)$; thus $|B|-|N'(B)|\le |B|-|N(B)|$. Further, for the critical set $A$ of $G$, by the rule of adding edges, we have $N(A)=N'(A)$; thus $|A|-|N(A)| = |A|-|N'(A)|$. Therefore, $G$ and $G'$ have the same deficiency, i.e., $\delta(G)=\delta(G')$. For any subset $A'\subseteq A$, we have
    \[|A|-|N'(A)| = |A|-|N(A)| > |A'|-|N(A')| \ge |A'|-|N'(A')|\]
    where the strict inequality follows from the fact that $A$ is the critical set of $G$. Hence, $A$ is the critical set of $G'$ as well.

\item For the second part, assume otherwise that $A'\subseteq U$, $A'\neq A$, is the critical set of graph $G'$, where $G'$ is obtained from $G$ by deleting some edges between $U\setminus A$ and $N(A)$. In the following discussions, for any subset $B\subseteq U$, $N(B)$ and $N'(B)$ denote the set of neighbors of $B$ in $G$ and $G'$, respectively. Note that $N'(B)\subseteq N(B)$.

    Let $X_1 = A'\cap A$ and $X_2=A'\setminus X_1$ be a partition of $A'$. Assume that $X_2\neq \emptyset$. Let
    \[ Y = \{j\in V\setminus N(A)~|~\exists \ i\in X_2 \ s.t. \ (i,j)\in E\}\]
    be the set of neighbors of $X_2$ which are not in $N(A)$ (in both $G$ and $G'$).
    Since $X_1\subseteq A$, we have $N(X_1)=N'(X_1)\subseteq N(A)$. For $X_2$, it can be seen that $|X_2|\le |Y|$. This is because, otherwise
    \[|A\cup X_2| - |N(A\cup X_2)| = |A|+|X_2| - (|N(A)| + |Y|) > |A|-|N(A)|\]
    which contradicts to the fact that $A$ has maximal deficiency in $G$. Hence,
    \[|X_1| - |N'(X_1)| \ge |X_1| - |N'(X_1)| + |X_2| - |Y| \ge |X_1\cup X_2| - |N'(X_1\cup X_2)| = |A'|-|N'(A')|\]
    which contradicts to the assumption that $A'$ is the critical set of $G'$.

    Hence, $X_2=\emptyset$ and $A'\subseteq A$, which implies that
    \[|A'|-|N(A')| = |A'|-|N'(A')| \ge |A|-|N'(A)| = |A|-|N(A)|\]
    where the inequality follows from the fact that $A'$ is the critical set of $G'$. Since $A$ is the critical set of $G$, we must have $A'=A$.
\end{itemize}
\end{proof}

\section{Analysis of \increaseprice}\label{appendix-alg-increase}

Theorem~\ref{theorem-price-increase} follows immediately from the following three lemmas. In all lemmas and their proofs, $\mathbf{p}$ denotes the initial price vector at Step~(1) of \increaseprice.

\begin{lemma}\label{lemma-price-increase1}
For any edge $(i_0,j_0) \in C$ and item $j\in N(S)$, the value $q^{i_0j_0}_{j}$ is well-defined in $\mathbf{q}^{i_0j_0}$ in Step~(3.c) and $\mathbf{q}^{i_0j_0}\ge \mathbf{p}$. Further, in the process of the Step~(3) where $q^{i_0j_0}_{j}$ has not been defined yet, for any $(i,j),(i,j')\in C$ with $q^{i_0j_0}_{j'}$ already being defined and $(i',j),(i',j'')\in C$ with $q^{i_0j_0}_{j''}$ already being defined (i.e., there are different choices $i$ or $i'$ to define $q^{i_0j_0}_{j}$), the defined value $q^{i_0j_0}_{j}$ will be the same for either choice.
\end{lemma}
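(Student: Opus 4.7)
The plan is to prove all three claims of the lemma simultaneously by induction on the order in which items $j \in N(S)$ are assigned values $q^{i_0 j_0}_j$ during the BFS-like propagation in Step~(3.c). Let $T$ denote the set of items whose values have been defined so far. I will maintain three invariants as $T$ grows: (a) each $q^{i_0 j_0}_j$ is well-defined; (b) $q^{i_0 j_0}_j \ge p_j$; and (c) the \emph{indifference invariant} --- for every buyer $i$ with two neighbors $j', j'' \in T$ in $C$, we have $u_{ij'}(q^{i_0 j_0}_{j'}) = u_{ij''}(q^{i_0 j_0}_{j''})$. Invariant (c) will immediately yield the uniqueness claim, since if $q^{i_0 j_0}_j$ is first defined via some $(i, j')$, then applying (c) after $j$ joins $T$ to any alternative pair $(i', j'')$ forces $u^{-1}_{i'j}(u_{i'j''}(q^{i_0 j_0}_{j''}))$ to coincide with the already-computed value. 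If the bounds check in Step~(3.c) ever triggers, $\mathbf{q}^{i_0 j_0} = \infty$ and all claims hold vacuously.

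For the base case, $q^{i_0 j_0}_{j_0} = \min\{v_{i_0 j_0}, b_{i_0 j_0}, u^{-1}_{i_0 j_0}(u^{i_0}_{\max})\} \ge p_{j_0}$: since $(i_0, j_0) \in C \subseteq G^+(\mathbf{p})$, item $j_0$ lies in $i_0$'s demand set at $\mathbf{p}$, which forces $p_{j_0} \le v_{i_0 j_0}$, $p_{j_0} \le b_{i_0 j_0}$, and $u_{i_0 j_0}(p_{j_0}) \ge u^{i_0}_{\max}$. For the inductive step, when $j$ is added to $T$ via buyer $\hat i$ from $j' \in T$, the edges $(\hat i, j), (\hat i, j') \in C$ give $u_{\hat i j}(p_j) = u_{\hat i j'}(p_{j'}) > 0$. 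Combined with $q^{i_0 j_0}_{j'} \ge p_{j'}$ (IH) and strict monotonicity of $u_{\hat i j'}$, this yields $u_{\hat i j'}(q^{i_0 j_0}_{j'}) \le u_{\hat i j}(p_j)$, so $q^{i_0 j_0}_j = u^{-1}_{\hat i j}(u_{\hat i j'}(q^{i_0 j_0}_{j'})) \ge p_j$, securing invariants (a) and (b).

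The hard part will be re-establishing invariant (c) when $j$ joins $T$, and this is where the consistency property of the utility functions is essential. For the defining buyer $\hat i$, (c) is immediate. For any other buyer $i \in S$ with $(i, j) \in C$ and some previously-defined neighbor $j'' \in T$, I will argue as follows. Because $T$ was grown from $j_0$ by BFS through $C$, there is a path in $C$ from $j'$ to $j''$ whose items all lie in $T$; by the inductive form of (c) applied at each intermediate buyer, this path is transitive both with respect to $\mathbf{p}$ and with respect to the already-updated prices. Prepending the two edges $(j, \hat i, j')$ --- transitive at $\mathbf{p}$ because $j, j' \in N(\hat i)$ in $G^+(\mathbf{p})$, and transitive at $\mathbf{q}$ by the defining equation for $q^{i_0 j_0}_j$ --- produces a path $P$ from $j$ to $j''$ that is transitive at both $\mathbf{p}$ and $\mathbf{q}$. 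Applying the consistency axiom to $P$ and the (off-path) buyer $i$, who is indifferent between $j$ and $j''$ at $\mathbf{p}$, gives $u_{ij}(q_j) \ge u_{ij''}(q_{j''})$; running the argument on the reversed path gives the opposite inequality, so equality holds and (c) is restored.

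The only hypothesis of the consistency axiom that is not automatic is the requirement $q^{i_0 j_0}_j \le \min\{v_{ij}, b_{ij}\}$ for each $i \in S \cap N(j)$ --- and this is precisely what the bounds check in Step~(3.c) enforces, since $q^{ij} \le \min\{v_{ij}, b_{ij}\}$ by definition; if the check fails, the algorithm aborts the local loop and the lemma is trivial. Once invariant (c) is preserved, the uniqueness clause of the lemma follows immediately from (c), and the monotonicity claim $\mathbf{q}^{i_0 j_0} \ge \mathbf{p}$ is the routine consequence of invariant (b) already established in the inductive step.
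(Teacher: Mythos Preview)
Your proposal is correct and follows essentially the same approach as the paper: induction on the order in which items receive their values, with the core step being the construction of a transitive path through already-defined items and an appeal to the consistency axiom to force equality. Your indifference invariant~(c) is slightly stronger than what the paper tracks (it effectively folds in part of the next lemma), and you are more careful than the paper in explicitly noting that the bounds check in Step~(3.c) is exactly what guarantees the value/budget hypothesis needed to invoke consistency. The only point you leave implicit is that the propagation actually reaches every $j\in N(S)$; the paper spells this out via a short connectedness argument (if some items were never reached, $C$ would split into two components), and you should add a sentence to that effect.
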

\begin{proof}
Let $T=N(S)$. Note that $q^{i_0j_0}_{j_0}=q^{i_0j_0}$ is well-defined by the subroutine. Let $T'\subseteq T$ be the subset items whose values $q^{i_0j_0}_{j}$ are not defined. Then for any buyer $i\in S$, its set of neighbors $N(i)$ in $N(S)$ is either in $T'$ or $T\setminus T'$. Let $S'=\{i\in S~|~N(i)\subseteq T'\}$. Then $S'\cup T'$ and $(S\setminus S')\cup (T\setminus T')$ are disconnected, a contradiction to the fact that $C$ is a connected component.

Given the initial price vector $\mathbf{p}=(p_1,\ldots,p_m)$, each buyer obtains his maximal positive utility from its neighbor items in $T$; thus $u_{i_0j_0}(p_{j_0})>u^{i_0}_{\max}$. Since the utility function $u_{ij}(\cdot)$ is decreasing, by the definition of $v_{ij}$ and $b_{ij}$, we have $p_{j_0}\le \min\{v_{i_0j_0}, b_{i_0j_0}, u^{-1}_{i_0j_0}(u^{i_0}_{\max})\}=q^{i_0j_0}=q^{i_0j_0}_{j_0}$. Then by induction on the process of defining values $q^{i_0j_0}_{j}$  in the algorithm, we have $\mathbf{q}^{i_0j_0}\ge \mathbf{p}$. (Note that if any $q^{i_0j_0}_{j}=\infty$, then certainly $\mathbf{q}^{i_0j_0} = \infty \ge \mathbf{p}$.)

It remains to show that the value $q^{i_0j_0}_{j}$ is uniquely defined. We use induction on the order of item $j\in N(S)$ in which $q^{i_0j_0}_{j}$ is defined. By the rule of Step~(3) of \increaseprice, the value $q^{i_0j_0}_{j_0}$ is uniquely defined. For each item $j$ whose value $q^{i_0j_0}_{j}$ is about to define in the process, if there are two possible ways (the argument for more than two ways is the same) to define $q^{i_0j_0}_{j_0}$, say $(i,j),(i,j')\in C$ with $q^{i_0j_0}_{j'}$ being defined and $(i',j),(i',j'')\in C$ with $q^{i_0j_0}_{j''}$ being defined, assume without loss of generality that $q^{i_0j_0}_{j_0}$ is defined in terms of buyer $i$, i.e., $q^{i_0j_0}_{j}=u_{ij}^{-1}(u_{ij'}(q^{i_0j_0}_{j'}))$. Since all utility functions are strict decreasing, it suffices to show that $u_{i'j}(q^{i_0j_0}_{j})=u_{i'j''}(q^{i_0j_0}_{j''})$. Consider the path starting from $j, i, j'$ to $j_0$ according to the backward order of defined values, and then from $j_0$ to $j''$ according to the forward order of defined values. By induction and the definition of $q^{i_0j_0}_{j}$, this is a transitive path in both $\mathbf{p}$ and $\mathbf{q}^{i_0j_0}$. Thus by the consistency property, $i'$ has the same preference over $j$ and $j''$, i.e., $u_{i'j}(q^{i_0j_0}_{j})=u_{i'j''}(q^{i_0j_0}_{j''})$. This completes the proof of the claim.
\end{proof}

\begin{lemma}\label{lemma-price-increase2}
For any edge $(i_0,j_0) \in C$, if $\mathbf{q}^{i_0j_0}\neq \infty$, then for any buyer $i\in S$ and $j\in N(i)$, where $N(i)$ is the neighborhood of $i$ in $G^+(\mathbf{p})$, we have $u_{ij}(q^{i_0j_0}_{j}) \ge u_{ij'}(q^{i_0j_0}_{j'})$ for any other item $j'\in N(S)$, i.e., $i$ still weakly prefers $j$ to all other items in $N(S)$ at price $\mathbf{q}^{i_0j_0}$. In particular, this implies that for any $j,j'\in N(i)$, $u_{ij}(q^{i_0j_0}_{j})=u_{ij'}(q^{i_0j_0}_{j'})$.
\end{lemma}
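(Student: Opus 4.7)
The plan is to prove the lemma by applying the consistency property of the utility functions to a carefully chosen transitive path in $C$ from $j$ to $j'$, using the fact that the prices $q^{i_0j_0}_k$ constructed in Step~(3.c) were precisely designed to maintain the transitivity of any such path.

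Concretely, fix $(i_0,j_0)\in C$ with $\mathbf{q}^{i_0j_0}\neq\infty$, take $i\in S$ with $j\in N(i)$, and let $j'\in N(S)$. Since $C$ is a connected component of $G^+(\mathbf{p})$ containing both $j$ and $j'$, there is a path $P=(j=j'_0,\,i_1,\,j'_1,\,i_2,\ldots,i_{\ell-1},\,j'_{\ell-1}=j')$ in $C$ alternating between items in $N(S)$ and buyers in $S$. First, I would observe that $P$ is transitive with respect to $\mathbf{p}$: each $i_k$ has both its path-neighbors $j'_{k-1},j'_k$ in its demand set $N(i_k)$ in $G^+(\mathbf{p})$, so $u_{i_kj'_{k-1}}(p_{j'_{k-1}})=u_{i_kj'_k}(p_{j'_k})\ge 0$. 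Second, I would show that $P$ is also transitive with respect to $\mathbf{q}^{i_0j_0}$: by the very recursive definition in Step~(3.c), whenever $(i,j),(i,j')\in C$, the values satisfy $u_{ij}(q^{i_0j_0}_{j})=u_{ij'}(q^{i_0j_0}_{j'})$ (and Lemma~\ref{lemma-price-increase1} guarantees that this relation is consistent no matter which path is used to assign the values). Third, I would verify the value/budget hypotheses needed to invoke consistency. The safeguard in Step~(3.c) that sets $\mathbf{q}^{i_0j_0}=\infty$ as soon as $q^{i_0j_0}_{j'_k}$ exceeds $\min\{v_{i'j'_k},b_{i'j'_k},u^{-1}_{i'j'_k}(u^{i'}_{\max})\}$ for some $i'\in S\cap N(j'_k)$ is exactly what prevents this: since by assumption $\mathbf{q}^{i_0j_0}\neq\infty$, we automatically have $q^{i_0j_0}_{j'_k}\le\min\{v_{i_kj'_k},b_{i_kj'_k},v_{i_{k+1}j'_k},b_{i_{k+1}j'_k}\}$ for every buyer on $P$, and in particular $q^{i_0j_0}_{j}\le\min\{v_{ij},b_{ij}\}$ for the endpoint buyer $i$ under consideration.

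With the path $P$ transitive with respect to both $\mathbf{p}$ and $\mathbf{q}^{i_0j_0}\ge\mathbf{p}$ (the inequality is from Lemma~\ref{lemma-price-increase1}), and with the price of $j$ within $i$'s value and budget at $\mathbf{q}^{i_0j_0}$, the consistency condition applies directly: since $j\in N(i)$ gives the maximal positive utility to $i$ at $\mathbf{p}$ we have $u_{ij}(p_j)\ge u_{ij'}(p_{j'})$, and consistency then yields $u_{ij}(q^{i_0j_0}_{j})\ge u_{ij'}(q^{i_0j_0}_{j'})$, which is the desired inequality. The ``in particular'' statement for $j,j'\in N(i)$ follows by applying the same argument with $j$ and $j'$ swapped (both are in $N(i)$, so both lie within $i$'s value/budget and both initial utilities are equal to $i$'s maximum), which gives the reverse inequality and hence equality.

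I expect the main subtlety to be the verification of the value/budget precondition in the consistency axiom: the axiom requires $q^{i_0j_0}_{j}\le\min\{v_{ij},b_{ij}\}$ for the particular buyer $i$ appearing in the consistency hypothesis, and the only reason this holds for a buyer $i$ that is not necessarily the edge along which $q^{i_0j_0}_j$ was computed in Step~(3.c) is because the safeguard in the subroutine ranges over \emph{all} $i'\in S\cap N(j)$. Recognizing and exploiting that this safeguard is exactly what makes the consistency hypothesis invokable for every $i\in S$ with $j\in N(i)$ is the crux of the argument; the remainder is a direct translation of the recursive construction of $\mathbf{q}^{i_0j_0}$ into the transitivity hypothesis of the consistency axiom.
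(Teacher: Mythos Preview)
Your proposal is correct and follows essentially the same approach as the paper: build a transitive path in $C$ between $j$ and $j'$, verify it is transitive at both $\mathbf{p}$ and $\mathbf{q}^{i_0j_0}$ with all prices within the relevant value and budget bounds (guaranteed by the $\mathbf{q}^{i_0j_0}\neq\infty$ safeguard), and then invoke the consistency axiom. The only difference is the choice of path: the paper takes the concatenation of the two \emph{defining} paths from $j_0$ to $j$ and from $j_0$ to $j'$ used in Step~(3.c), so that transitivity at $\mathbf{q}^{i_0j_0}$ is literally the defining relation and no appeal to Lemma~\ref{lemma-price-increase1} is needed; you instead take an arbitrary path in $C$ and must invoke the well-definedness of Lemma~\ref{lemma-price-increase1} to conclude that $u_{i_kj'_{k-1}}(q^{i_0j_0}_{j'_{k-1}})=u_{i_kj'_k}(q^{i_0j_0}_{j'_k})$ holds along every edge pair, not just along the edges used during the construction. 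Both routes are valid; the paper's is slightly more self-contained, while yours makes explicit that the transitivity structure of $C$ is preserved globally once well-definedness is known.
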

\begin{proof}
By the subroutine \increaseprice\ of defining new price vectors, for any $\mathbf{q}^{i_0j_0}\neq \infty$, we have $q^{i_0j_0}_{j}\le q^{ij} \le \min\{v_{ij},b_{ij}\}$ for any $(i,j)\in C$. Thus all prices $q^{i_0j_0}_{j}$ defined by the algorithm are upper bounded by $v_{ij},b_{ij}$ for any edge $(i,j)\in C$.

Consider any buyer $i$ and two items $j,j'$ where $j\in N(i)$. Let $j_0,i_1,j_1,\ldots,i_\ell,j_\ell=j$ be the path to define $q^{i_0j_0}_{j}$ in Step~(3.c), where $(i_k,j_{k-1}),(i_k,j_k)\in C$ for $k=1,\ldots,\ell$. That is, $q^{i_0j_0}_{j_k}$ is determined according to $q^{i_0j_0}_{j_{k-1}}$ in Step~(3.c). Hence, we have $u_{i_kj_{k-1}}(q^{i_0j_0}_{j_{k-1}}) = u_{i_kj_{k}}(q^{i_0j_0}_{j_k})$. Similarly, if $j_0,i_{1'},j_{1'},\ldots,i_{\ell'},j_{\ell'}=j'$ is the path to define $q^{i_0j_0}_{j'}$ in Step~(3.c), where $(i_{k'},j_{(k-1)'}),(i_{k'},j_{k'})\in C$ for $k'=1',\ldots,\ell'$, then we have $u_{i_{k'}j_{(k-1)'}}(q^{i_0j_0}_{j_{(k-1)'}}) = u_{i_{k'}j_{{k'}}}(q^{i_0j_0}_{j_{k'}})$. Now consider putting the two paths together: $j=j_\ell, i_\ell, \ldots,j_1,i_1,j_0, i_{1'},j_{1'},\ldots,i_{\ell'},j_{\ell'}=j'$, which is transitive in both $\mathbf{p}$ and $\mathbf{q}^{i_0j_0}$. Since $j\in N(i)$, we have $u_{ij}(p_j)\ge u_{ij'}(p_{j'})$. By the consistency property, we have $u_{ij}(q^{i_0j_0}_{j}) \ge u_{ij'}(q^{i_0j_0}_{j'})$.
\end{proof}

The above lemma, as well as the following Lemma~\ref{lemma-price-increase3}, implies that in the minimum price vector $\mathbf{q}$ (defined in terms of one of $\mathbf{q}^{i_0j_0}\neq \infty$), all buyers in $S$ still weakly prefer their old neighbors in $C$ to all other items with respect to $\mathbf{q}$ (including those items not in $N(S)$, since $q_{j_0}\le q^{i_0j_0}$ as Lemma~\ref{lemma-price-increase3} proves, no buyer will strictly prefer an item not in $N(S)$), yielding the third claim of Theorem~\ref{theorem-price-increase}. The following lemma gives the proof of the second claim of Theorem~\ref{theorem-price-increase}.

\begin{lemma}\label{lemma-price-increase3}
For any $(i_0,j_0),(i'_0j'_0)\in C$, $\mathbf{q}^{i_0j_0}$ and $\mathbf{q}^{i'_0j'_0}$ are comparable. Further, the minimum vector $\mathbf{q}$ is well-defined and satisfies $q_{j_0}\le q^{i_0j_0}$ for all $(i_0,j_0)\in C$, where $q^{i_0j_0}$ is defined in Step~(3.b) of the algorithm. In particular, this implies that $\mathbf{q}\neq \infty$.
\end{lemma}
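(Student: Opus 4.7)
My plan is to exploit the one-dimensional structure imposed by transitivity within the connected component $C$. By the consistency property and strict monotonicity of the $u_{ij}$, once we fix one coordinate the propagation rule of Step~(3.c) uniquely determines every other coordinate in $N(S)$; path-independence follows exactly as in Lemma~\ref{lemma-price-increase1}. Fixing any reference item $j^\star \in N(S)$, I would define a curve $r \mapsto \mathbf{q}(r)$ by $q_{j^\star}(r) = r$ and propagating transitively. Strict monotonicity of each utility function then propagates through $C$: if $r$ increases, every paired $q_j$ must strictly increase to preserve the equalities $u_{ij}(q_j) = u_{ij'}(q_{j'})$, so each coordinate $q_j(r)$ is a strictly increasing continuous function of $r$ on the range where all propagated values remain within the relevant value and budget constraints.

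Next I would identify each algorithm-produced vector $\mathbf{q}^{i_0 j_0}$ with a specific point on this curve. Let $r_{i_0 j_0}$ be the unique parameter with $q_{j_0}(r_{i_0 j_0}) = q^{i_0 j_0}$, which exists by strict monotonicity and continuity of $q_{j_0}(\cdot)$. Since the subroutine starts from $q^{i_0 j_0}_{j_0} = q^{i_0 j_0}$ and uses the same transitive propagation, if it does not abort then $\mathbf{q}^{i_0 j_0} = \mathbf{q}(r_{i_0 j_0})$. The abort test $q^{i_0 j_0}_j > q^{i'j}$, after using the identities $q^{i_0 j_0}_j = q_j(r_{i_0 j_0})$ and $q^{i'j} = q_j(r_{i'j})$ together with strict monotonicity of $q_j(\cdot)$, is equivalent to $r_{i_0 j_0} > r_{i'j}$. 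Hence $\mathbf{q}^{i_0 j_0}$ is finite if and only if $r_{i_0 j_0} \le r_{i'j}$ for every competing edge $(i',j) \in C$ with $i' \in S \cap N(j)$, i.e., if and only if $r_{i_0 j_0} = r^\star$, where $r^\star = \min_{(i,j) \in C} r_{ij}$; and in that case $\mathbf{q}^{i_0 j_0} = \mathbf{q}(r^\star)$.

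All three claims of the lemma now follow directly. For comparability, any two $\mathbf{q}^{i_0 j_0}, \mathbf{q}^{i'_0 j'_0}$ are either both equal to the common vector $\mathbf{q}(r^\star)$ or at least one equals $\infty$, so they are componentwise comparable. The coordinatewise minimum $\mathbf{q}$ is therefore well-defined and equals $\mathbf{q}(r^\star)$, which is finite since any edge attaining $r^\star$ produces a finite $\mathbf{q}^{i_0 j_0}$; hence $\mathbf{q} \ne \infty$. Finally, for every edge $(i_0, j_0) \in C$, monotonicity of $q_{j_0}(\cdot)$ gives $q_{j_0} = q_{j_0}(r^\star) \le q_{j_0}(r_{i_0 j_0}) = q^{i_0 j_0}$, which is the desired bound.

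The main obstacle I expect is justifying the strict-monotonicity step along the entire connected component $C$: one must argue that whenever $r$ increases, every paired coordinate is forced to strictly increase to preserve transitivity, and that this propagates through all edges of $C$ without ambiguity or contradiction. This rests precisely on the consistency property of the utility functions, which guarantees that the transitive propagation is well-defined and unambiguously determines the direction of change in each coordinate when the reference price is perturbed. Once this monotone curve picture is in place, the remaining bookkeeping equating the algorithm's abort condition with leaving the admissible portion of the curve is essentially an algebraic translation.
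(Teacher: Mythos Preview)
Your curve-based parameterization is a genuinely different route from the paper's proof. The paper argues comparability of two finite vectors $\mathbf{q}^{i_0j_0}$ and $\mathbf{q}^{i'_0j'_0}$ by direct induction along a path in $C$ (using Lemma~\ref{lemma-price-increase2} at each step), and then establishes existence of a finite vector separately via an iterative ratcheting argument: start from any edge, find the first item whose propagated value becomes infeasible, restart from the tightest edge at that item, and show the set of feasibly defined items strictly grows. Your one-parameter curve $r\mapsto\mathbf{q}(r)$ unifies both parts elegantly, reducing all three conclusions to monotonicity of a single real function. What it buys is conceptual clarity; what the paper's argument buys is that it never needs to reason about the global domain of the curve.

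That domain issue is exactly where your write-up has a gap. You assert that $r_{i_0j_0}$ ``exists by strict monotonicity and continuity of $q_{j_0}(\cdot)$,'' but monotonicity and continuity give uniqueness, not existence: you also need $q^{i_0j_0}$ to lie in the range of $q_{j_0}(\cdot)$, and you have explicitly restricted the curve to the interval where all propagated values stay within value and budget constraints. For an edge $(i_0,j_0)$ whose $q^{i_0j_0}$ is large, the curve may already have broken down (some $q_j(r)$ exceeds $b_{ij}$ for an edge on a propagation path) before the $j_0$-coordinate reaches $q^{i_0j_0}$. Your subsequent identity $q^{i_0j_0}_j = q_j(r_{i_0j_0})$ and the translation of the abort test into $r_{i_0j_0}>r_{i'j}$ then have no meaning.

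The fix is to make the domain explicit. Show first that the curve is well-defined on $[p_{j^\star},R]$ where $R$ is the supremum of $r$ with $q_j(r)\le q^{ij}$ for every $(i,j)\in C$; since each $q^{ij}\le b_{ij}$, propagation cannot fail before some $q^{ij}$ constraint is met, so the supremum is attained with equality at some edge $(i^\star,j^\star)$. Then split edges into two cases: those with $q^{i_0j_0}\le q_{j_0}(R)$, for which $r_{i_0j_0}\in[p_{j^\star},R]$ exists and your translation goes through; and those with $q^{i_0j_0}>q_{j_0}(R)$, for which the algorithm's propagation from $(i_0,j_0)$ must produce some $q^{i_0j_0}_j> q_j(R)\ge q^{i^\star j}$ (by monotonicity, since you are propagating from a strictly larger $j_0$-value), hence aborts. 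With that case distinction in place, $r^\star=R$, the edge $(i^\star,j^\star)$ witnesses $\mathbf{q}\neq\infty$, and the rest of your argument stands.
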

\begin{proof}
Similar to the above proof, we assume that all prices defined by the algorithm are upper bounded by $v_{ij},b_{ij}$ for any edge $(i,j)\in C$. Consider any $\mathbf{q}^{i_0j_0}, \mathbf{q}^{i'_0j'_0}\neq \infty$, where $(i_0,j_0),(i'_0j'_0)\in C$, we first show that $\mathbf{q}^{i_0j_0}$ and $\mathbf{q}^{i'_0j'_0}$ are comparable.
Assume without loss of generality that $q^{i_0j_0}_{j_0}\le q^{i'_0j'_0}_{j_0}$ (the same analysis below holds when $q^{i_0j_0}_{j_0}\ge q^{i'_0j'_0}_{j_0}$). For any item $j\in N(S)$, consider defining vector $\mathbf{q}^{i_0j_0}$ in the algorithm and let $j_0,i_1,j_1,\ldots,i_\ell,j_\ell=j$ be the path to define $q^{i_0j_0}_{j}$, where $(i_k,j_{k-1}),(i_k,j_k)\in C$ for $k=1,\ldots,\ell$. That is, $q^{i_0j_0}_{j_k}$ is determined according to $q^{i_0j_0}_{j_{k-1}}$ in Step~(3.c). Hence, we have $u_{i_kj_{k-1}}(q^{i_0j_0}_{j_{k-1}}) = u_{i_kj_{k}}(q^{i_0j_0}_{j_k})$.  If $q^{i_0j_0}_{j_{1}} > q^{i'_0j'_0}_{j_{1}}$, then by the strict monotonicity of utility functions, we have
\[u_{i_1 j_{1}}(q^{i'_0j'_0}_{j_{1}})> u_{i_1 j_{1}}(q^{i_0j_0}_{j_{1}}) = u_{i_1 j_{0}}(q^{i_0j_0}_{j_{0}})\ge u_{i_1 j_{0}}(q^{i'_0j'_0}_{j_{0}})\]
This contradicts to Lemma~\ref{lemma-price-increase2} which implies that $u_{i_1 j_{0}}(q^{i'_0j'_0}_{j_{0}})=u_{i_1 j_{1}}(q^{i'_0j'_0}_{j_{1}})$ given $(i_1,j_{0}), (i_1,j_{1})\in C$. Therefore, $q^{i_0j_0}_{j_{1}} \le q^{i'_0j'_0}_{j_{1}}$. We can do the same analysis through all edges along the path, and at the end conclude that $q^{i_0j_0}_{j} = q^{i_0j_0}_{j_{\ell}} \le q^{i'_0j'_0}_{j_{\ell}} = q^{i'_0j'_0}_{j}$. Therefore, for any item $j\in N(S)$, $q^{i_0j_0}_{j}\le q^{i'_0j'_0}_{j}$, i.e., $\mathbf{q}^{i_0j_0}\le \mathbf{q}^{i'_0j'_0}$.

It remains to show that there is $(i_0,j_0)\in C$ such that $\mathbf{q}^{i_0j_0}\neq \infty$. For any item $j$, we say its value $q^{i_0j_0}_{j}$ {\em feasible} if for any $i\in N(j)$, $q^{i_0j_0}_{j} \le q^{ij} = \min\{v_{ij},b_{ij},u^{-1}_{ij}(u^{i}_{\max})\}$. For simplicity, we assume without loss of generality that in Step~(3.c) of the algorithm, it always tries to define as many feasible values as possible before reaching any item $j$ having an infeasible value (from which the algorithm sets $\mathbf{q}^{i_0j_0}= \infty$). This will only affect the execution of Step~(3.c), but not the outcome of the algorithm.

Consider any $(i_0,j_0)\in C$, if $\mathbf{q}^{i_0j_0} = \infty$, let $j_1$ be the first item with an infeasible value set by the algorithm and $T_1$ be the set of items whose (feasible) values $q^{i_0j_0}_{j}$ have been defined at that moment. Note that $j_1\notin T_1$. Next let $i_1=\arg\min_{i\in N(j_1)}q^{ij_1}$ and consider defining vector
$\mathbf{q}^{i_1j_1}$ in the algorithm. Similarly, if $\mathbf{q}^{i_1j_1} = \infty$, let $j_2$ be the first item with an infeasible value set by the algorithm and $T_2$ be the set of items whose (feasible) values $q^{i_1j_1}_{j}$ have been defined at that moment. Note that $j_1\in T_2$ and $j_2\notin T_2$. Similar to the argument in the first part of the proof, since $q^{i_1j_1}_{j_1} = q^{i_1j_1} < q^{i_0j_0}_{j_1}$, we can show that for any $j\in T_1$, $q^{i_1j_1}_{j}\le q^{i_0j_0}_{j}$; thus $T_1\subset T_2$. Next let $i_2=\arg\min_{i\in N(j_2)}q^{ij_2}$ and consider defining vector $\mathbf{q}^{i_2j_2}$ in the algorithm and the process continues: for any $T_k$ associated with vector $\mathbf{q}^{i_k,j_k}$, value $q^{i_kj_k}_{j}$ is feasible for any $j\in T_k$. In the process, we always increase the number of items with feasible values. Therefore eventually we will reach to a vector such that all items have feasible values. Hence, in the minimum price vector $\mathbf{q}$, all prices are feasible, i.e., $q_j\le q^{ij}$ for any $(i,j)\in C$.
\end{proof}

The following claim gives an equivalent way to view the subroutine \increaseprice: instead of increasing all prices from $\mathbf{p}$ to $\mathbf{q}$ directly, it can be decomposed into a ``continuous" process where all prices are increased continuously and simultaneously. That is, increase from $\mathbf{p}$ to $\mathbf{r}$, and then from $\mathbf{r}$ to $\mathbf{q}$
(the process can be divided further into arbitrarily small amount of increment). This idea will be helpful for our analysis in the following sections.

\begin{prop}\label{prop-price-increase}
Assume that subroutine \increaseprice\ increases prices of items $N(S)$ from $\mathbf{p}$ to $\mathbf{q}$, where $\mathbf{p}<\mathbf{q}$. For each item $j\in N(S)$, let $min_{j}$ be any value satisfying $p_{j}< min_{j}<q_{j}$. If we add an extra condition in Step~(3.b) of the subroutine by requiring that $q^{ij}\le min_{j}$ for every $(i,j)\in C$, then the subroutine will outputs a minimum price vector (denoted by $\mathbf{r}$) that satisfies $\mathbf{p}<\mathbf{r}<\mathbf{q}$, and all properties regarding $\mathbf{q}$ in Theorem~\ref{theorem-price-increase} still hold for $\mathbf{r}$. Further, there is $j_0\in N(S)$ such that $r_{j_0}=min_{j_0}$.
\end{prop}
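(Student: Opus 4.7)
The plan is to view \increaseprice\ as computing the price vector at which the earliest ``blocking event'' along transitive paths within the connected component $C$ fires, where the three events encoded by the cap $q^{ij}=\min\{v_{ij},b_{ij},u^{-1}_{ij}(u^i_{\max})\}$ correspond to (a) a value becoming tight, (b) a budget becoming tight, or (c) an item outside $N(S)$ entering some buyer's demand set. Imposing the additional requirement $q^{ij}\le min_j$ simply introduces a fourth class of pseudo-events --- the price of $j$ reaching $min_j$ --- and the modified subroutine stops at the first event under this enlarged family. Since $p_j<min_j<q_j$ for every $j\in N(S)$, these pseudo-events all fall strictly between the starting configuration and where the first of the original three events would trigger, forcing the modified run to terminate strictly earlier and ensuring that at least one pseudo-event is tight at termination.

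I would first verify that the three conclusions of Theorem~\ref{theorem-price-increase} survive the modification. The proofs of Lemmas~\ref{lemma-price-increase1}--\ref{lemma-price-increase3} treat $q^{ij}$ only as an upper bound used to seed and cap the path-propagation in Step~(3.c), so the well-definedness of each $\mathbf{q}^{i_0j_0}$ (via consistency ensuring path-choice invariance), the pairwise comparability of these vectors (via strict monotonicity), the iterative ``strictly grow the feasible set'' argument guaranteeing some $\mathbf{q}^{i_0j_0}\neq\infty$, and the envy-freeness of buyers in $S$ at the output (via transitivity plus consistency) all carry over verbatim under the tightened cap. To establish $\mathbf{p}<\mathbf{r}$, observe that for every $(i_0,j_0)\in C$ each of the four terms in the modified cap strictly exceeds $p_{j_0}$: the three original terms because $(i_0,j_0)\in G^+(\mathbf{p})$ yields $u_{i_0j_0}(p_{j_0})>u^{i_0}_{\max}\ge 0$, and the fourth by the hypothesis $min_{j_0}>p_{j_0}$; propagating the strict inequality $r_{j_0}>p_{j_0}$ at the seed coordinate through every transitive path by strict monotonicity then delivers $r_j>p_j$ for all $j\in N(S)$.

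To show $\mathbf{r}<\mathbf{q}$ and exhibit $j_0$ with $r_{j_0}=min_{j_0}$, I would pick $(i^*,j^*)\in C$ realizing the chain-minimum from Lemma~\ref{lemma-price-increase3} in the original run, so that $\mathbf{q}=\mathbf{q}^{i^*j^*}$ and hence $q_{j^*}$ equals the original Step~(3.b) cap at $(i^*,j^*)$. In the modified subroutine this cap drops to $\le min_{j^*}<q_{j^*}$, and running the propagation argument of Lemma~\ref{lemma-price-increase3} with the direction of strict inequality reversed --- at each path step, if the modified seed value is strictly below the original at $j_k$, then strict monotonicity of $u_{i_{k+1}j_k}$ followed by strict monotonicity of $u_{i_{k+1}j_{k+1}}^{-1}$ gives the same strict inequality at $j_{k+1}$ --- forces the modified propagated vector at $(i^*,j^*)$ to lie strictly below $\mathbf{q}^{i^*j^*}$ coordinate-wise, so $r_j<q_j$ for every $j$. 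Finally, let $(i^\dagger,j^\dagger)$ attain the modified chain-minimum, so $r_{j^\dagger}$ equals the modified Step~(3.b) cap at $(i^\dagger,j^\dagger)$, which is the minimum of four terms; if this minimum were achieved by one of the three original terms, then $r_{j^\dagger}$ would equal the original cap $q^{i^\dagger j^\dagger}\ge q_{j^\dagger}$, contradicting $\mathbf{r}<\mathbf{q}$. Thus the modified minimum must be delivered by the fourth term, giving $r_{j^\dagger}=min_{j^\dagger}$, and we may take $j_0=j^\dagger$.

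The main obstacle lies in propagating the strict inequalities $r_j>p_j$ and $r_j<q_j$ to every coordinate of $C$ (not merely along a single path): one must carefully invoke the \emph{strict} (rather than merely weak) monotonicity of the utility functions at every step, and lean on the consistency property to ensure that the resulting strict inequality is independent of which transitive path in $C$ is used to define $\mathbf{q}^{i_0j_0}$, exactly parallel to the path-independence argument already invoked for comparability in Lemma~\ref{lemma-price-increase3}.
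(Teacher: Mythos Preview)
Your proposal is correct and follows essentially the same approach as the paper: treat the extra constraint $q^{ij}\le min_j$ as a fourth upper bound, observe that the proofs of Lemmas~\ref{lemma-price-increase1}--\ref{lemma-price-increase3} go through verbatim under the tightened cap, and then establish the strict inequalities $\mathbf{p}<\mathbf{r}<\mathbf{q}$ and the tightness $r_{j_0}=min_{j_0}$ by propagation along transitive paths. The one minor difference is that the paper gets $r_{j_0}=min_{j_0}$ more directly---it notes that Lemma~\ref{lemma-price-increase3} already gives $min_j<q_j\le q^{ij}$ for \emph{every} edge $(i,j)\in C$, so the modified Step~(3.b) cap is always $min_j$, whence whichever edge $(i_0,j_0)$ realizes the chain-minimum immediately has $r_{j_0}=min_{j_0}$---whereas you first establish $\mathbf{r}<\mathbf{q}$ and then back out tightness by contradiction.
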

\begin{proof}
The introduction of $min_j$ for each item $j$ gives another upper bound on $q^{ij}$ for any $(i,j)\in C$. The subroutine and analysis of the above Lemma~\ref{lemma-price-increase1},~\ref{lemma-price-increase2} and~\ref{lemma-price-increase3} remain the same (except whenever we talk about the upper bound of $q^{ij}$, say the definition of feasibility in the proof of Lemma~\ref{lemma-price-increase3}, this new bound $min_j$ should be included as well), and thus all properties regarding $\mathbf{q}$ in Theorem~\ref{theorem-price-increase} still hold for $\mathbf{r}$.

For the new implementation with the extra condition $q^{ij}\le min_{j}$, we use symbol `r' instead of `q' to denote all computed prices.
Consider any edge $(i,j)\in C$ and the computation of $\mathbf{r}^{ij}$, i.e., the implementation of the algorithm on edge $(i,j)$. Since $min_{j}<q_{j}\le q^{ij} = \min\{v_{ij}, b_{ij}, u^{-1}_{ij}(u^{i}_{\max})\}$, where the second inequality follows from Lemma~\ref{lemma-price-increase3}, in Step~(3.b) of the new implementation, we have $r^{ij}=min_{j}$. Therefore,
if $\mathbf{r}^{ij}\neq \infty$, we have $r^{ij}_{j}=r^{ij}=min_{j}$. Since $\mathbf{r}\neq \infty$ by Lemma~\ref{lemma-price-increase3}, the minimum price vector $\mathbf{r}$ is obtained at some $\mathbf{r}^{i_0j_0}\neq \infty$ where $(i_0,j_0)\in C$; thus $r_{j_0}=r^{i_0j_0}_{j_0}=min_{j_0}$. This implies that $\mathbf{r}<\mathbf{q}$.
Since $\mathbf{p}<\mathbf{q}$ and $p_{j}< min_{j}<q_{j}$ for any $j\in N(S)$, we have $\mathbf{p}<\mathbf{r}$ by the consistency property. Therefore, $\mathbf{p}<\mathbf{r}<\mathbf{q}$.
\end{proof}

\section{Analysis of \algmin}\label{section-min-analysis}

In this section, we will prove all claims in Section~\ref{section-alg-main}, which gives the proof of the main Theorem~\ref{theorem-main}. We will first prove Proposition~\ref{prop-no-critical} and~\ref{prop-step-8b}.


\medskip \noindent \textbf{Proposition~\ref{prop-no-critical}.} The graph $G^*$ in Step (6) of the algorithm has no non-empty critical set. Thus, there exists a maximum matching in $H$ in which every buyer in $U^+$ is matched.

\begin{proof}
Consider the graph $G^*$ in Step~(6). For any $i\in U$ with $N^*(i)\neq \emptyset$ (recall that $N^*(S)$ is the neighbor set of $S$ in graph $G^*$), all items in $N^*(i)$ are either marked or unmarked.
Define
\[X_1=\{i\in U~|~N^*(i)\neq \emptyset \ \textup{and all items in $N^*(i)$ are marked}\}\]
and
\[X_2=\{i\in U~|~N^*(i)\neq \emptyset \ \textup{and all items in $N^*(i)$ are unmarked}\}\]
Assume that $S$ is the critical set of $G^*$. Let $S_1=S\cap X_1$ and $S_2=S\cap X_2$. Note that $S=S_1\cup S_2$ and $N^*(S_1)\cap N^*(S_2)=\emptyset$.

We claim that $|S_2| \le |N^*(S_2)|$. Consider the last execution of the algorithm in the Step~(5) right before moving to Step~(6). There are two possibilities that the algorithm moves to Step~(6), the ``if"-condition in main statement of Step~(5) fails or the ``if"-condition in Step~(5.b) fails. We consider each of them separately.
\begin{itemize}
\item The ``if"-condition fails in the main statement of Step~(5). It implies that $G'$ has no non-empty critical set, and hence $|S_2|\le |N'(S_2)|$, where $N'(S_2)$ is the neighbor set of $S_2$ in graph $G'$. At the same time, Step~(5) will be skipped. Hence, all edges in $G'$ are in $G^*$, which implies that $|S_2| \le |N^*(S_2)|$.

\item The ``if"-condition holds in the main statement of Step~(5) and the ``if"-condition fails in Step~(5.b), i.e., the algorithm transits to Step~(6) after executing Step~(5.a). Let $S'$ be the critical set of $G'$ in that execution. Since all items in $N'(S')$ only get marked here, all edges incident to $S'$ are kept and all items in $N^*(S')$ are marked in $G^*$. Hence, $S_2\subseteq U^+\setminus S'$. Observe that $N^*(S_2)=N'(S_2) \setminus N'(S')$ (because for any $j\in N^*(S_2)$, $j$ is unmarked, and thus, $j\in N'(S_2)\setminus N'(S')$; on the other hand, for any $j\in N'(S_2) \setminus N'(S')$, $j$ is not marked by Step~(5), hence $j\in N^*(S_2)$). Therefore,
    \[|S'\cup S_2| - |N'(S'\cup S_2)| = |S'|+|S_2|-|N'(S')|-|N^*(S_2)| > |S'|-|N'(S')|\]
    where the last inequality is by assumption. This contradicts to the fact that $S'$ is the critical set of $G'$.
\end{itemize}

Hence, $|S_2| \le |N^*(S_2)|$, which implies that \[|S_1|-|N^*(S_1)| \ge |S_1|+|S_2|-|N^*(S_1)|-|N^*(S_2)| = |S|-|N^*(S)|\]
Thus, we must have $S_2=\emptyset$ by the minimality of the critical set $S$, i.e., all items in the neighborhood of $S$ in $G^*$ are marked.
By the rule of defining $G^*$ in Step~(6), all items in the neighborhood of $S$ in $G^+$ are marked as well.

Therefore, for any neighbor $j$ of $S$ in the last run of Step~(3), if $j$ is marked then $j$ is a neighbor of $S$ in $G^*$; if $j$ is unmarked then it must be converted to marked in Step~(5.a), otherwise some buyer in $S$ would have unmarked neighbor in $G^+$ and thus would only have unmarked neighbor in $G^*$, which contradicts to the fact that $S\subseteq X_1$. Therefore, all neighbors of $S$ in the last run of Step~(3) remain to be the neighbors of $S$ in $G^*$. Since there is no critical set when the algorithm gets out of Step~(3), we know that $|S|$ is less than or equal to the size of its neighborhood in $G^+$, thus $|S|\le |N^*(S)|$. This implies $S$ cannot be the critical set of $G^*$.
\end{proof}


\medskip \noindent \textbf{Proposition~\ref{prop-step-8b}.}
Let $\mathbf{p}$ be the price vector when defining graph $H$ in Step~(7). Consider graph $H$: let $T$ be the set of marked items and $S$ be the neighbor set of $T$. Then, for any $\epsilon>0$, there exists $q_j = p_j+\epsilon_j$ for all $j \in T$, where $0<\epsilon_j<\epsilon$ can be arbitrarily small, and $q_j=p_j$ for $j\notin T$, such that $(i,j) \in G^+(\mathbf{q})$ if and only if $(i,j) \in G^+(\mathbf{p})$ for any $i\in S$ and $j\in T$.

\begin{proof}
By the construction of graph $G^*$ and $H$ in Step~(6, 7) of the algorithm, for any buyer $i \in S$ its neighborhood $N^*(i)$ is either all marked or all unmarked. Hence, $N(S)=T$. Note that all new added edges in Step~(7) only connect to unmarked items, i.e., not incident to $T$.
Consider any $i\in S$ and items $j,j'$, where $j\in N(i)$ and $j'\notin N(i)$, we have $u_{ij}(p_j) > u_{ij'}(p_{j'})$. Since all utility functions are continuously decreasing, there is $\delta_j$, $0<\delta_j<\epsilon$, such that $u_{ij}(p_j+\epsilon_j) > u_{ij'}(p_{j'})$ for all $0<\epsilon_j\le \delta_j$ (note that the inequality is guaranteed due to $p_j<\min\{v_{ij},b_{ij}\}$ since $j\in T$ is marked). Actually, we can pick $\delta_j$ to be sufficiently small so that the above inequality holds for any $i\in N(j)$ and $j'\notin N(i)$. Now we can apply Proposition~\ref{prop-price-increase} to all items in $T$ (i.e., set up an upper bound $\delta_j$ to increase), from which we can increase all $p_j$ for $j\in T$ by an arbitrarily small amount and the demand graph structure remains the same.
\end{proof}

Next we will prove Lemma~\ref{lemma-minj},~\ref{lemma-min-eq},~\ref{lemma-min-exist} and~\ref{lemma-polytime}.


\medskip \noindent \textbf{Lemma~\ref{lemma-minj}.} Let $min_j$ be the minimum equilibrium price of item $j$ (if an equilibrium exists), and let $\mathbf{p}$ be the price vector at the end of any stage. Then, $p_j\le min_j$ for any $j$ for every stage in the algorithm. Further, if $j$ is marked, then $p_j<min_j$.

\begin{proof}
We will prove the claim by induction on stages. At the beginning all prices are 0 and the claim follows obviously. Consider any stage in the algorithm. Assume that the claim holds at the beginning of the stage, and we will show that the claim still holds at the end of the stage. Let $C=S\cup T$ be the connected component of the critical set specified by Step~(3.a) in that stage, i.e., implemented by subroutine \increaseprice, where $S$ is the set of buyers and $T=N_C(S)$ is its neighbor set, where $N_C(S)$ is the neighbor set of $S$ in $C$. Denote by $\mathbf{p}=(p_j)_{j\in T}$ the price vector of items in $T$ at the end of the stage by \algmin.

We first prove that $p_j\le min_j$ for any item $j\in T$. Assume otherwise that there is $j\in T$ such that $p_j>min_j$. Consider price vector $\mathbf{q}=(q_j)_{j\in T}$ returned by subroutine \increaseprice\, with the initial setup equal to the price vector at the beginning of the stage, plus an extra condition $q^{i_0j_0}\le min_{j_0}$ in Step~(3.b) of the subroutine for all edges $(i_0,j_0)$. By applying Proposition~\ref{prop-price-increase}, we know that such minimum price vector $\mathbf{q}$ exists, and its prices are guaranteed to be upper bounded by the minimum equilibrium price vector, i.e., $q_j\le min_j$ for any $j\in T$. Further, there is $j_0\in T$ such that $q_{j_0}=min_{j_0}$, and $\mathbf{q}<\mathbf{p}$ (note that this $\mathbf{p}$ is not the initial price vector when we run \increaseprice, but the final price vector of the considered stage). Let $D$ denote the demand graph of buyers $S$ at price vector $\mathbf{q}$, and $N_D(i)$ denote the neighbor set of any $i\in S$ in $D$. Note that $N_D(i)\subseteq T$, in particular, $N_D(S)\subseteq T$; and by Lemma~\ref{lemma-price-increase2}, $C\subseteq D$, i.e., $C$ is a subgraph of $D$.

Let $(\mathbf{x}^*,\mathbf{p}^*)$ be an equilibrium with $p^*_{j_0}=min_{j_0}$ and $M^*$ be the corresponding matching. By the above argument, we have $q_j\le min_j\le p^*_j$ for any $j\in T$, and $q_{j_0}=min_{j_0}=p^*_{j_0}$. Hence,
\[u_{ij}(q_j) \ge u_{ij}(p^*_j), \ \forall\ i\in S, j\in T\]
Let $Z\subseteq S$ be a subset of buyers to which there is an alternating path from $j_0$ in $D$ in terms of matching $M^*$ (i.e., edges are not in $M^*$ and in $M^*$ alternatively).
Note that $Z\neq \emptyset$ since for any item $j\in T$, it must have at least two neighbors in $S$ in $C$ (otherwise, we can find a smaller critical set).

We claim that for each $i\in Z$, $x^*_i\in N_D(i)$ (i.e., $i$ wins an item in $N_D(i)$ in the equilibrium $(\mathbf{x}^*,\mathbf{p}^*)$). This can be shown by induction on the distance between $i$ and $j_0$ in the alternating path. If the distance is one, i.e., $j_0\in N_D(i)$, since $q_{j_0}<p_{j_0}$, $i$ can get a positive utility from $j_0$ at price vector $\mathbf{q}$. Since $q_{j_0}=p^*_{j_0}$, $i$ must be a winner in the equilibrium and its utility satisfies
\[u_{ix^*_i}(p^*_{x^*_i}) \ge u_{ij_0}(p^*_{j_0}) = u_{ij_0}(q_{j_0})\]
Hence,
\[u_{ix^*_i}(q_{x^*_i})\ge u_{ix^*_i}(p^*_{x^*_i})\ge u_{ij_0}(q_{j_0})\]
That is, the utility that $i$ can get from $x^*_i$ is at least as large as its maximal utility in $D$.
Hence, $x^*_i\in N_D(i)$ and all inequalities in the above are tight. In particular, this implies that $p^*_{x^*_i}=q_{x^*_i}$ as the utility function is strictly decreasing. If the distance between $i$ and $j_0$ is 3 via an alternating path, say $(i,j_1,i_0,j_0)$ where $i_0$ wins item $j_1$ in the equilibrium, then what we just showed implies that $p^*_{j_1}=q_{j_1}$. Since $j_1\in N_D(i)$, the situation of $j_1$ for $i$ is the same that of $j_0$ for $i_0$. Hence, we can apply the same argument to conclude that $x^*_i\in N_D(i)$ and $p^*_{x^*_i}=q_{x^*_i}$. Therefore, we can show inductively that the claim holds for all buyers in $Z$ with distance $3,5,7,\ldots,$ to $j_0$.

Hence, in the equilibrium $(\mathbf{x}^*,\mathbf{p}^*)$, each $i\in Z$ wins item $x^*_i\in N_D(i)$. Therefore, $\{(i,x^*_i)~|~i\in Z\}$ defines a matching. Let $Y=\{x^*_i~|~i\in Z\}$. Since $|S|>|T|$ and $Y\subseteq N_D(S)\subseteq T$, clearly $S\setminus Z\neq \emptyset$. Note that for any $i\in S\setminus Z$, there is no edge between $i$ and $Y$ in $D$ (otherwise, there would be an alternating path between $j_0$ and $i$, and we would put $i$ into $Z$). That is, $N_D(S\setminus Z)\subseteq T\setminus Y$. Hence,
\begin{eqnarray*}
|S\setminus Z| - |N_C(S\setminus Z)| &\ge& |S\setminus Z| - |N_D(S\setminus Z)| \\
&\ge& |S\setminus Z| - |T\setminus Y| \\
&=& |S\setminus Z| + |Z| - |T\setminus Y| - |Y| \\
&=& |S|-|T|
\end{eqnarray*}
which contradicts to the fact that $S\cup T$ is a connected component of the critical set. Therefore, we have $p_j\le min_j$ for any $j\in V$, which completes the proof of the first part.

It remains to show that $p_j<min_j$ if $j$ is marked for any $j\in V$. If the price of $j$ is marked at the beginning of the considered stage and its price is not increased, by induction assumption, we still have $p_j<min_j$. Thus, it suffices to consider the case when Step~(3.f) of \algmin\ occurs, where all items in $T$ are set to be marked. For this case, we need to show that $p_j<min_j$ for any $j\in T$. Assume otherwise that there is $j_0\in T$ such that $p_{j_0}=min_{j_0}$. It can be seen that that for every $i\in S$, $i$ can get a positive utility from items in $N(i)$, where $N(i)$ is the neighbor set of $i$ in the demand graph at the end of the stage (as if $i$ gets utility 0 from any $j\in N(i)$, then $v_{ij}=p_j$ for every $j\in N(i)$ and Step~(3.e) will be executed).
This is exactly the property used in the above first part to draw a contradiction. Thus, by the same argument, we can show that $p_j<min_j$ for any $j\in T$.

For Step~(5.a) of the algorithm where items in the neighbor set of the critical set of $G'$ are set to be marked as well, its proof of $p_j<min_j$ is the same as above. This completes the proof of the lemma.
\end{proof}


\medskip \noindent \textbf{Lemma~\ref{lemma-min-eq}.} For any given instance of the problem, if \algmin\ outputs $(\mathbf{x},\mathbf{p})$, then it is a competitive equilibrium.

\begin{proof}
This follows by the rule of the algorithm and the construction of the demand graph $G^*$ in Step~(6) --- each $(i,j)\in E$ represents that $i$ gets its maximal utility from $j$ for the given price vector. Note that $\{i\in U~|~N^*(i)\neq \emptyset\} = U^+$, and any $i\in U^+$ is matched to an item in $N^*(i)$ by the algorithm. Note that all items in $N^*(i)$ are either marked or unmarked. If they are marked, the price of each $j\in N^*(i)$ is $p_j+$ by Step~(8.b). Since $u_{ij}(p_j)>0$ and the increment of all marked items from $p_j$ to $p_j+$ is at the same pattern rate, $i$ still prefers item $j$ with a positive utility. Further, if $N^*(i)=\emptyset$, i.e., $i$ is a simpleton vertex, then for any $j\in V$, we have either $v_{ij}\le p_j$ or $b_{ij} \le p_j$, and if $v_{ij} > p_j$ and $b_{ij}=p_j$, $j$ must be marked. That is, $i$ cannot get a positive utility anyway, no matter if $i$ is matched or not. Therefore, everyone is satisfied with the corresponding allocation given price vector $\mathbf{p}$. Moreover, since the matching in Step~(8) has size of $m$, all items are allocated, which implies that the market clearing condition is satisfied. Hence, $(\mathbf{x},\mathbf{p})$ is an equilibrium.
\end{proof}


\medskip \noindent \textbf{Lemma~\ref{lemma-min-exist}.} For any given instance of the problem, if a competitive equilibrium exists, then \algmin\ will output one.

\begin{proof}
Let $(\mathbf{x}^*,\mathbf{p}^*)$ be an equilibrium and $M^*$ be the corresponding matching. By adding dummy buyers as discussed in Section~\ref{section-pre}, all items are allocated. Let $\mathbf{p}=(p_j)_{j\in V}$ be the price vector of the algorithm \algmin\ when it moves to Step~(6) and consider the demand graph $G^*$ defined in Step~(6). By Lemma~\ref{lemma-minj}, we have $p_j\le p^*_j$ for any $j$, and $p_j<p^*_j$ if $j$ is marked.

Let $X=\{i\in U~|~N^*(i)\neq \emptyset\}$, where $N^*(i)$ is the neighborhood of $i$ in $G^*$. By Lemma~\ref{prop-no-critical}, we have $|X'|\le |N^*(X')|$ for any $X'\subseteq X$. In particular, $|X|\le |N^*(X)|$. Consider any $i\in X$. Note that $i$ obtains a positive maximal utility from items in $N^*(i)$ (otherwise, all edges incident to $i$ would be deleted by Step~(3.d)). We claim that $x^*_i\in N^*(i)$ for any $i\in X$, i.e., $i$ wins an item in $N^*(i)$ in the equilibrium $(\mathbf{x}^*,\mathbf{p}^*)$. Assume otherwise that there is $i_0\in X$ such that either $x^*_{i_0}=\emptyset$ or $x^*_{i_0}\notin N^*(i_0)$.  Let
\[Z=\{j\in N^*(X)~|~ j \ \textup{is allocated to a buyer in $U\setminus X$ in the equilibrium $(\mathbf{x}^*,\mathbf{p}^*)$}\}\]
Note that $Z\neq \emptyset$ by the assumption of $i_0$. For each $j\in Z$, let $T(j)\subseteq X$ be set of buyers to which there is an alternating path from $j$ in $G^*$ in terms of matching $M^*$ (i.e., edges are not in $M^*$ and in $M^*$ alternatively), restricted on buyers in $X$ (i.e., all edges in the alternating path are incident to buyers in $X$). By the assumption of $i_0$ and the fact that $|X'|\le |N^*(X')|$ for any $X'\subseteq X$, there must be an item in $Z$, denoted by $j_\ell\in Z$, such that $i_0\in T(j_\ell)$. Assume that the alternating path from $j_\ell$ to $i_0$ is \[(j_\ell,i_{\ell-1},j_{\ell-1},\ldots,i_1,j_1,i_0)\]
That is, $j_1\in N^*(i_0)$, $j_{k+1}\in N^*(i_k)$ and $x^*_{i_k}=j_k$, for $k=1,\ldots,\ell-1$. Suppose that $x^*_{i_\ell}=j_\ell$, i.e., buyer $i_\ell$ wins $j_\ell$ in the equilibrium. Note that $i_\ell\notin X$ by the assumption of $j_\ell\in Z$; thus we must have $v_{i_\ell j_\ell} \le p_{j_\ell}$ or $b_{i_\ell j_\ell}<p_{j_\ell}$. Since $p_{j_\ell}\le p^*_{j_\ell}$, we must have $p^*_{j_\ell}=p_{j_\ell}=v_{i_\ell j_\ell}$ to guarantee that $i_\ell$ obtains a non-negative utility in the equilibrium. Further, note that $i_{\ell-1}$ obtains his maximal positive utility from $j_{\ell}$ in $G^*$, (but $j_\ell$ is allocated to $i_\ell$ in the equilibrium at price $p^*_{j_\ell}=p_{j_\ell}$,) we must have $j_{\ell-1}\in N^*(i_{\ell-1})$ and $p^*_{j_{\ell-1}}=p_{j_{\ell-1}}$. Continue with this argument through the above path and we will conclude that $p^*_{j_1}=p_{j_1}$ eventually. This implies that $i_0$ strictly prefers $j_1\in N^*(i_0)$ to all other items not in $N^*(i_0)$ in the equilibrium, but $i_0$ does not win any item in $N^*(i_0)$, a contradiction.

Hence, for any $i\in X$, $x^*_i\in N^*(i)$. This implies that the matching defined by $\{(i,x^*_i)~|~i\in X\}$ exists in graph $H \supseteq G^*$ (defined in Step~(7) of the algorithm). Let $Y=\{x^*_i~|~i\in X\}$. For any item $j\notin Y$, it must be allocated to a buyer $i\notin X$ in the equilibrium $(\mathbf{x}^*,\mathbf{p}^*)$. If $v_{ij}> p_j$, since $N^*(i)=\emptyset$, we much have $b_{ij}<p^*_j$, which contradicts to the equilibrium condition. Hence, we must have $v_{ij} = p_j = p^*_j$ and $b_{ij}\ge p_j$. Thus, by Step~(7) of the algorithm, edge $(i,j)$ will be added to $H$. Therefore, matching $\{(i,x^*_i)~|~i\in U\}$ exists in graph $H$, which implies that the algorithm will return an allocation and price vector. By Lemma~\ref{lemma-min-eq}, it must be an equilibrium.
\end{proof}


\medskip \noindent \textbf{Lemma~\ref{lemma-polytime}.} The algorithm \algmin\ runs in strongly polynomial time.

\begin{proof}
In each phase, since there are at most $mn$ possible edges in the bipartite graph and each stage will introduce at least one new edge, we will have at most $mn$ stages. Further, we will remove at least one edge between the critical set and its neighbor set for each phase. Since all prices keep increasing, all deleted edges by Steps~(3.e), (3.f) and (5,b) will never be added back. Hence, there are at most $mn$ phases, which implies that there are at most $m^2n^2$ stages in total. In each stage, the algorithm calls a subroutine \increaseprice\ to compute the price level to which one of the conditions (3.b.$\alpha$), (3.b.$\beta$) or (3.b.$\gamma$) is satisfied, and increase prices to that level. Note that by the assumptions of utility functions, subroutine \increaseprice\ can be implemented in strongly polynomial time. Further, by the following characterization shown by Irving~\cite{irving}, which says that given a maximum matching of $G^+=(U^+,V^+;E)$, the critical set of $U^+$ consists of unmatched vertices together with those reachable from them via an alternating path, finding the critical set is equivalent to finding a maximum matching. This gives us the result.
\end{proof}

Finally, we note that the marked items in the final output of the algorithm \algmin\ has an interesting connection to weakly stable matching, as the following claim shows.

\begin{prop}\label{prop-alg-weakly-stable}
In Step~(8.b) of the algorithm, if the price of every marked slot $j$ is not set to be $p_j+$, then the final output $(\mathbf{x},\mathbf{p})$ is not a competitive equilibrium but a weakly stable matching.
\end{prop}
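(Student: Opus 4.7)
The plan is to prove the two parts of the statement separately: (a) without the ``$+$'' adjustment, the output $(\mathbf{x},\mathbf{p})$ is not a competitive equilibrium, and (b) it is nevertheless weakly stable. Part (a) is immediate from the machinery we have already established. By Lemma~\ref{lemma-minj}, for any marked item $j$ we have $p_j < min_j$, where $min_j$ is the minimum equilibrium price of $j$. If $(\mathbf{x},\mathbf{p})$ were a competitive equilibrium with $p_j$ as the literal price of the marked item $j$, then $\mathbf{p}$ would be an equilibrium price vector with $p_j < min_j$, contradicting the definition of $min_j$. (Equivalently, some buyer in $U$ not matched to $j$ would strictly prefer $j$ at the price $p_j$, precisely the reason $j$ was marked.)

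For part (b), I first check individual rationality: buyers in $U^+$ are matched by the algorithm to items in $N^*(i)\subseteq N^+(i)$, hence obtain strictly positive utility; buyers in $U\setminus U^+$ either remain unmatched (utility $0$) or are matched through an edge added in Step~(7), where $u_{ij}(p_j)=0$ by construction. Since prices are non-negative, $(\mathbf{x},\mathbf{p})$ is individually rational. Next, I will show that for every pair $(i,j)$ with $j\neq x_i$ and every $p'_j>p_j$ one has $u_{ij}(p'_j)\le u_{ix_i}(p_{x_i})$, which rules out any weakly blocking pair.

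The case analysis splits on whether $i\in U^+$ or not. If $i\in U^+$, then $x_i$ lies in $i$'s demand set at the final prices, so $u_{ij}(p_j)\le u_{ix_i}(p_{x_i})$ for every item $j$ (here I rely on the fact that at the end of the algorithm $G^+(\mathbf{p})$ genuinely captures demand sets, using the marking/unmarking logic and the fact that prices of marked items are not changed in this modified output). For any $p'_j>p_j$, monotonicity on $[0,b_{ij}]$ yields $u_{ij}(p'_j)<u_{ij}(p_j)\le u_{ix_i}(p_{x_i})$ when $p_j<b_{ij}$, while if $p_j\ge b_{ij}$ then $u_{ij}(p'_j)=-1<0\le u_{ix_i}(p_{x_i})$; in either sub-case, strict blocking fails. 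If instead $i\in U\setminus U^+$, then $i$ is isolated in $G^+(\mathbf{p})$, so $u_{ij}(p_j)\le 0$ for every $j$, and $u_{ix_i}(p_{x_i})$ equals $0$ (if matched through Step~(7)) or is defined to be $0$ (if unmatched). The same monotonicity/budget dichotomy gives $u_{ij}(p'_j)<0=u_{ix_i}(p_{x_i})$, again forbidding strict improvement.

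I expect the main obstacle to lie in handling buyers $i\in U^+$ whose demand set at $\mathbf{p}$ contains both marked and unmarked items, since this is exactly where the output fails envy-freeness yet must still satisfy the weaker no-blocking-pair condition. The crucial observation is that a marked item $j$ in $N^+(i)$ with $j\neq x_i$ does not produce a weak blocking pair because raising $p_j$ by any positive amount either strictly decreases $u_{ij}$ below $u_{ix_i}(p_{x_i})$ (by strict monotonicity on $[0,b_{ij}]$, noting $u_{ij}(p_j)=u_{ix_i}(p_{x_i})$) or, if $p_j=b_{ij}$, drops $u_{ij}$ to $-1$ via the budget discontinuity. This is the same discontinuity-exploiting argument that underlies the distinction between weak and strong stability in Theorem~\ref{theorem-stability}, and it is precisely what prevents the output from being strongly stable (i.e., a competitive equilibrium) while still allowing weak stability.
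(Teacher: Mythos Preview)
Your argument for part (a) matches the paper's. The gap is in part (b), specifically in your blanket claim that for $i\in U^+$ one has $u_{ij}(p_j)\le u_{ix_i}(p_{x_i})$ for every item $j$, justified by the parenthetical ``$G^+(\mathbf{p})$ genuinely captures demand sets.'' It does not. The graph $G^+$ the algorithm carries is \emph{not} the demand graph freshly built from $\mathbf{p}$: edges $(i,j)$ are permanently deleted in Step~(3.f) (and~(5.b)) whenever $p_j$ hits $b_{ij}$, even though at that moment $u_{ij}(p_j)>0$ and is in fact $i$'s maximal utility. If $p_j$ is never raised again (so $j$ stays marked with $p_j=b_{ij}$), then at the final prices $u_{ij}(p_j)$ can be strictly larger than $u_{ix_i}(p_{x_i})$, where $x_i$ is whatever lower-utility item $i$ got reconnected to after the deletion. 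The same issue hits your treatment of $i\notin U^+$: such a buyer can be isolated in the algorithm's $G^+$ solely because all of its edges were budget-deleted, yet still satisfy $u_{ij}(p_j)>0$ for some $j$ with $p_j=b_{ij}$.

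This is exactly the case the paper's proof singles out: it explicitly allows $u_{ix_i}(p_{x_i})<u_{ij}(p_j)$ and observes that whenever this happens one necessarily has $b_{ij}=p_j$ (since the only way $j$ can be outside the algorithm's $G^+$-neighborhood of $i$ while giving $i$ higher utility is via a budget deletion whose price was never subsequently raised). Then any $p'_j>p_j=b_{ij}$ gives $u_{ij}(p'_j)=-1$, so $(i,j)$ is not a weakly blocking pair. Your final paragraph comes close to this reasoning but applies it only to marked items \emph{inside} $N^+(i)$, where in fact $u_{ij}(p_j)=u_{ix_i}(p_{x_i})$ and there is nothing to worry about; the real obstacle is marked items \emph{outside} $N^+(i)$ that $i$ strictly prefers. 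Once you redirect your budget-discontinuity argument to that case, the proof goes through.
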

\begin{proof}
Due to Lemma~\ref{lemma-minj}, for any marketed slot $j$, its price $p_j$ is strictly smaller than its price in all equilibria. Hence, the output $(\mathbf{x},\mathbf{p})$ cannot be a competitive equilibrium. To see that $(\mathbf{x},\mathbf{p})$ is a weakly stable matching, consider any buyer $i$ and item $j\neq x_i$. Since matching $\mathbf{x}$ is determined according to the demand graph $H$ defined in Step~(7) of the algorithm, we have either $u_{ix_i}(p_{x_i})\ge u_{ij}(p_{j})$ (if $x_i=\emptyset$, denote $u_{ix_i}(p_{x_i})=0$), or $u_{ix_i}(p_{x_i}) < u_{ij}(p_{j})$ but $b_{ij}=p_j$ (for this case, $j$ must be marked). For the latter case, the seller for item $j$ cannot obtain more revenue from $i$ due to his tight budget constraint. Hence, $(i,j)$ cannot be a weakly blocking pair, which implies that the output $(\mathbf{x},\mathbf{p})$ is a weakly stable matching.
\end{proof}

\section{Minimum Equilibrium Mechanism: Proof of Theorem~\ref{theorem-nash}}\label{appendix-nash}

\begin{proof}
For any fixed truthful bids of utility functions $u_{ij}(\cdot)$ of all other buyers, $i\neq i_0$, we will analyze the strategic behavior of
buyer $i_0$. Assume that the minimum equilibrium is $(\mathbf{x},\mathbf{p})$ when $i_0$ bids $u_{i_0j}(\cdot)$ truthfully
(where the minimum price is either an exact value $p_j$ or notation $p_j+$).
Consider any other possible utility functions $u'_{i_0j}(\cdot)$ that $i_0$ bids. Assume that the minimum equilibrium is
$(\mathbf{x'},\mathbf{p'})$ when $i_0$ changes his bid to $u'_{i_0j}(\cdot)$. It suffices to show that the utility of $i_0$ in
equilibrium $(\mathbf{x},\mathbf{p})$ is larger than or equal to his
utility in $(\mathbf{x'},\mathbf{p'})$. Assume without loss of
generality that $x'_{i_0}=j_0$, i.e., $i_0$ wins item $j_0$ in
equilibrium $(\mathbf{x'},\mathbf{p'})$ (if $i_0$ does not win any
item, then certainly he cannot obtain more utility).

Since $(\mathbf{x},\mathbf{p})$ is an equilibrium, the utility of
every buyer $i$ is maximized at the corresponding allocation $x_i$
given price vector $\mathbf{p}$. That is, if $x_i = j$, then $u_{ij}(p_j)\ge 0$ and for any
other item $j'\neq j$, either $b_{ij'}<p_{j'}$ or $u_{ij}(p_j)\ge
u_{ij'}(p_{j'})$; if $x_i = \emptyset$, then for any item $j$,
either $b_{ij}<p_{j}$ or $v_{ij}\le p_j$. Hence, when $i_0$ bids
$u'_{i_0j}(\cdot)$ untruthfully, to obtain a higher utility,
no matter whether $x_{i_0}=\emptyset$ or $x_{i_0}\neq \emptyset$, it
must be $p'_{j_0}<p_{j_0}$ (if $p_{j_0}$ is with ``+" notation, i.e., $p_{j_0}+$, we have $p'_{j_0}\le p_{j_0}<p_{j_0}+$).
For simplicity, for any buyer $i$, if
$x_i=\emptyset$, we denote $u_{ix_i}(p_{x_i})=0$.

Define
\[T=\{j\in V~|~p'_j < p_j\} \ \ \textup{and} \ \ S=\{i\in U~|~x'_i\in T\}\]
where $S$ is the subset of buyers who win items in $T$ in
equilibrium $(\mathbf{x'},\mathbf{p'})$. That is, $|S|=|T|$ and all
buyers in $S$ win all items in $T$ in $(\mathbf{x'},\mathbf{p'})$.
Note that since $j_0\in T$, $S,T\neq \emptyset$. In particular, this
implies $i_0\in S$. Further, we claim that all buyers in $S$ win all
items in $T$ in equilibrium $(\mathbf{x},\mathbf{p})$ as well.
Otherwise, there is $i\notin S$ and $j\in T$ such that $x_{i}=j$.
Since $u_{ij}(p'_{j}) > u_{ij}(p_{j}) \ge 0$, $i$ must win
an item in equilibrium $(\mathbf{x'},\mathbf{p'})$. Assume that
$x'_i=j'\notin T$. Thus, $b_{ij'}\ge p'_{j'}\ge p_{j'}$. Hence,
\[u_{ij'}(p'_{j'})\stackrel{(1)}{\ge} u_{ij}(p'_{j}) > u_{ij}(p_{j}) \stackrel{(2)}{\ge} u_{ij'}(p_{j'})\]
where (1) follows from equilibrium $(\mathbf{x'},\mathbf{p'})$ and
(2) follows from equilibrium $(\mathbf{x},\mathbf{p})$, a
contradiction to the fact that $p'_{j'}\ge p_{j'}$.

Given $S$ and $T$, define an output $(\mathbf{y},\mathbf{q})$ from
$(\mathbf{x},\mathbf{p})$ and $(\mathbf{x'},\mathbf{p'})$ as
follows:
\begin{itemize}
\item Let $y_i=x'_i$ if $i\in S$, and $y_i=x_i$ if $i\notin S$.

\item Let $q_j=p'_j$ if $j\in T$, and $q_j=p_j$ if $j\notin T$.
\end{itemize}
Since all buyers in $S$ win in all items in $T$ in both equilibria
$(\mathbf{x},\mathbf{p})$ and $(\mathbf{x'},\mathbf{p'})$, the
allocation defined by $\mathbf{y}$ is feasible. Note that for any
item $j$, $q_j\le p_j$ (and $q_j<p_j$ if $j\in T$). We next analyze the utility of every
buyer in $(\mathbf{y},\mathbf{q})$.
\begin{itemize}
\item For any buyer $i\notin S$, we have $x_i,x'_i\notin T$. By equilibrium $(\mathbf{x},\mathbf{p})$, $i$ weakly prefers $y_i$ to all other items in $V\setminus T$ in $(\mathbf{y},\mathbf{q})$. For any item $j\in T$, we have either $b_{ij}<p'_j=q_j$ or
    \[u_{iy_i}(q_{y_i}) = u_{ix_i}(p_{x_i}) \stackrel{(1)}{\ge} u_{ix'_i}(p_{x'_i}) \stackrel{(2)}{\ge} u_{ix'_i}(p'_{x'_i}) \stackrel{(3)}{\ge} u_{ij}(p'_{j}) = u_{ij}(q_{j}) \ \ \ (*)\]
    where (1) follows from equilibrium $(\mathbf{x},\mathbf{p})$, (2) follows from $p'_{x'_i}\ge p_{x'_i}$ as $x'_i\notin T$, and (3) follows from equilibrium $(\mathbf{x'},\mathbf{p'})$. Note that in $(*)$, if $x_i=\emptyset$ or $x'_i=\emptyset$, we can build the same series of inequalities and show that $u_{ij}(q_j)\le 0$. That is, $i$ (weakly) prefers his allocation $y_i$ to all items in $T$ in output $(\mathbf{y},\mathbf{q})$.\footnote{If there were no budget constraints, we could easily get a contradiction to the minimum equilibrium $(\mathbf{x},\mathbf{p})$ at this point --- construct a new equilibrium $(\mathbf{x},\mathbf{p}^*)$ from $(\mathbf{x},\mathbf{p})$ by reducing all prices $p_j$ for $j\in T$ a very small amount in a similar (converse) manner as Proposition~\ref{prop-step-8b} to keep the same demand preference. With budget constraints, however, $(\mathbf{x},\mathbf{p}^*)$ may not be an equilibrium: buyers in $S$ may strictly prefer other items in $T$ to $x_i$ since they become to have enough budgets to some more preferred items.}

\item For any buyer $i\in S$, $i\neq i_0$, we have $x_i,x'_i\in T$. By equilibrium $(\mathbf{x'},\mathbf{p'})$, $i$ weakly prefers $y_i$ to all other items in $T$ in $(\mathbf{y},\mathbf{q})$. For any item $j\notin T$, either $b_{ij}<p_j=q_j$ or
    \[u_{iy_i}(q_{y_i}) = u_{ix'_i}(p'_{x'_i}) \stackrel{(1)}{\ge} u_{ix_i}(p'_{x_i}) > u_{ix_i}(p_{x_i}) \stackrel{(2)}{\ge} u_{ij}(p_j) = u_{ij}(q_j) \ \ \ (\#)\]
    where $(1)$ follows from equilibrium $(\mathbf{x'},\mathbf{p'})$, and $(2)$ follows from equilibrium $(\mathbf{x},\mathbf{p})$. That is, $i$ strictly prefers his allocation $y_i$ to all items in $V\setminus T$ in output $(\mathbf{y},\mathbf{q})$.
\end{itemize}

It remains to consider the utility of $i_0$ in output $(\mathbf{y},\mathbf{q})$. Note that the equilibrium $(\mathbf{x'},\mathbf{p'})$ is computed in terms of the false bid $u'_{i_0j}(\cdot)$ from buyer $i_0$, whereas the true utility of $i_0$
should be computed in terms of $u_{i_0j}(\cdot)$ rather than $u'_{i_0j}(\cdot)$. Thus it
is possible that $i_0$ strictly prefers other items to $y_{j_0}$ in
$(\mathbf{y},\mathbf{q})$ in terms of his true utility functions $u_{i_0j}(\cdot)$. In the following we adjust allocations
and prices for items in $T$ in $(\mathbf{y},\mathbf{q})$ so as to
satisfy $i_0$ as well.

Let $G$ be the real demand bipartite graph (i.e., the utility of $i_0$ is in terms of $u_{i_0j}$) with respect to price vector
$\mathbf{q}$ restricted on the sub-instance given by $S$ and $T$
(i.e., ignore all buyers $U\setminus S$ and items $V\setminus T$).
Let $S'\subseteq S$ be the critical set of buyers of $G$ and
$T'=N(S')$, where $N(S')$ is the neighbor set of $S'$ of $G$. We recursively
increase all prices $q_j$ for $j\in T'$ continuously and
simultaneously according to subroutine \increaseprice. In the process of increasing prices,
the demand bipartite graph $G$, as well as the critical set $S'$ and
its neighbor set $T'$, will be updated dynamically with respect to
the current price vector $\mathbf{q}$. Note that when prices
increase, items $T\setminus T'$ might enter into the demand
set of $i\in S'$, and an edge $(i,j)$ where $i\in S'$ and $j\in T'$
might be broken due to budget constraint. The process stops when the
critical set of $G$ becomes $\emptyset$.

We claim that at the end of the process when $G$ has no
non-empty critical set, the price of every item $j\in T$ has $q_j < p_j$.
Otherwise, consider the first implementation of subroutine \increaseprice\
where there is $j\in T$ whose price $q_{j}$ is increased to be at least $p_{j}$
at the end of the implementation. By Proposition~\ref{prop-price-increase}, the implementation of \increaseprice\ can
be viewed as a process of increasing prices continuously. Consider the first moment when there is $j\in T$ whose
price $q_{j}$ is increased exactly to $p_{j}$ in the process; and consider the price vector $\mathbf{q}=(q_j)_{j\in T}$, demand graph $G$, critical set $S'$ and its neighbor
set $T'$ at that moment. Let $V'=\{j\in T~|~q_j=p_j\}$ and
$U'=N(V')$ be the neighbor set of $V'$. Since only prices of items
in $T'$ are increased, we have $V'\subseteq T'$. By the definition
of the critical set, we have $|U'|>|V'|$. Otherwise,
\[|S'|-|T'|=|S'\setminus U'| + |S'\cap U'| - |T'\setminus V'| - |V'| \le |S'\setminus U'| - |T'\setminus V'| \le |S'\setminus U'| - |N(S'\setminus U')|\]
where the last inequality follows from $N(S'\setminus U')\subseteq
T'\setminus V'$, a contradiction. Hence, there is $i\in U'$ such
that $i$ wins an item in $T\setminus V'$ in equilibrium
$(\mathbf{x},\mathbf{p})$, i.e., $x_i\in T\setminus V'$. Therefore,
for any $j\in N(i)\cap V'$, we have
\[u_{ij}(p_j)=u_{ij}(q_j) \stackrel{(1)}{\ge} u_{ix_i}(q_{x_i}) \stackrel{(2)}{>} u_{ix_i}(p_{x_i})\]
where $(1)$ follows from the demand set of $i$, and $(2)$ follows from the fact that $q_{x_i}<p_{x_i}$ when $x_i\in T\setminus V'$.
This contradicts to the fact that $(\mathbf{x},\mathbf{p})$ is an equilibrium.

At the end of the increment process when $G$ has no non-empty critical set, we know that for
any subset $U\subseteq S$, $|U|\le |N(U)|$. By Hall's theorem and the fact that $|S|=|T|$, there is a
perfect matching $M$ between $S$ and $T$. We define an output $(\mathbf{x}^*,\mathbf{p}^*)$ from
$(\mathbf{y},\mathbf{q})$, where $\mathbf{q}$ is the price vector at the end of the above increment process:
allocations of buyers in $S$ are defined
according to $M$ and allocations of buyers in $U\setminus S$ remain the same as
$\mathbf{y}$; and $\mathbf{p}^*=\mathbf{q}$. Similar to the
inequalities established in $(*)$ and $(\#)$ above, we can show that
all buyers get their utility-maximal allocation in
$(\mathbf{x}^*,\mathbf{p}^*)$. Hence, $(\mathbf{x}^*,\mathbf{p}^*)$
is an equilibrium. This contradicts to the fact that
$(\mathbf{x},\mathbf{p})$ is a minimum equilibrium.

Therefore, we have $p'_{j_0}\ge p_{j_0}$, i.e., buyer $i_0$ will
never obtain more utility from bidding untruthfully. This implies
that it is of best interest to bid truthfully for the minimum
equilibrium output.
\end{proof}

\end{document}